\newtheorem{theorem}{Theorem}
\newtheorem{lemma}{Lemma}	
\newtheorem{corollary}{Corollary}
\newtheorem{definition}{Definition}
\newtheorem{observation}{Observation}
\begin{document}
\title{Symmetry-protected privacy: beating the rate-distance linear bound over a noisy channel}

\author{Pei Zeng}
\affiliation{Center for Quantum Information, Institute for Interdisciplinary Information Sciences, Tsinghua University, Beijing 100084, China}
\author{Weijie Wu}
\affiliation{Center for Quantum Information, Institute for Interdisciplinary Information Sciences, Tsinghua University, Beijing 100084, China}
\author{Xiongfeng Ma}
\email{xma@tsinghua.edu.cn}
\affiliation{Center for Quantum Information, Institute for Interdisciplinary Information Sciences, Tsinghua University, Beijing 100084, China}

\begin{abstract}
There are two main factors limiting the performance of quantum key distribution --- channel transmission loss and noise. Previously, a linear bound was believed to put an upper limit on the rate-transmittance performance. Remarkably, the recently proposed twin-field and phase-matching quantum key distribution schemes have been proven to overcome the linear bound. In practice, due to the intractable phase fluctuation of optical signals in transmission, these schemes suffer from large error rates, which renders the experimental realization extremely challenging. Here, we close this gap by proving the security based on a different principle --- encoding symmetry. With the symmetry-based security proof, we can decouple the privacy from the channel disturbance, and eventually remove the limitation of secure key distribution on bit error rates. That is, the phase-matching scheme can yield positive key rates even with high bit error rates up to 50\%. In simulation, with typical experimental parameters, the key rate is able to break the linear bound with an error rate of 13\%. Meanwhile, we provide a finite-data size analysis for the scheme, which can break the bound with a reasonable data size of $10^{12}$. Encouraged by high loss- and error-tolerance, we expect the approach based on symmetry-protected privacy will provide a new insight to the security of quantum key distribution.
\end{abstract}

\maketitle

\section{Introduction}
Quantum key distribution (QKD) offers information-theoretically secure means to distribute private keys between distant parties by harnessing the laws of quantum mechanics \cite{Bennett1984Quantum,ekert1991quantum}. The commercialization and application of QKD raise requirements in both impregnable security and outstanding performance.

The security, as the cornerstone of QKD, has been proven theoretically at the end of last century \cite{Mayers2001Unconditional,Lo1999Unconditional,Shor2000Simple} on the protocol level, while rigorous definition \cite{Ben2005composable,Renner2005universally} and strict finite-size analysis \cite{renner2008security,Fung2010Practical} have been provided later. The security of QKD is based on the idea that information gain means disturbance. That is, any eavesdropper's attempt of learning the keys would inevitably introduce disturbance to the quantum states. To characterize the information leakage, the disturbance in the channel is monitored in real time. In practice, the physical devices used in practical implementations often deviate from the assumed theoretical models \cite{gottesman2004security}, resulting in various loopholes and corresponding attacks \cite{gisin2006trojan,Qi2007Time}. In 2012, measurement-device-independent quantum key distribution (MDI-QKD) has been presented \cite{Lo2012Measurement}, which removes the theoretical assumptions on measurement devices in security analysis and hence closes all the detection loopholes.

The performance of QKD, on the other hand, characterized by the key generation rate with respect to the communication distance, reflects its value in commercial cryptographic task. Under the circumstances that quantum repeaters \cite{Briegel1998Repeater,duan2001long,azuma2015all}, as the ultimate solution to extend quantum communication against losses, are currently infeasible, the linear key rate-transmittance bound \cite{Pirandola2017Fundamental} was widely believed to hold for all the point-to-point QKD schemes without repeaters. For the commonly used telecom fiber channel, the transmittance decreases exponentially with the transmission distance, which puts an upper limit on quantum transmission distance. Interestingly, the recent work of twin-field QKD open the new possibility of phase-encoding MDI-QKD protocol to break the linear key rate bound \cite{Lucamarini2018TF}.
A follow-up work, named phase-matching quantum key distribution (PM-QKD) has been proposed \cite{ma2018phase,Lin2018simple}, which has been rigorously shown to be able to beat the linear bound even with statistical fluctuations \cite{maeda2019repeaterless}. The twin-field-like MDI-QKD is currently a heated topic \cite{Tamaki2018information,Wang2018Sending,cui2019twin,Curty2018simple}.

In PM-QKD, only one basis is adopted for key generation and parameter estimation, which is distinct from the former BB84-type protocol while shares some similarities with the Bennett-1992 protocol \cite{Bennett1992Quantum}. Essentially, the PM-QKD protocol can be viewed as an MDI version of the Bennett-1992 protocol \cite{Ferenczi2013}. To understand the security of PM-QKD, we would resort to the non-orthogonality of the encoded state with $0/\pi$ encoding. However, the analysis based on non-orthogonality usually cannot tolerate high channel losses.

In this work, for a general PM-QKD model, we establish a connection between the encoding symmetry and privacy, which serves as a new viewpoint of QKD security other than the conventional basis complementarity \cite{koashi2009simple}. In this symmetry-based security proof, we first define symmetric states given certain encoding operations, then explore the realistic construction of symmetric states, and finally propose efficient methods to estimate the ratio of detection caused by symmetric states. For PM-QKD, the symmetric state can promise perfect privacy, i.e., with no information leakage. As a result, the amount of information leakage only depends on the state producing by the source, and is irrelevant of the channel condition. A similar phenomenon is also observed in the round-robin differential-phase-shifting protocol \cite{sasaki2014practical}. The symmetry-based security proof allows higher error tolerance compared with the original BB84 protocol. Furthermore, we complete finite-size analysis with an improved decoy-state method \cite{Hwang2003Decoy,Lo2005Decoy,Wang2005Decoy}.


\section{Encoding symmetry and perfect privacy}
To show the close relationship between encoding symmetry and privacy, we introduce an idealized scenario, named symmetric-encoding QKD. As is shown in Fig.~\ref{fig:ProSym}, during each run, Alice and Bob start with a pre-shared bipartite state $\rho_{AB}$, where the systems held by Alice and Bob are denoted as $A$ and $B$, respectively. They generate random bits $\kappa_a$ and $\kappa_b$ independently and apply $U(\kappa_{a(b)})\equiv U^{\kappa_{a(b)}}$ to their subsystem $A$ and $B$ separately, where $U^2 = I$. Then, the modulated state, denoted as $\rho'_{AB}(\kappa_a, \kappa_b)$, is sent to an untrusted party Eve, who measures the joint state, aiming to discriminate whether $\kappa_a = \kappa_b$ or $\kappa_a \neq \kappa_b$, and announces the detection result. In each round, Alice and Bob encode $2$-bit information, $\kappa_a, \kappa_b$, into the state $\rho_{AB}$. The encoded state $\rho'_{AB}(\kappa_a, \kappa_b)$ can be written as
\begin{equation}
\rho'_{AB}(\kappa_a, \kappa_b) = [U_A(\kappa_a)\otimes U_B(\kappa_b)]\rho_{AB}[U_A(\kappa_a)\otimes U_B(\kappa_b)]^\dagger.
\end{equation}
After many rounds, Alice and Bob generate random bit strings $K_a$ and $K_b$, respectively. With the assistance of Eve's announcement and classical error correction, Bob reconciles his bit string $K_b$ to $K_a$. They then perform privacy amplification on $K_a$ to extract a private key.

\begin{figure}[tbhp]
\centering
\includegraphics[width=6.5cm]{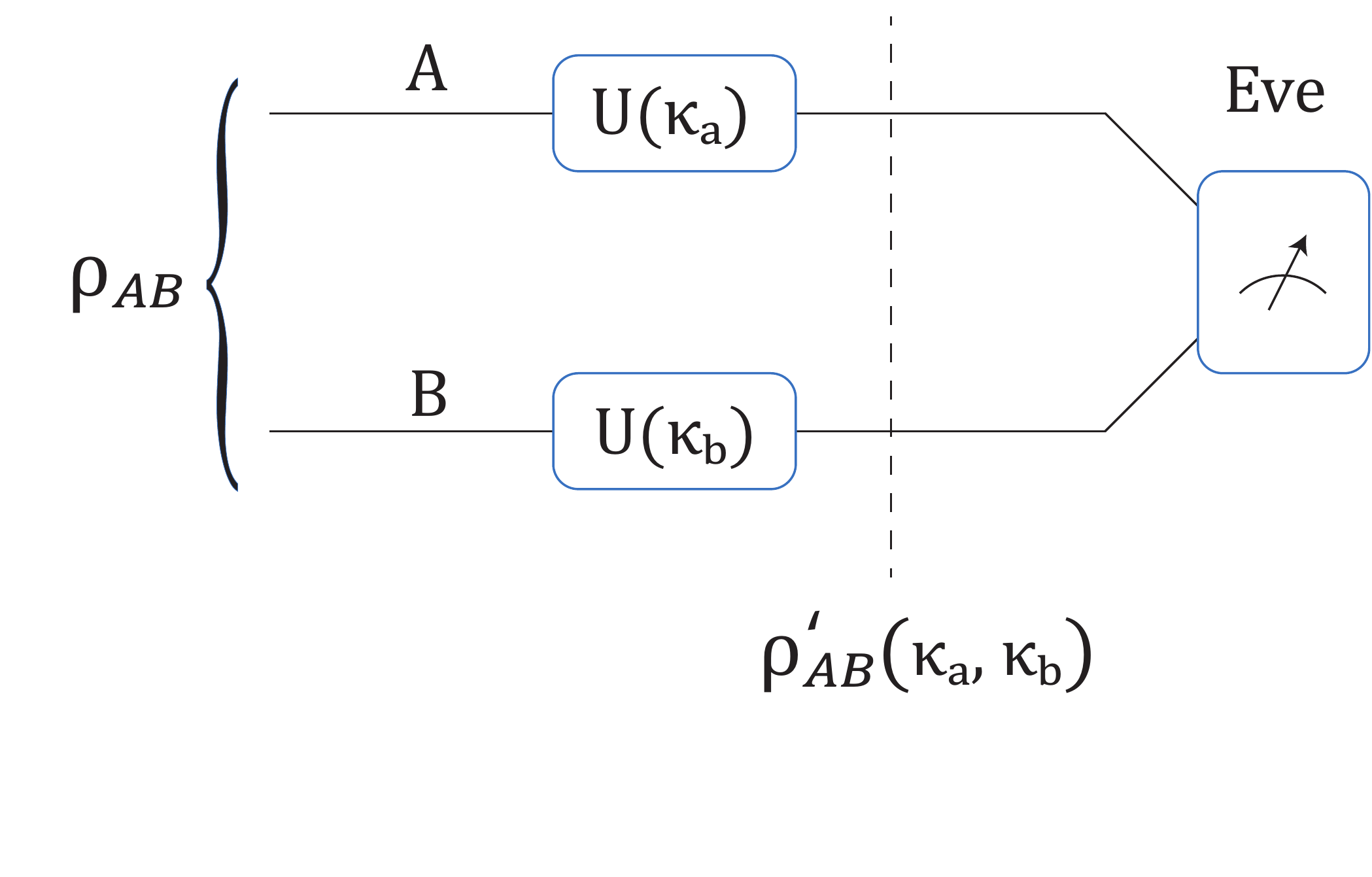}
\caption{Schematic diagram of the symmetric-encoding protocol.}
\label{fig:ProSym}
\end{figure}

Now, let us focus on a symmetric case, where the pre-shared state $\rho_{AB}$ keeps invariant under the transformation of encoding operation $U_A \otimes U_B$. That is, a pure state $\ket{\psi}_{AB}$ is invariant under the encoding operation if
\begin{equation}
\ket{\psi}_{AB} = U_A \otimes U_B \ket{\psi}_{AB}.
\end{equation}
Then, $\ket{\psi}_{AB}$ is an eigenstate of $U_A \otimes U_B$. Since $(U_A\otimes U_B)^2 = I$, the eigenvalue of $U_A\otimes U_B$ is either $+1$ or $-1$. We name the eigenvalue $+1$ subspace of $U_A\otimes U_B$ as the even space $\mathcal{H}^{even}$ and the eigenvalue $-1$ subspace as the odd space $\mathcal{H}^{odd}$. The states lie in $\mathcal{H}^{even}$ are called even states and the states lie in $\mathcal{H}^{odd}$ are called odd states. Together they are called parity states. Obviously, a mixture of odd (even) states is still an odd (even) state.

For a generic mixture of pure parity states, $\rho_{AB} = \sum_{i} p_i \ket{\psi_i}\bra{\psi_i}$, where all the components $\{\ket{\psi_i}\}$ are parity states, it keeps invariant under $U_A \otimes U_B$,
\begin{equation} \label{eq:equalencoding}
\begin{aligned}
\rho'_{AB}(0,0) = \rho'_{AB}(1,1), \\
\rho'_{AB}(0,1) = \rho'_{AB}(1,0).
\end{aligned}
\end{equation}
The raw key bit information $\kappa_a$ is ``hidden'' on the encoded state $\rho'_{AB}(\kappa_a, \kappa_b)$. However, when Eve holds the purification of $\rho_{AB}$, she may still learn $\kappa_a$ from the encoded state. Without loss of generality, we consider a purification of $\rho_{AB}$, $\ket{\Psi}_{ABC} = \sum_{i} \sqrt{p_i} \ket{\psi_i}_{AB} \ket{i}_C$, where system $C$ is held by Eve. Under the encoding operation $U_A\otimes U_B$, all the odd and even state components of $\{\ket{\psi_i}\}$ will gain a factor $-1$ and $1$, respectively. If there are only odd or even components in $\rho_{AB}$, the purified state $\ket{\Psi}_{ABC}$ will keep unchanged under the encoding operation. On the other hand, the coexistance of odd and even components in $\rho_{AB}$ will lead to a change of relative phase in $\ket{\Psi}_{ABC}$, allowing Eve to discriminate $\rho'_{AB}(0,0)$ and $\rho'_{AB}(1,1)$.

To make the observations above rigorous, in Appendix \ref{Sc:Proof}, we analyze the security of symmetric-encoding QKD with a standard phase-error-correction approach \cite{Lo1999Unconditional,Shor2000Simple,koashi2009simple}. When the input state $\rho_{AB}$ is an (even or odd) parity state, the symmetric-encoding QKD protocol shown in Fig.~\ref{fig:ProSym}, is perfectly private, which is reflected by a zero phase error rate $E^{(ph)}=0$. Moreover, if $\rho_{AB}$ is a mixture of even and odd state,
\begin{equation} \label{eq:mixoddeven}
\begin{aligned}
\rho_{AB} = p_{odd} \rho_{odd} + p_{even} \rho_{even},
\end{aligned}
\end{equation}
then we can estimate the ratios of odd and even components causing effective detection,
\begin{equation}
\begin{aligned}
q_{odd} &= p_{odd} \frac{Y_{odd}}{Q}, \\
q_{even} &= p_{even} \frac{Y_{even}}{Q},
\end{aligned}
\end{equation}
where $Y_{odd}$, $Y_{even}$, and $Q$ are the successful detection probability of $\rho_{odd}$, $\rho_{even}$, and $\rho_{AB}$, respectively. The phase error rate is $E^{(ph)} = q_{even}$. Therefore, the key rate is given by
\begin{equation} \label{eq:keyrate}
\begin{aligned}
r &= 1-H(E)-H(q_{even})
\end{aligned}
\end{equation}
where $E$ is the quantum bit error rate. Here, we can see that as long as the final state post-selected by successful detection is close to a parity state (either even or odd), the information leakage can be bounded.

\section{Phase-matching quantum key distribution} \label{Sc:Protocol}
The problem for implementing the symmetric-encoding protocol is that both the parity states $\rho_{odd}$ and $\rho_{even}$ are usually nonlocal. That is, they cannot be obtained by Alice and Bob via independent local state preparation. Thus, they will inevitably prepare a mixture of even and odd parity states in practice. The PM-QKD protocol can be regarded as a realization of symmetric-encoding QKD, where Alice and Bob construct the parity state input $\rho_{AB}$ using two optical modes based on independent laser sources. To construct parity state $\rho_{AB}$ from experimentally accessible coherent states, a natural way is to perform simultaneous $0/\pi$ phase-randomization on two coherent states $\ket{\sqrt{\mu_a}}_A, \ket{\sqrt{\mu_b}}_B$ to decouple the odd and even photon components.

For the convenience of parameter estimation, we consider the PM-QKD protocol where Alice and Bob randomize the phase $\phi$ on $\ket{\sqrt{\mu_a }e^{i\phi}}_A, \ket{\sqrt{\mu_b }e^{i\phi}}_B$ continuously so that the photon number components $\{\ket{m,n}\}_{m+n=k}$ are decoupled,
\begin{equation}
\rho_{AB} = \sum_{k=0}^{\infty} p_k \rho_k,
\end{equation}
where $\rho_k$ a pure parity state since it is a Fock state. The overall phase error of PM-QKD is given by
\begin{equation} \label{eq:EX}
q_{even} = 1 - \sum_{k} q_{2k+1} 
\end{equation}
where $q_k$ is the fraction of detection when Alice and Bob send out $k$-photon signals,
\begin{equation} \label{eq:qk}
q_k = P_{\mu_t}(k)\dfrac{Y_k}{Q_{\mu_t}}.
\end{equation}
Here $\mu_t = \mu_a + \mu_b$; $Y_k$ is the yield of $k$-photon component; $Q_{\mu_t}$ is the overall gain, i.e., the successful detection probability when Alice and Bob send out coherent lights with intensities of $\mu_a$ and $\mu_b$, respectively; and $P_{\mu_t}(k) = e^{-2\mu} (2\mu)^k/(k!)$ is the Poisson distribution. In order to estimate the information leakage, we only need to estimate the fraction of odd state detections.

The simultaneous phase-randomization is also nonlocal. To achieve this in practice, Alice and Bob first randomize the phase independently, and then post-select the pulses with the same random phase by phase announcement and sifting \cite{Ma2012Alternative}. The overall privacy, characterized by the overall phase error rate, will not change after the random phase announcement \cite{ma2018phase,maeda2019repeaterless}, which indicates that simultaneous phase randomization can be replaced by independent phase randomization and post-selection.

In each turn, Alice and Bob each generates a random phase $\phi_{a(b)}$ and a random key bit $\kappa_{a(b)}$. They then modulate their coherent pulse $\ket{\sqrt{\mu_{a(b)}}}$ by a phase $(\phi_{a(b)} + \pi\kappa_{a(b)})$. After Eve announces the detection result, they announce the random phases $\phi_{a(b)}$ to group the signals with the same random phase difference. From the detection result, they estimate the information leakage and extract key from the encoding bits. In practice, the continuous phase randomization can be replaced with discrete randomization \cite{Cao2015Discrete}. The detailed analysis of PM-QKD with phase post-selection and discrete phase randomization is presented in Appendix \ref{Sc:SecurePM}.

In discrete-phase encoding, Alice and Bob randomly pick up one of the $D$ phases equally distributed in $[0,2\pi)$. They announce the discrete random phase $\phi_a = \frac{2\pi}{D} j_a$ and $\phi_b = \frac{2\pi}{D} j_b$ with the indexes $j_{a(b)} = 0,1,...,D-1$. Based on the random phase difference, they group the signals by $j_s = (j_b - j_a) \text{ mod } \frac{D}{2}$. For example, when $D=16$, the signals with $j_b - j_a = 1$ and $9$ are in the same group. After grouping, there is $\frac{D}{2}$ groups with the label of $j_s = 0,1,...,\frac{D}{2}-1$. In the ideal case, the signals with $j_s = 0$ are the signals with matched phase. For the signals with $j_s \neq 0$, there is an intrinsic mismatched phase $\phi_\delta$. It is conservative to regard $\phi_\delta$ being caused by Eve in security analysis. For each group of data, the information leakage can be bounded by $q_{even}$ in Eq.~\eqref{eq:EX}, regardless of Eve's measurement setting or the bit error rates.

Thanks to this decoupled relationship between privacy and channel disturbance, we can improve the post-processing step by utilizing the unaligned data with $j_s \neq 0$. Alice and Bob first reconcile their sifted raw key bits $K_a$ and $K_b$ with for each group $j_s$ separately. If the error rate in a group of data is too large, they can simply discard that group. Denote the group set $J$ to be the set of remaining phase group indexes $\{j_s\}$. That is, if $j_s \in J$, then the phase group $j_s$ is kept for key generation. They then estimate the even photon fraction $q_{even}$ for all the remaining data and perform privacy amplification. Note that $q_{even}$ is the same for different data group $j_s$. The overall key rate of PM-QKD, taking the phase sifting and loss into account, is given by
\begin{equation} \label{eq:groupkey}
R = \dfrac{2 Q_{\mu}}{D} \sum_{j_s\in J} \left[ 1 - H(q_{even}) - f H(E_{j_s}) \right],
\end{equation}
where $f$ is the error correction efficiency, and $E_{j_s}$ is the bit error rate of phase group $j_s$ with $\mu_{i_a} = \mu_{i_b} = {\mu}/{2}$. In the experiment, all the parameters in Eq.~\eqref{eq:groupkey} can be directly obtained except for $q_{even}$, which needs to be bounded by the decoy-state method. It has been shown in literature that with the infinite decoy-state method \cite{Lo2005Decoy}, all the parameters, including $q_{even}$ can be estimated accurately \cite{ma2018phase,Lin2018simple}.

\section{Robustness to loss and noise}
To test the capability to tolerate high noises, we simulate the performance of PM-QKD in the asymptotic limit, under different level of misalignment errors, with $D=16$ discrete phases. In simulation model, there are three major error source: the system misalignment error $e_d$, caused by phase fluctuation and system misalignment; the background error, caused by dark counts $p_d$; and the mismatch error $e_\Delta(j_s)$ caused by the intrinsic mismatch for different phase groups,
\begin{equation}
e_\Delta(j_s) =
\begin{cases}
\sin^2(\frac{\pi j_s}{D}), 0 \leq j_s \leq \frac{D}{4}, \\
\sin^2(\frac{\pi}{2} - \frac{\pi j_s}{D}), \frac{D}{4} < j_s \leq \frac{D}{2}-1, \\
\end{cases}
\end{equation}
which is related to the index deviation $0\le j_s\le D/2-1$. 
The overall bit error rate $E_\mu^{(j_s)}$ is then given by
\begin{equation}
E_\mu^{(j_s)} = \min\left\{ [p_d + \eta\mu (e_\Delta(j_s) + e_d)]\frac{e^{-\eta\mu}}{Q_\mu}, 0.5\right\},
\end{equation}
where $\eta$ is the transmittance.

In practice, the phase drift caused by lasers and fiber links will degrade the performance of PM-QKD. To avoid the effect caused by phase drift, one demanding way is to introduce active feedback and phase locking. Another enhancement is to introduce the phase post-compensation method \cite{ma2018phase}, where Alice and Bob can estimate the phase drift $\phi_\delta$ by strong light pulses, record it and take it into account in the sifting step. Suppose the phase drift is slow, then Alice and Bob is able to figure out the closest discrete phase $\phi_j \equiv \frac{2\pi}{D} j$ to $\phi_\delta$. Denote the index of the closest discrete phase as $j_\delta$. During the sifting step, Alice and Bob modify the definition of $j_s$ to $j_s = j_a - j_b - j_\delta$. In this sense, Alice and Bob ``post-compensate'' the effect caused by phase drift.

With the parameters given in Table~\ref{tab:para}, we simulate the asymptotic key rate performance under the setting of system misalignment error rates $e_d$ of $1\%, 5\%, 9\%$ and $13\%$. Note that, the normal symmetric BB84 protocol cannot yield any positive key rate under the misalignement error $e_d\ge 11\%$. Surprisingly, from Fig.~\ref{fig:InfKey}, one can see, even if $e_d=13\%$, the key rate of PM-QKD is still able to surpasses the linear bound when $l>330$ km. This illustrates the robustness of PM-QKD against both noisy and lossy channel. In an extreme case when the source light intensity $\mu \to 0$ and the dark count of detector $p_d = 0$, the single photon fraction among all the detected signal $q_1 \to 1$ according to Eq.~\eqref{eq:qk}, hence the phase error rate $E^{(ph)} \to 0$. In this case, the key rate of PM-QKD $R>0$ even when the bit error $E^{(j_s)}_\mu$ is close to $50\%$.

\begin{figure}[htbp]
\resizebox{8cm}{!}{\includegraphics[scale=1]{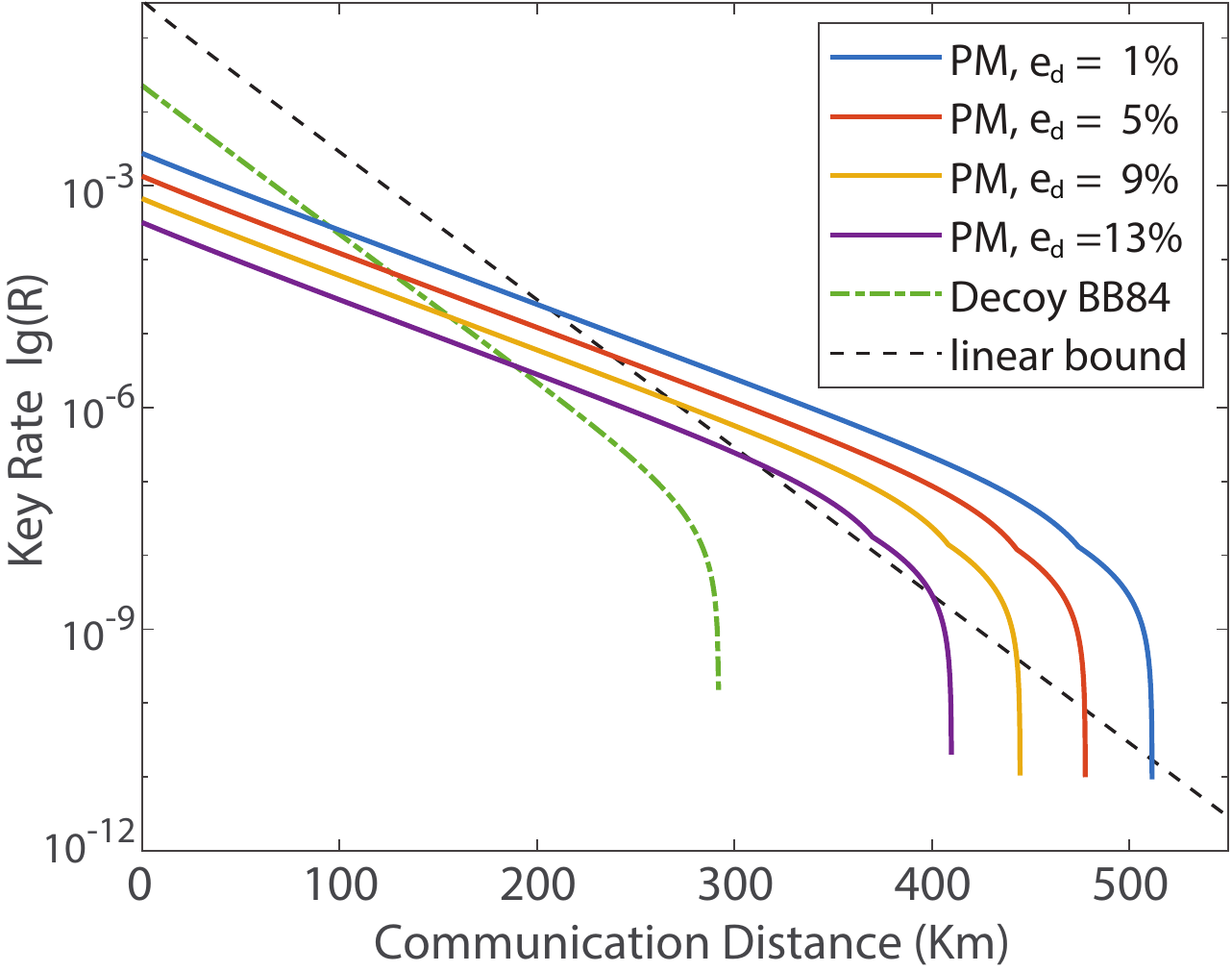}}
\caption{Rate-distance performance of PM-QKD under different system misalignment error rates $e_d$. For BB84, $e_d=1.5\%$. The non-smooth point indicates the places where the key contribution from unaligned groups with $j_s \neq 0$ turns to 0.} \label{fig:InfKey}
\end{figure}

\section{Finite size performance}
With the symmetry-based security proof of PM-QKD, the finite-size analysis can be much simplified. In a complete finite-size analysis, we should take the cost and failure probability of channel authentication, error verification, privacy amplification, and parameter estimation into account. However, the cost of the first three steps are negligible comparing to the one in parameter estimation. When the final key length is much larger than $37$ bits, we can ignore the corresponding failure probability with a constant secret key cost \cite{Fung2010Practical}. For simplicity, we ignore these parts in our analysis. The phase error estimation is the at the core of finite-size analysis. According to Eqs.~\eqref{eq:EX} and \eqref{eq:qk}, our task is to estimate the number of clicks caused by odd-photon fraction in the phase groups $J$, with signal intensity $\mu_a = \mu_b = \mu/2$. The Chernoff bound is applied to bound the statistical fluctuation of decoy parameters \cite{zhang2017improved}. We leave the details of the finite decoy-state analysis in Appendix \ref{Sc:finite}.

To demonstrate the practicality of PM-QKD, we perform simulation with finite data sizes, as shown in Fig.~\ref{fig:FiNed}. The key rate beat the linear bound at 270 km under the condition where data size $N=1\times 10^{12}$ and system misalignment error $e_d =3\%$. When the system misalignment error is $6\%$, which can be easily realizd in current experimental implementation, linear bound is exceeded at a similar length of 270 km, where, as an expense, the data size should be enlarged into $N=1\times 10^{13}$. Note that in the decoy analysis, the rounds with mismatched phases are also used for parameter estimation, which is substantiated by the fact that the single-photon state is $\rho^{1} = \frac{1}{2}(\ket{01}_{AB}\bra{01} + \ket{10}_{AB}\bra{10})$, regardless of the sending intensity and random phase difference. With this observation, the size of available data for parameter estimation is enlarged, which marginally reduces the impact of statistical fluctuation and results in a higher key rate.

\begin{table}[htbp]
\caption{List of parameters for the simulations shown in Fig.~\ref{fig:InfKey} and \ref{fig:FiNed}. The failure probability $\epsilon$ and sending rounds $N$ is used for the finite data size analysis in Fig.~\ref{fig:FiNed}.} \label{tab:para}
\begin{tabular}{cccccc}
  \hline
  Parameters & Values \\
  \hline
  Dark count rate $p_d$ & $1\times 10^{-8}$ \\
  Error correction efficiency $f$ & $1.1$  \\
  Detector efficiency $\eta_d$ & $20\%$  \\
  Number of phase slices $D$ & $16$  \\
  BB84 misalignment error $e_d^{(BB84)}$ & $1.5\%$ \\
  \hline
  Failure probability $\epsilon$ & $1.7\times 10^{-10}$ \\
  Sending rounds $N$ & $1\times 10^{12}$ or $1\times 10^{13}$ \\
  \hline
\end{tabular}
\end{table}

\begin{figure}[htbp]
\includegraphics[width=8cm]{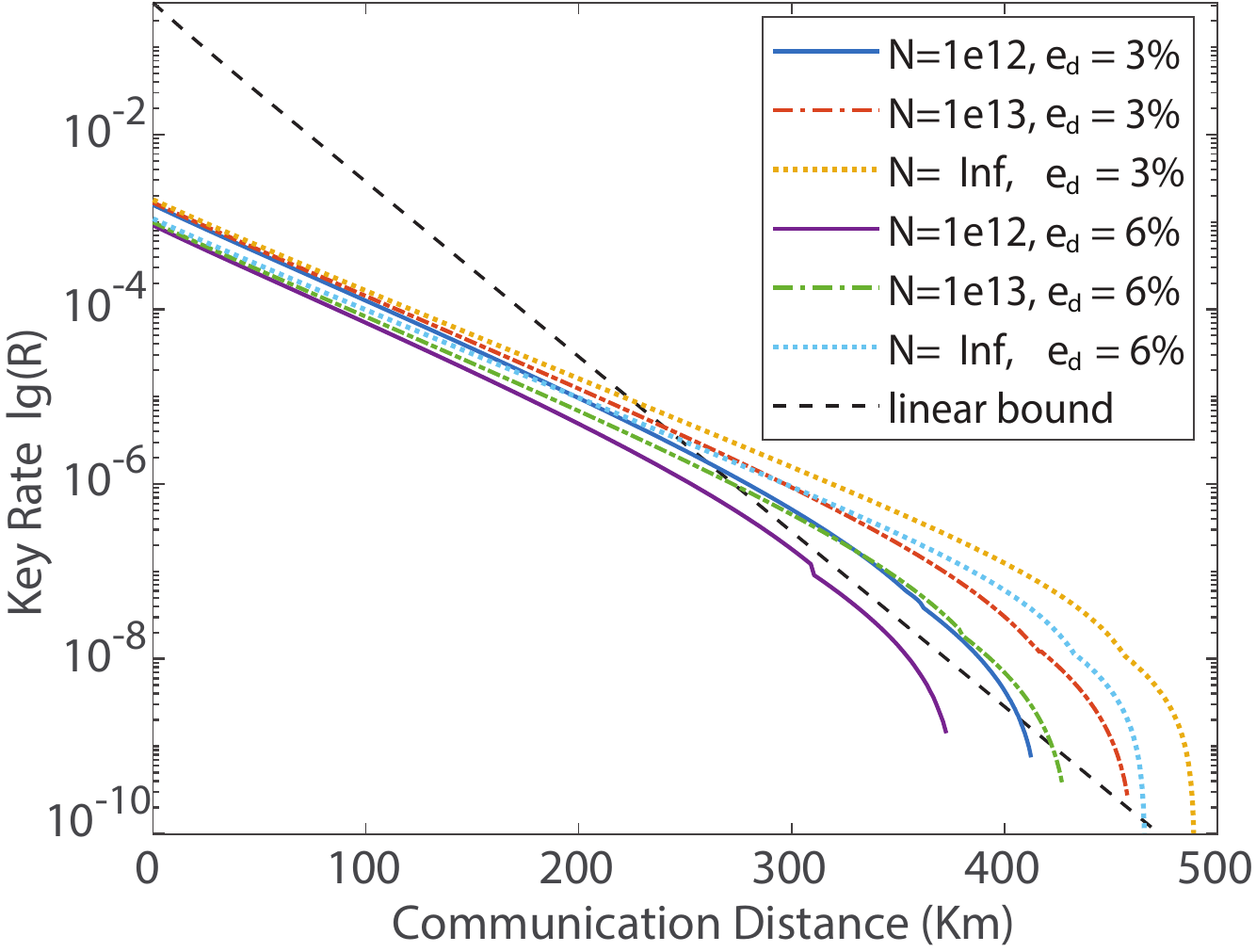}
\caption{Rate-distance performance of PM-QKD under the data size $N = 1\times 10^{12}, 1\times 10^{13}$ or infinitely large, and misaligned error $e_d = 3\%$ or $6\%$.} \label{fig:FiNed}
\end{figure}

\section{Concluding remarks}
We analyze the security of a generalized version of PM-QKD protocol, from which we provide a new perspective where the privacy of QKD originates from the encoding symmetry.
In PM-QKD, the parity symmetry protects the privacy of parity states. In the same manner, here we conjecture that the translation symmetry of encoded state in the round-robin differential-phase-shifting protocol may explain why the information-leakage will not be affected by the channel noise, and leave it for future works.

The symmetry-protected quality not only makes PM-QKD robust against channel disturbance, but also simplifies the parameter estimation and finite-size analysis considerably. With improved decoy-state analysis, we have demonstrated the capability of PM-QKD to surpass linear bound with data size $N=1\times 10^{13}$, currently accessible experimental devices, and high noise level of $6\%$.
Note that, the discrete phase randomization with $\{\phi = 2\pi \frac{j}{D}\}_{j=0}^{D-1}$ and imbalanced Alice and Bob's signal intensity $\mu_a, \mu_b$ will not destroy the parity-symmetry essentially. This implies a natural extension of PM-QKD analysis to the cases with few discrete random phases and imbalanced intensity arrangement of Alice and Bob.

Due to the universality of encoding symmetry and existence of symmetric states, we expect this symmetry-based analysis will benefit the security proof of a large variety of QKD protocols. For example, the analysis of  encoding operation with parity symmetry $U^2=I$ in this work can be extended to the $n$-fold rotational symmetry case, i.e., $U^n=I$, where $n\ge 2$. Moreover, this symmetry-based analysis can be extended to the case where the security proof is not obvious in an usual complementarity-based security view, for example, the continuous-variable QKD protocol.

\begin{acknowledgments}
We acknowledge M.~Koashi, N.~L\"utkenhaus, and H.~Zhou for the insightful discussions. This work was supported by the National Natural Science Foundation of China Grants No.~11875173 and No.~11674193, the National Key R\&D Program of China Grants No.~2017YFA0303900 and No.~2017YFA0304004, and the Zhongguancun Haihua Institute for Frontier Information Technology. P.~Z.~and W.~W.~contribute equally to this work.
\end{acknowledgments}

We provide detailed security analysis and simulation methods. In Section \ref{Sc:Proof}, we present the security proof for the symmetric-encoding QKD. In Section \ref{Sc:SecurePM}, we apply the symmetry-based security proof to the PM-QKD protocol, with practical issues taken into account. In Section \ref{Sc:finite}, we give the detailed finite data size analysis. Finally, we list the steps of numerical simulation in Section \ref{Sc:simulationformula}.

\begin{appendix}

\section{Proof of symmetry-protected privacy} \label{Sc:Proof}
In this section, we provide security analysis of symmetric-encoding QKD. We first review the security proof based on phase error correction \cite{Lo1999Unconditional,Shor2000Simple}. Then, we present a general entanglement-based symmetric-encoding protocol by establishing the link between symmetric states and perfect privacy.


Denote $\mathcal{D}(\mathcal{H}^A)$ as the space of density operators acting on $\mathcal{H}^A$ and $\mathcal{L}(\mathcal{H}^A)$ as the space of linear operators acting on $\mathcal{H}^A$. For a qubit system $A'$, the Hilbert space is denoted by $\mathcal{H^{A'}}$. The Pauli operators on $\mathcal{H^{A'}}$ are $X_{A'}, Y_{A'}$ and $Z_{A'}$. The eigenstates of $X_{A'}, Y_{A'}$ and $Z_{A'}$ are $\{\ket{\pm}_{A'}\}$, $\{\ket{\pm i}_{A'}\}$, and $\{\ket{0}_{A'},\ket{1}_{A'}\}$, respectively. The $X$-basis measurement is denoted by $M_X: \{\ket{+}_{A'}\bra{+}, \ket{-}_{A'}\bra{-} \}$. The $Z$-basis measurement is denoted by $M_Z: \{\ket{0}_{A'}\bra{0}, \ket{1}_{A'}\bra{1} \}$. The four Bell state are
\begin{equation}
\begin{aligned}
\ket{\Phi_\pm}&=\frac{1}{\sqrt{2}}(\ket{00}\pm\ket{11}),\\ 
\ket{\Psi_\pm}&=\frac{1}{\sqrt{2}}(\ket{01}\pm\ket{10}).\\
\end{aligned}
\end{equation}
For a qubit $A'$ and a qudit $A$, the controlled-$U$ operation is defined as
\begin{equation}
C_{A'A} = \ket{0}_{A'}\bra{0}\otimes I_A + \ket{1}_{A'}\bra{1}\otimes U_A.
\end{equation}

\subsection{Security definition}
In QKD, the two communication parties, Alice and Bob, will finally obtain a pair of bit strings $S_a$ and $S_b$ with a length of $N_k$ (if the protocol succeeds), which can be correlated to a quantum state held by Eve. The joint state $\rho_{ABE}$ is a classical-classical-quantum state,
\begin{equation} \label{eq:rho_ABE}
\rho_{ABE} = \sum_{s_a, s_b} Pr_{S_a, S_b}(s_a,s_b) \ket{s_a}_A\bra{s_a} \otimes \ket{s_b}_B\bra{s_b} \otimes \rho_E^{(s_a,s_b)},
\end{equation}
where $S_a, S_b$ are random variables and $s_a, s_b \in \{0,1\}^{N_k}$ are the values. In particular, an ideal key state held by Alice and Bob is described by the private state,
\begin{equation}
\rho_{ABE}^{ideal} = (2^{N_k})^{-1} \sum_{s} \ket{s}_A\bra{s} \otimes \ket{s}_B\bra{s} \otimes \rho_E,
\end{equation}
where $s_a=s_b=s$ implies that Alice and Bob hold the same string, and $\rho_E$ is independent of $s$, that is, Eve have no information on the key string variable $S$.

A QKD protocol is defined to be $\epsilon$-secure, if the final distilled state $\rho_{ABE}$ is $\epsilon$-closed to any private state $\rho_{ABE}^{ideal}$ with a proper chosen $\rho_E$
\begin{equation}
\min_{\rho_E} \dfrac{1}{2}|| \rho_{ABE} - \rho_{ABE}^{ideal} ||_1 \leq \epsilon,
\end{equation}
where $||A||_1 \equiv Tr[\sqrt{A^\dag A}]$ is the trace norm.

Usually, we would like to decompose the secret definition to two parts, secrecy and correctness. A QKD protocol is defined to be $\epsilon_{cor}$-correct, if the probability distribution $Pr_{S_a,S_b}(s_a, s_b)$ of the final state $\rho_{ABE}$ in Eq.~\eqref{eq:rho_ABE} satisfies
\begin{equation}
Pr_{S_a,S_b}(s_a \neq s_b) \leq \epsilon_{cor}.
\end{equation}
A QKD protocol is defined to be $\epsilon_{sec}$-secret, if the state $\rho_{AE}$ is closed to the single-party private state $\rho_{AE}^{ideal}$
\begin{equation}
\min_{\rho_E} \dfrac{1}{2}|| \rho_{AE} - \rho_{AE}^{ideal} ||_1 \leq \epsilon_{sec},
\end{equation}
where $\rho_{AE}^{ideal}\equiv (2^{N_k})^{-1} \sum_{s} \ket{s}_A\bra{s} \otimes \rho_E$. If a QKD protocol is $\epsilon_{cor}$-correct and $\epsilon_{sec}$-secret, then it is $(\epsilon_{cor} + \epsilon_{sec})$-secure \cite{koashi2009simple}.

\subsection{Security proof based on phase error correction} \label{Sc:Koashi}
Here, we briefly review the main idea of phase-error-based security proof, which is first proposed by Lo-Chau \cite{Lo1999Unconditional}, reduced to prepare-and-measure scheme later by Shor-Preskill \cite{Shor2000Simple}, and improved by Koashi later \cite{koashi2009simple}.

In an entanglement-based QKD protocol, Alice and Bob will finally share a large bipartite state $\rho_{AB}$. Denote Alice's and Bob's subspace in a single run as $\mathcal{H}_A$ and $\mathcal{H}_B$. Here, the encrypted error correction is used to decouple the error correction and privacy amplification. The protocol is presented as below.

\textbf{\uline{Actual protocol}}
\begin{enumerate}
\item
(State distribution) Alice and Bob share a bipartite state $\rho_{AB}$ in the space $(\mathcal{H}_A\otimes\mathcal{H}_B)^{\otimes n}$, where $n$ is the total number of runs.
\item
(Measurement) Alice and Bob measure their systems $\mathcal{H}_A^{\otimes n}$ and $\mathcal{H}_B^{\otimes n}$, respectively. Suppose the measurement results can be described by $n$-bit strings $\kappa_A$ and $\kappa_B$.
\item
(Error correction) They reconcile the key strings through an encrypted classical channel consuming $l_{ec}$-bits of secret key. They agree on an $n$-bit raw key string $\kappa_{rec}$ except for a small failure probability $\epsilon_{ec}$.
\item
(Privacy amplification) Alice randomly chooses $n-m$ strings $\{V_k\}_{k=1,...,n-m}$ of $n$-bit, which are linearly independent, and announces the strings to Bob. The final key length is $n-m$, where the $k$-th key bit is $\kappa_{rec}\cdot V_k$. Denote the final key as $\kappa_{fin}$.
\end{enumerate}

Note that the error correction can be easily conducted just by classical information theory. The only remaining concern is to quantize the information leakage $m$ on $\kappa_{rec}$. Note that, Eve's knowledge on $\kappa_{rec}$ will not change after any operation on $\mathcal{H}_A$ and $\mathcal{H}_B$ that keeps Eve's state and the raw key bits $\kappa_{rec}$. Based on this argument, we construct the following virtual protocol.

\textbf{\uline{Virtual Protocol}}
\begin{enumerate}
\item
(State distribution) Alice and Bob share a bipartite state $\rho_{AB}\in(\mathcal{H}_A\otimes\mathcal{H}_B)^{\otimes n}$.
\item
(Squashing) They apply operation $\Lambda$ on $\rho_{AB}$ and convert it to a key space $\mathcal{K}^{\otimes n}$ and ancillary space $\mathcal{H}_R$, i.e. $\Lambda(\rho_{AB})\in\mathcal{K}^{\otimes n}\otimes\mathcal{H}_R$.
\item
(Measurement) They measure $\mathcal{K}^{\otimes n}$ on the $Z$ basis and obtain $\kappa_{rec}$. They then measure $\mathcal{H}_R$ by $M_R$. Suppose the measurement result of $M_R$ is $\gamma$.
\item
(Privacy amplification) They randomly chooses $n-m$ linearly independent $n$-bit strings $\{V_k\}_{k=1,...,n-m}$ and announce them. The final key length is $n-m$, and the $k$-th key bit is $\kappa_{rec}\cdot V_k$. Denote the final key as $\kappa_{fin}$.
\end{enumerate}

The operation $\Lambda$ and measurement $M_R$ can be chosen freely, which are only subjected to the requirement that the $Z$-basis measurement statistics on $\mathcal{K}^{\otimes n}$ is the same as $\kappa_{rec}$ in actual protocol. Therefore, Eve's knowledge on $\kappa_{rec}$ in the virtual protocol is the same as in the actual protocol. The secrecy of $\kappa_{rec}$ in virtual protocol is the same as the one in practical protocol.

The core observation in the security proof based on phase error correction is that, with a proper choice of $\Lambda$ and $M_R$, the security of $Z$-basis measurement result $\kappa_{rec}$ can be reflected on the predictability of $X$-basis measurement result $T_\gamma$, given the measurement outcome $\gamma$ on $\mathcal{H}_R$. Denote $|T_\gamma|$ as the size of possible $X$-basis measurement outcomes.

\begin{lemma} \label{lem:Koashi} (Koashi'09 \cite{koashi2009simple}) If the chosen $\Lambda$ and $M_R$ in the above virtual protocol meets the requirements:
\begin{enumerate}
	\item The $Z$-basis measurement statistics on $\mathcal{K}^{\otimes n}$ is the same as $\kappa_{rec}$ in actual protocol,
	\item Given each measurement outcome $\gamma$ on $\mathcal{H}_R$, the size of $X$-basis measurement outcome on $\mathcal{K}^{\otimes n}$ is bounded by $|T_\gamma| \leq 2^{n\xi}$, except for a small probability $\epsilon_{T}$,
\end{enumerate}
then the virtual protocol is $\sqrt{\epsilon_{T}'}$-secret and $\epsilon_{ec}$-correct, thus $(\sqrt{\epsilon_{T}'}+\epsilon_{ec})$-secure, where $\epsilon_{T}'=\epsilon_{T} + 2^{-n\zeta}$ and $m=n(\xi + \zeta)$.
\end{lemma}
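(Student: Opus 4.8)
The plan is to establish Lemma~\ref{lem:Koashi} by the complementarity argument underlying the phase-error-correction proof: reduce the secrecy of the $Z$-basis raw key $\kappa_{rec}$ to the predictability of the conjugate $X$-basis outcome on $\mathcal{K}^{\otimes n}$, and then exhibit the linear privacy-amplification map $M$ (the $(n-m)\times n$ matrix whose rows are the announced strings $V_k$) as, simultaneously, a phase-error-correcting code whose syndrome length $m=n(\xi+\zeta)$ exceeds $\log_2|T_\gamma|\le n\xi$ by the safety margin $n\zeta$.

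First I would pass to the purified, entanglement-based picture: let Eve hold a purifying system $E$ of $\Lambda(\rho_{AB})$ on $\mathcal{K}^{\otimes n}\otimes\mathcal{H}_R\otimes E$, and let Alice and Bob perform $M_R$ first, obtaining outcome $\gamma$ with probability $p_\gamma$ and a conditional state $\sigma_\gamma$ on $\mathcal{K}^{\otimes n}\otimes E$. By requirement~1 of the lemma the $Z$-statistics of $\sigma_\gamma$ coincide with those of $\kappa_{rec}$ in the actual protocol and Eve's side information is unchanged, so it suffices to bound the secrecy in this virtual picture. The standard reformulation is then that the secrecy of $M\kappa_{rec}$ against $E$ is controlled by the ability to predict, using $E$ and the public data, the $X$-basis outcome of the complementary $(n-m)$-qubit register --- equivalently, by the success of correcting the phase-error pattern of $\sigma_\gamma$ using the $m$ classical syndrome bits fixed by $M$.

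Second I would invoke requirement~2: conditioned on $\gamma$, except on an event of total weight $\epsilon_T$, the $X$-basis outcomes of $\mathcal{K}^{\otimes n}$ lie in a set $T_\gamma$ with $|T_\gamma|\le 2^{n\xi}$. A random linear code with $m$ parity checks then corrects this phase-error set with failure probability at most $|T_\gamma|\,2^{-m}\le 2^{-n\zeta}$, by a union-bound/random-coding estimate over the announced choice of $\{V_k\}$; averaging over $\gamma$ and the protocol randomness, the two failure contributions add to $\epsilon_T'=\epsilon_T+2^{-n\zeta}$, so the overlap of Alice's final $(n-m)$-qubit key register with the pure state $\ket{+}^{\otimes(n-m)}$, completely uncorrelated with $E$, is at least $1-\epsilon_T'$.

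Finally I would convert and compose: from the overlap bound $1-\epsilon_T'$ a Fuchs--van de Graaf-type inequality, $\tfrac12\|\rho-\ket{\psi}\bra{\psi}\|_1\le\sqrt{1-\langle\psi|\rho|\psi\rangle}$, together with monotonicity of the trace distance under partial trace, gives $\min_{\rho_E}\tfrac12\|\rho_{AE}-\rho_{AE}^{ideal}\|_1\le\sqrt{\epsilon_T'}$, i.e.\ $\epsilon_{sec}=\sqrt{\epsilon_T'}$; correctness $\epsilon_{cor}=\epsilon_{ec}$ is immediate from the error-correction step; and the composition rule recalled above yields $(\sqrt{\epsilon_T'}+\epsilon_{ec})$-security. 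The hard part is the second step: turning ``phase-error correction with $m$ syndrome bits'' into a rigorous quantitative statement from only the cardinality bound $|T_\gamma|\le 2^{n\xi}$. Here one must verify that measuring $\mathcal{H}_R$ before privacy amplification is legitimate --- it is, since $\Lambda$ and $M_R$ are unconstrained except through requirement~1 --- that the random-coding estimate is applied after conditioning on $\gamma$ and then correctly averaged, and that the two error terms genuinely add rather than compound. Everything else follows from the definitions recalled earlier in this appendix.
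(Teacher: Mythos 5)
The paper does not actually prove this lemma---it is imported from Koashi's 2009 complementarity argument via the citation---so there is no in-paper proof to compare against; your reconstruction is the standard one, and its ingredients (the cardinality bound on $T_\gamma$, the random-hashing failure probability $|T_\gamma|\,2^{-m}\le 2^{-n\zeta}$ from $m=n(\xi+\zeta)$, and the conversion $\tfrac{1}{2}\|\rho-\ket{\psi}\bra{\psi}\|_1\le\sqrt{1-\bra{\psi}\rho\ket{\psi}}$ followed by monotonicity under partial trace) are exactly the ones the surrounding appendix text alludes to, so the approach is the intended one and the error terms do add as you claim. The only step you compress is the passage from ``the $X$-basis outcome is predictable up to a set $T_\gamma$ of size $2^{n\xi}$'' to ``the post-privacy-amplification state has overlap at least $1-\epsilon_{T}'$ with $\ket{+}^{\otimes(n-m)}$'': in the cited proof this requires lifting the classical predictability statement to a coherent one---the pre-measurement state must lie, up to total weight $\epsilon_T$, in the span of the $X$-basis vectors indexed by $T_\gamma$ conditioned on $\gamma$, after which the union bound over random syndromes is applied to the quantum state rather than to a classical decoder---and that lifting, which you correctly flag as ``the hard part,'' is where the substance of Koashi's argument lives; as a reconstruction of a quoted external result, your sketch is sound.
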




Here we collect all the related small failure probabilities,
\begin{itemize}
	\item $\epsilon_{ec}$ is the failure probability of error correction, with affect the correctness of the protocol, which is determined by the method used in the error correction step;
	\item $\epsilon_{T}$ is the failure probability of bounding $|T_\gamma|$. Usually this is the failure probability $\epsilon_{pe}$ of estimating phase error number $n^{EX}$. This amount is determined by the method used to estimate the phase error;
	\item $\zeta$ is the extra amount of privacy amplification to enhance the privacy of the protocol, which can be determined arbitrarily according to the need for privacy. Usually we denote $\epsilon_{pa} = 2^{-n\zeta}$ as the failure probability of privacy amplification.
\end{itemize}

Usually, we introduce phase error number $n^{EX}$ to characterize the size of $|T_\gamma|$. Suppose that, in an ideal case, the $X$-basis measurement outcome $\mathcal{K}^{\otimes n}$ for a given $\gamma$ is deterministic $T^{(0)}_\gamma$. Then we can calculate the number of phase error $n^{EX}$ of given measurement result $T_\gamma$ from $T^{(0)}_\gamma$, defined as $n^{EX}(T_\gamma) \equiv wt(T_\gamma \oplus T^{(0)}_\gamma)$, where $wt(A)$ is the weight, i.e., number of non-zero elements in string $A$, and $\oplus$ is the modulo-$2$ addition. Denote $n^{EX}$ as the maximum value of $n^{EX}(T_\gamma)$ for a given set $\{T_\gamma\}$. Then the size of $\{T_\gamma\}$ is bounded by
\begin{equation}
|T_\gamma| = \sum_{k=0}^{n^{EX}} \binom{n}{k} < \binom{n}{n^{EX}+1} < 2^{nH(\frac{n^{EX}+1}{n})},
\end{equation}
where the first inequality holds when $(n^{EX}+1<n/3)$ \cite{Fung2010Practical}. The proof of the second inequality can be found in Ref.~\cite{cover2012elements}. From phase error correction point of view, this is essentially the typical space argument used in Shannon's theory.

Denote the average phase error number as $\bar{n}^{EX}$. The phase error rate is define to be $E^X \equiv \bar{n}^{EX}/n$. Note that the final key length is $n-m$, where $m = n\xi \geq nH(\frac{n^{EX}+1}{n})=nH(E^X)$, we conclude that an approximate proportion $H(E^X)$ of raw key bits is sacrificed in privacy amplification. Taken the reconciliation cose $l_{ec}$ into account, the net key generation length is
\begin{equation}
K = n - m - l_{ec} \geq n(1 - H(E^X)) - l_{ec}.
\end{equation}

One should note that, in Koashi's security proof, there is \emph{a degree of freedom on choosing the definition of phase error $E^{X}$}, based on the operation $\Lambda$ and the ancillary measurement $M_R$, as long as Lemma \ref{lem:Koashi} holds. This endows a large flexibility when we analyze the security of QKD.

\subsection{Perfect privacy induced by encoding symmetry} \label{Sc:PM parity}
We prove the symmetric-encoding QKD based on the aforementioned security proof, which essentially generalizes the one employed in the security proof of PM-QKD \cite{ma2018phase}. First, we introduce a general entanglement-based symmetric-encoding QKD protocol, as shown in Fig.~\ref{fig:Pro1}. Here, Alice and Bob share a two-qudit state $\sigma_{AB}$ on system $A$ and $B$, at the beginning of each run. Alice holds an extra qubit $A'$ on state $\ket{+i}$ initially, and she creates entanglement between $A$ and $A'$ by applying the following control-$U$ operation
\begin{equation}
C_{A'A}(U) = \ket{0}_{A'}\bra{0}\otimes I_A + \ket{1}_{A'}\bra{1}\otimes U_A.
\end{equation}
She then sends qudit $A$ out to Eve. Similarly, Bob holds $B'$, performs $C_{B'B}(U)$ on $B$ and $B'$, and sends $B$ to Eve. Eve performs joint measurement on system $A$ and $B$ to create entanglement on qubit system $A'$ and $B'$. Here, the encoding unitary $U_A$ in $C_{A'A}(U)$ meets the requirement of symmetric encoding,
\begin{equation}
U^2 = U^\dag U = I.
\end{equation}

\begin{figure}[htbp]
\includegraphics[width=8cm]{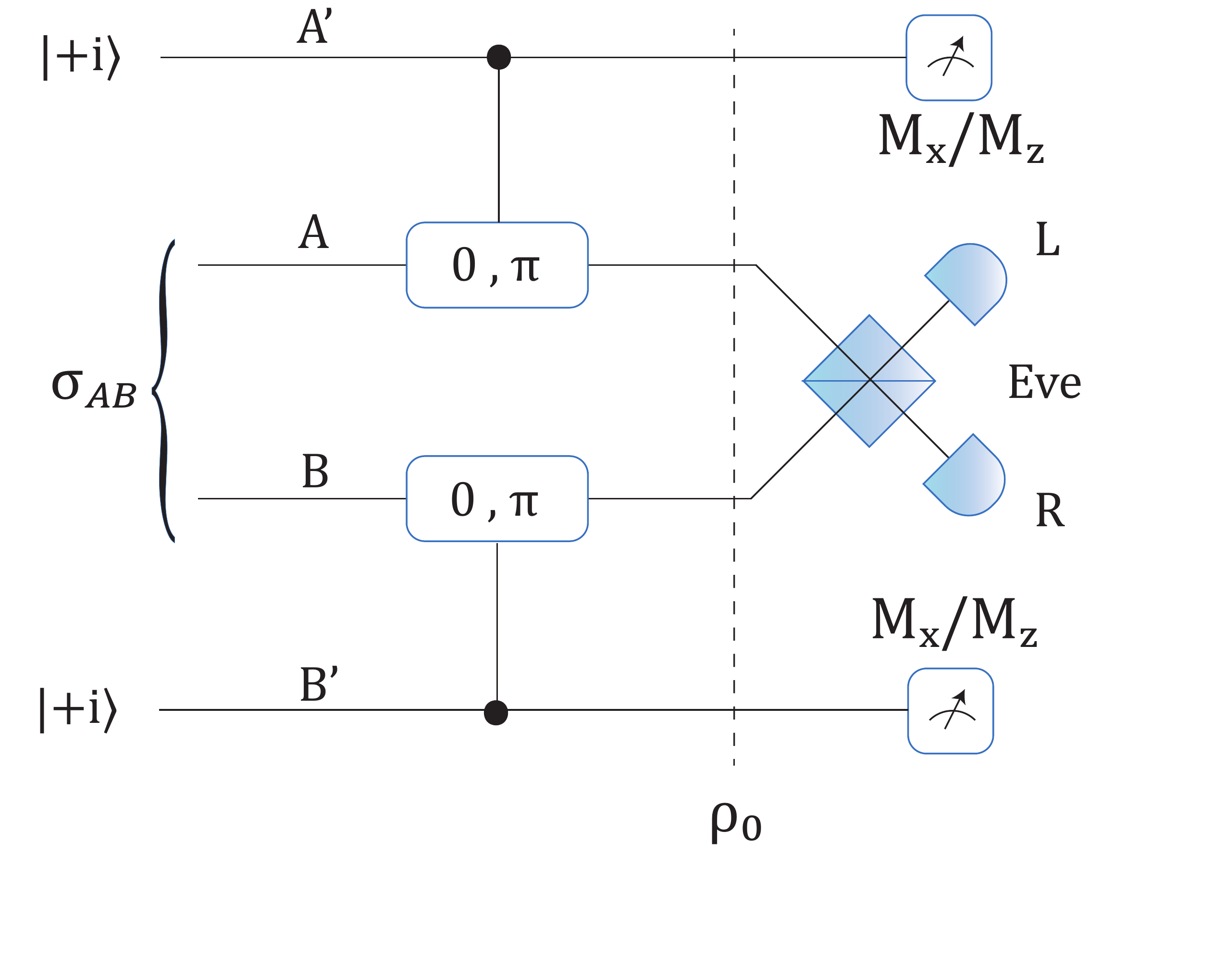}
\caption{Schematic diagram for a general entanglement-based version of symmetric-encoding protocol. When $\sigma_{AB}$ is a parity state, then protocol is perfectly private without any information leakage.} \label{fig:Pro1}
\end{figure}

In that case, the eigenvalue of $U_A$ will be either $1$ or $-1$. Denote the subspaces spanned by the eigenvectors with eigenvalue $1$ and $-1$ to be the \uline{even space} $\mathcal{H}^{even}_A$ and the \uline{odd space} $\mathcal{H}^{odd}_A$. For a joint unitary $U_A \otimes U_B$, the eigenvalue will also be $1$ or $-1$. The even space and the odd space of the joint $A,B$ Hilbert space is
\begin{equation}
\begin{aligned}
\mathcal{H}^{odd}_{AB} &= (\mathcal{H}^{odd}_{A}\otimes \mathcal{H}^{even}_{B})\oplus (\mathcal{H}^{even}_{A}\otimes \mathcal{H}^{odd}_{B}), \\
\mathcal{H}^{even}_{AB} &= (\mathcal{H}^{odd}_{A}\otimes \mathcal{H}^{odd}_{B})\oplus (\mathcal{H}^{even}_{A}\otimes \mathcal{H}^{even}_{B}). \\
\end{aligned}
\end{equation}

\begin{definition} \label{def:oddevenAB}
A state $\rho$ on two qudits $A,B$ is an \uline{odd state} with respect to $U_A\otimes U_B$ iff $\rho\in \mathcal{H}^{odd}_{AB}$ and an \uline{even state} with respect to $U_A\otimes U_B$ iff $\rho\in \mathcal{H}^{even}_{AB}$. Both odd states and even states are called \uline{parity states}.
\end{definition}

\begin{corollary} \label{prop:mixparity}
A state $\rho$ keeps invariant under the transformation of $U_A\otimes U_B$, i.e., $\rho = U_A\otimes U_B \rho U_A^\dag\otimes U_B^\dag$, iff $\rho$ is a mixture of parity states.
\end{corollary}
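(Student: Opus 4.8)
The plan is to exploit that $W \equiv U_A \otimes U_B$ is at once unitary and an involution, and to reduce the claim to a block-decomposition argument in the eigenbasis of $W$. First I would record the elementary fact that $U^2 = U^\dagger U = I$ forces $U^\dagger = U^{-1} = U$, so each of $U_A, U_B$, and hence $W$, is both Hermitian and unitary with $W^2 = I$. Being unitary, $W$ is normal, so its eigenspaces are mutually orthogonal; since every eigenvalue $\lambda$ satisfies $\lambda^2 = 1$, they are exactly $\mathcal{H}^{even}_{AB}$ (eigenvalue $+1$) and $\mathcal{H}^{odd}_{AB}$ (eigenvalue $-1$), and $W$ admits the spectral decomposition $W = P^{even} - P^{odd}$ with orthogonal projectors obeying $P^{even} + P^{odd} = I$ and $P^{even} P^{odd} = 0$.

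For the ``if'' direction, suppose $\rho = p_{even}\rho_{even} + p_{odd}\rho_{odd}$ with $\rho_{even}$ supported on $\mathcal{H}^{even}_{AB}$ and $\rho_{odd}$ on $\mathcal{H}^{odd}_{AB}$. Using $P^{odd}\rho_{even} = \rho_{even}P^{odd} = 0$ one gets $W\rho_{even}W^\dagger = P^{even}\rho_{even}P^{even} = \rho_{even}$, and similarly $W\rho_{odd}W^\dagger = (-P^{odd})\rho_{odd}(-P^{odd}) = \rho_{odd}$; invariance of $\rho$ then follows by linearity. Equivalently, one may decompose $\rho_{even}, \rho_{odd}$ spectrally into pure parity states and note that each pure eigenvector of $W$ picks up a phase $(\pm 1)^2 = 1$ under conjugation, recovering the ``mixture of pure parity states'' form used elsewhere in the paper.

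For the ``only if'' direction, I would sandwich the invariance identity between two resolutions of the identity. Expanding $\rho = W\rho W^\dagger = (P^{even} - P^{odd})\rho(P^{even} - P^{odd})$ and $\rho = (P^{even} + P^{odd})\rho(P^{even} + P^{odd})$ and subtracting gives $P^{even}\rho P^{odd} + P^{odd}\rho P^{even} = 0$; left-multiplying by $P^{even}$ and right-multiplying by $P^{odd}$ annihilates the second term and yields $P^{even}\rho P^{odd} = 0$, and symmetrically $P^{odd}\rho P^{even} = 0$. Hence $\rho = P^{even}\rho P^{even} + P^{odd}\rho P^{odd}$, a sum of two positive semidefinite operators living in the even and odd subspaces respectively. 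Setting $p_{even} = \mathrm{Tr}(P^{even}\rho P^{even})$ and $\rho_{even} = P^{even}\rho P^{even}/p_{even}$ when $p_{even}>0$ (and analogously for the odd part, discarding any null-weight term) exhibits $\rho$ as a mixture of parity states; a spectral decomposition of $\rho_{even}$ and $\rho_{odd}$ refines this into a mixture of pure parity states.

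No serious obstacle is anticipated — the whole argument is finite-dimensional linear algebra. The only points deserving a little care are the reduction $U^\dagger = U$ from $U^2 = U^\dagger U = I$, and the observation that normality of $W$ makes $P^{even}, P^{odd}$ genuine orthogonal projectors, which is what legitimizes splitting $\rho$ cleanly into its diagonal blocks and identifying the block sum with a classical mixture.
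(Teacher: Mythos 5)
Your proof is correct and takes essentially the same route as the paper: both decompose $\rho$ with respect to the $\pm 1$ eigenspaces of $U_A\otimes U_B$ and show that invariance forces the even--odd cross terms to vanish, leaving a block-diagonal (hence convex) combination of parity states. You phrase this basis-free via the spectral projectors $P^{even}, P^{odd}$ and are slightly more explicit about positivity and normalization of the blocks, whereas the paper writes out matrix elements $c_{p,q,i,j}$ in an eigenbasis of $U_A\otimes U_B$; the underlying argument is identical.
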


\begin{proof}
Obviously, the mixture of parity states satisfies the encoding symmtry $\rho = U_A\otimes U_B \rho U_A^\dag\otimes U_B^\dag$. Here we prove all the states $\rho$ with encoding symmtry are mixture of parity states.

Since $(U_A\otimes U_B)^2 = I$, the eigenvalue of it should be either $-1$ or $1$. Denote the eigenbasis of $U_A\otimes U_B$ as $\{\ket{p,i}\}$ with eigenvalues $\{(-1)^p\}$, where $p=0,1$ denotes the even subspace $\mathcal{H}^{even}_{AB}$ or odd subspace $\mathcal{H}^{odd}_{AB}$, while $i$ denotes the inner degeneracy in the even or odd subspace.

We expand $\rho$ in the eigenbasis of $U_A\otimes U_B$,
\begin{equation}
\rho = \sum_{p,q,i,j} c_{p,q,i,j} \ket{p,i}\bra{q,j}.
\end{equation}

Therefore,
\begin{equation}
U_A\otimes U_B \rho U_A^\dag \otimes U_B^\dag = \sum_{p,q,i,j} (-1)^{p-q} c_{p,q,i,j}  \ket{p,i}\bra{q,j},
\end{equation}
and $\rho = U_A\otimes U_B \rho U_A^\dag \otimes U_B^\dag$ implies that
\begin{equation}
c_{p,q,i,j} = 0 \text{ if } p\ne q,
\end{equation}
that is, the off diagonal space between odd and even states is a null space.

Moreover, due to the degeneracy of odd and even subspace, it is alway possible to find an eigenbasis of $U_A\otimes U_B$ as $\{\ket{p,i}\}$ to diagonalize the parity states.
\end{proof}

We will first consider the following entanglement-based QKD protocol, and analyze how the symmetry of the state will preserve QKD privacy.

\textbf{\uline{Protocol I}}
\begin{enumerate}
\item \label{StepPrep}
State preparation: Alice and Bob share a known state $\sigma_{A,B}$ on two qudits $A,B$, at the beginning of each run. They initialize their qubits $A', B'$ in $\ket{+i}$. Alice applies the control gate $C_{A'A}(U)$ to $A'$ and $A$. Similarly, Bob applies $C_{B'B}(U)$ to $B'$ and $B$.


\item \label{StepMeasure}
Measurement: Alice and Bob send the two optical pulses $A$ and $B$ to an untrusted party, Eve, who is supposed to perform joint measurement and announce the detection results. The ideal measurement is to discriminate whether the state is $\sigma_{AB}$ or $(U_A\otimes I) \sigma_{AB} (U_A\otimes I)^\dag$.

\item \label{StepAnnounce}
Announcement: Eve announces the detection result for each round. Based on Eve's announcement, Bob decides whether or not to apply $Y$ operation on his qubit $B'$.

\item \label{StepSifting}
Sifting: Given a specific announcement of Eve, Alice and Bob keep the qubits of systems $A'$ and $B'$.

Alice and Bob perform the above steps for many rounds and end up with a joint $2n$-qubit state $\rho_{A' B'}\in(\mathcal{H}_{A'}\otimes\mathcal{H}_{B'})^{\otimes n}$. 


\item\label{StepClassical}
Key generation: Alice and Bob perform local $Z$ measurements on $\rho_{A' B'}$ to generate raw data string $\kappa_A$ and $\kappa_B$. They reconcile the key string to $\kappa_{rec}$ by an encrypted classical channel, with the consummation of $l_{ec}$-bit keys. Here we set Alice's key as the final reconciled key $\kappa_A = \kappa_{rec}$.
\end{enumerate}


Denote the whole $2n$ state Alice and Bob share after the quantum step as $\rho_{A' B'}$, and the partial-traced state of the $m$-th round as $\rho_{A' B'}^{(m)}$.

\begin{lemma} \label{Lem:parity}
In Protocol I, if the optical state $\sigma$ Alice and Bob share during the $m$-th run is an odd state, then the two-qubit state Alice and Bob finally obtain, $\rho_{A' B'}^{(m)}$, is in the subspace spanned by $\{\Phi^+, \Psi^-\}$; and if $\sigma$ is an even state, then $\rho_{A' B'}^{(m)}$ is in the subspace spanned by $\{\Phi^-, \Psi^+\}$. Here $m\in\{1,2,\dots,n\}$.
\end{lemma}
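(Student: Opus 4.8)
The plan is to compute the state $\rho_{A'B'}^{(m)}$ explicitly starting from the initial product state $\ket{+i}_{A'}\ket{+i}_{B'}\otimes\sigma_{AB}$, track how the controlled-$U$ gates entangle the qubits with the qudits, and then use the parity of $\sigma$ to pin down which Bell subspace survives. First I would write $\ket{+i} = \tfrac{1}{\sqrt 2}(\ket 0 + i\ket 1)$, so that after $C_{A'A}(U)\otimes C_{B'B}(U)$ the joint state becomes
\begin{equation}
\tfrac{1}{2}\sum_{x,y\in\{0,1\}} i^{x+y}\,\ket{x}_{A'}\ket{y}_{B'}\otimes (U_A^{x}\otimes U_B^{y})\sigma_{AB}(U_A^{x}\otimes U_B^{y})^{\dagger}\text{-type terms},
\end{equation}
or more carefully, the corresponding pure-state amplitude picture if one purifies $\sigma$. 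The key structural input is that when $\sigma$ is an odd state, $U_A\otimes U_B$ acts on it as $-I$ (Definition~\ref{def:oddevenAB}), so only the \emph{relative} action of $U_A$ versus $U_B$ matters: $(U_A\otimes I)\sigma(U_A\otimes I)^\dagger = (I\otimes U_B)\sigma(I\otimes U_B)^\dagger$ when restricted appropriately, hence the four branches labeled by $(x,y)$ collapse to two distinct channel inputs, namely those with $x=y$ (giving $\sigma$ back, up to the global sign which is unobservable) and those with $x\neq y$ (giving the ``flipped'' state $(U_A\otimes I)\sigma(U_A\otimes I)^\dagger$).

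Next I would feed this through Eve's measurement and announcement. Since Eve only sees systems $A,B$, her measurement operators act trivially on $A'B'$; conditioned on an announcement, the post-measurement reduced state on $A'B'$ is a (subnormalized) mixture built from the two input states above, with qubit labels $\ket{x}_{A'}\ket{y}_{B'}$ correlated to which input occurred. Because the $x=y$ branch carries labels $\ket{00}$ and $\ket{11}$ with amplitudes $i^{0}=1$ and $i^{2}=-1$, and the $x\neq y$ branch carries $\ket{01},\ket{10}$ with amplitudes $i$ and $i$ (up to Bob's conditional $Y$), the coherences that survive tracing out $A,B$ land exactly in $\mathrm{span}\{\Phi^-,\Psi^+\}$ or $\mathrm{span}\{\Phi^+,\Psi^-\}$ depending on the sign bookkeeping — and Bob's conditional $Y$ in Step~\ref{StepAnnounce} is precisely what is used to rotate between these two cases so that the odd case lands in $\{\Phi^+,\Psi^-\}$. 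For the even case, $U_A\otimes U_B$ acts as $+I$, the global sign is absent, and the same computation shifts everything by one Pauli, giving $\mathrm{span}\{\Phi^-,\Psi^+\}$. I would verify the two sub-claims by checking that the off-diagonal block of $\rho_{A'B'}^{(m)}$ between the ``$x=y$'' and ``$x\neq y$'' sectors vanishes after the partial trace — this follows because the two channel inputs $\sigma$ and $(U_A\otimes I)\sigma(U_A\otimes I)^\dagger$, when $\sigma$ has definite parity, are related by a unitary that is a product of local unitaries commuting appropriately with Eve's operators, so the relevant overlap structure is the same for both and the cross terms cancel by the $i$ vs $-i$ phase pattern.

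The main obstacle I expect is the careful bookkeeping of phases through three stages at once: the $i^{x+y}$ from the $\ket{+i}$ initialization, the $(-1)$ from the odd-parity eigenvalue, and Bob's conditional $Y$ (which multiplies $\ket 0_{B'}\leftrightarrow\ket 1_{B'}$ with its own $\pm i$). Getting these signs consistent is what determines whether ``odd'' maps to $\{\Phi^+,\Psi^-\}$ rather than $\{\Phi^-,\Psi^+\}$, so it is the crux rather than a routine step. A secondary subtlety is justifying that it suffices to analyze a single run $m$ in isolation: this is fine because the controlled-$U$ gates and Eve's per-round announcement act independently across runs and the partial trace over the other $n-1$ rounds commutes with everything, so $\rho_{A'B'}^{(m)}$ genuinely depends only on the $m$-th copy of $\sigma$. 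Once the phase bookkeeping is settled, the subspace membership is immediate from the explicit form of $\rho_{A'B'}^{(m)}$.
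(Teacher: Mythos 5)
Your overall strategy --- expand $\ket{+i}_{A'}\ket{+i}_{B'}$, push the state through the controlled-$U$ gates, and use the parity eigenvalue to fix the Bell-basis structure --- is exactly the paper's route, and the computation you outline would work if carried through. But two of the specific claims you lean on are wrong and would derail the proof if followed literally. First, the $(-1)$ picked up by an odd state under $U_A\otimes U_B$ is not ``a global sign which is unobservable'': it is a \emph{relative} phase between the $\ket{11}_{A'B'}$ branch and the $\ket{00}_{A'B'}$ branch, and it is the entire content of the lemma --- it converts the initial $(\ket{00}-\ket{11})$ combination into $(\ket{00}+\ket{11})\propto\ket{\Phi^+}$ for odd inputs while leaving $\ket{\Phi^-}$ for even inputs. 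Relatedly, Bob's conditional $Y$ plays no role in selecting the subspace: $I\otimes Y_{B'}$ maps $\ket{\Phi^+}\mapsto\ket{\Psi^-}$ and $\ket{\Phi^-}\mapsto\ket{\Psi^+}$ (up to phase), so it \emph{preserves} each of $\mathrm{span}\{\Phi^+,\Psi^-\}$ and $\mathrm{span}\{\Phi^-,\Psi^+\}$ and cannot ``rotate between these two cases.'' The odd case lands in $\mathrm{span}\{\Phi^+,\Psi^-\}$ already at the output of the controlled gates, before Eve's announcement or the conditional $Y$ enter.

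Second, your verification criterion checks the wrong block. Both $\ket{\Phi^\pm}$ live in the $x=y$ sector $\mathrm{span}\{\ket{00},\ket{11}\}$ and both $\ket{\Psi^\pm}$ live in the $x\neq y$ sector, so vanishing of the off-diagonal block between those two sectors is neither necessary nor sufficient for support on $\mathrm{span}\{\Phi^+,\Psi^-\}$; in fact that cross block is generically nonzero, since for a pure odd input the post-gate state is proportional to $\ket{\Phi^+}\ket{\psi_o}+i\ket{\Psi^-}(U_A\otimes I)\ket{\psi_o}$, whose reduced state on $A'B'$ carries a $\ket{\Phi^+}\bra{\Psi^-}$ coherence proportional to $\bra{\psi_o}(U_A\otimes I)^\dagger\ket{\psi_o}$. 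What you actually need is that the $\Phi^-$ component within the $x=y$ sector and the $\Psi^+$ component within the $x\neq y$ sector both vanish, and these follow from the two identities $U_A\otimes U_B\ket{\psi_o}=-\ket{\psi_o}$ and $U_A\otimes I_B\ket{\psi_o}=-I_A\otimes U_B\ket{\psi_o}$ (with $+$ signs in the even case). Once the post-gate state is written in the two-term form above, subspace membership survives any operation of Eve on $A,B$, the partial trace, and the conditional $Y$; the mixed-state case then follows by convexity over a decomposition into pure parity states, exactly as in the paper.
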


\begin{proof}

First consider the pure state $\sigma = \ket{\psi}\bra{\psi}$. The joint state on system $A',B',A,B$ before the $C(\pi)$ operations is
\begin{equation}
\ket{+i+i}_{A'B'}\ket{\psi}_{AB} = \dfrac{1}{2} [(\ket{00} - \ket{11}) + i (\ket{01} + \ket{10}) ]_{A'B'} \ket{\psi}_{AB}.
\end{equation}
Note that for odd state $\ket{\psi_o}_{AB}$,
\begin{equation}
\begin{aligned}
U_{A} \otimes U_{B} \ket{\psi_o}_{AB} &=  - \ket{\psi_o}_{AB}, \\
U_{A} \otimes I_B \ket{\psi_o}_{AB} &=  - I_A \otimes U_{B} \ket{\psi_o}_{AB},
\end{aligned}
\end{equation}
and for even state $\ket{\psi_e}_{AB}$,
\begin{equation}
\begin{aligned}
U_{A} \otimes U_{B} \ket{\psi_e}_{AB} &= \ket{\psi_e}_{AB}, \\
U_{A} \otimes I_B \ket{\psi_e}_{AB} &= I_A \otimes U_{B}\ket{\psi_e}_{AB},
\end{aligned}
\end{equation}
Therefore, for odd pure state $\ket{\psi_o}$ input, the state after the $C_{A'A}(U)\otimes C_{B'B}(U)$ operations is
\begin{widetext}
\begin{equation}
\begin{aligned}
\ket{\Psi_o} &= C_{A'A}(U) \otimes C_{B'B}(U) (\dfrac{1}{2} [(\ket{00} - \ket{11}) + i (\ket{01} + \ket{10}) ]_{A'B'} \ket{\psi_o}_{AB}) \\
= & \dfrac{1}{2} \left\{ [\ket{00}_{A'B'} - \ket{11}_{A'B'}(U_{A} \otimes U_{B})] + i [\ket{01}_{A'B'}( U_{A} \otimes I_B ) + \ket{10}_{A'B'}( I_A \otimes U_{B}(\pi) ] \right\} \ket{\psi_o}_{AB} \\
= & \dfrac{1}{2} [(\ket{00} + \ket{11})_{A'B'} \ket{\psi_o}_{AB} + i (\ket{01} - \ket{10})_{A'B'} ( U_{A} \otimes I_B )\ket{\psi_o}_{AB} ],
\end{aligned}
\end{equation}
\end{widetext}
and for even pure state $\ket{\psi_e}$ input, the state after the $C_{A'A}(U)\otimes C_{B'B}(U)$ operations is
\begin{widetext}
\begin{equation}
\begin{aligned}
\ket{\Psi_e} &= C_{A'A}(U) \otimes C_{B'B}(U) (\dfrac{1}{2} [(\ket{00} - \ket{11}) + i (\ket{01} + \ket{10}) ]_{A'B'} \ket{\psi_e}_{AB}) \\
= & \dfrac{1}{2} \left\{ [\ket{00}_{A'B'} - \ket{11}_{A'B'}(U_{A} \otimes U_{B})] + i [\ket{01}_{A'B'}( U_{A} \otimes I_B ) + \ket{10}_{A'B'}( I_A \otimes U_{B}(\pi) ] \right\} \ket{\psi_e}_{AB} \\
= & \dfrac{1}{2} [(\ket{00} - \ket{11})_{A'B'} \ket{\psi_e}_{AB} + i (\ket{01} + \ket{10})_{A'B'} ( U_{A} \otimes I_B )\ket{\psi_e}_{AB} ],
\end{aligned}
\end{equation}
\end{widetext}

For the odd state case, if we partially trace out system $A,B$ in $\ket{\Psi_o}$, the state $\rho_{A',B'}^{(m)}$ is in the subspace of $\{\Phi^+, \Psi^-\}$. Whatever Eve's announcement afterward is, the possible operation on $\rho_{A',B'}^{(m)}$ is either $I_{A'}\otimes I_{B'}$ or $I_{A'}\otimes Y_{B'}$. Note that $(I_{A'}\otimes Y_{B'})\ket{\Phi^+} = \ket{\Psi^-}$, hence the state $\rho_{A',B'}^{(m)}$ is still in the subspace of $\{\Phi^+, \Psi^-\}$. Similarly, for the even state case, the state $\rho_{A',B'}^{(m)}$ is in the subspace of $\{\Phi^-, \Psi^+\}$.

For general mixed states, we can regard them as mixtures of pure parity states,
\begin{equation}
\begin{aligned}
\sigma_{odd} &= \sum_{i} p_i \ket{\psi_{o}^{(i)}}\bra{\psi_{o}^{(i)}}, \\
\sigma_{even} &= \sum_{i} p_i \ket{\psi_{e}^{(i)}}\bra{\psi_{e}^{(i)}}.
\end{aligned}
\end{equation}
This is equivalent to Charlie sending out $\ket{\psi_{o(e)}^{(i)}}$ with probability $p_i$. For each odd pure state component, the left qubit state $\rho_{A',B'}^{(m)}$ is in the subspace of $\{\Phi^+, \Psi^-\}$, therefore their mixture are still in this subspace. Similar arguments hold for the even parity states.
\end{proof}

If the state $\rho_{A',B'}^{(m)}$ is in the subspace of $\{\Phi^+, \Psi^-\}$, then the measurement value of $X\otimes X$ and $Z\otimes Z$ are always the same; on the contrary, if $\rho_{A',B'}^{(m)}$ is in the subspace of $\{\Phi^-, \Psi^+\}$, then the measurement value of $X\otimes X$ and $Z\otimes Z$ are always different. From Shor-Preskill's view, when we detect the $Z$-basis error location, we can apply $I_{A'}\otimes Y_{B'}$ operation to correct the $X$-basis error and $Z$-basis error simultaneously, thus the error correction can be taken only once.

Following the method in Ref.~\cite{koashi2009simple}, we construct a virtual protocol, named Protocol Ia, and there is a freedom to choose the definition of phase error. The information leakage can be characterized by the phase error with an optimized definition.

\textbf{\uline{Protocol Ia}}
\begin{enumerate}
\item With the same manner, Alice and Bob perform the steps \ref{StepPrep}$\sim$\ref{StepSifting} in Protocol I for many rounds and end up with a joint $2n$-qubit state $\rho_{A'B'}$.


\item 
Key generation: Alice and Bob first perform a control-$(-Z)$ gate on each round of state $\{\rho_{A' B'}^{(m)}\}_{m=1}^{n}$. After that, Bob perform local $X$ measurements on system $B'$ with measurement outcome $\gamma_{B'}$, and Alice perform local $Z$ measurements to obtain the final key $\kappa_{rec}$.
\end{enumerate}

Denote the control-$(-Z)$ gate as $C(-Z)_{A',B'}$, then
\begin{equation}
C(-Z)_{A',B'} = \ket{0}_{A'}\bra{0} \otimes I + \ket{1}_{A'} \bra{1} \otimes (-Z)
\end{equation}

Note that
\begin{equation}
\begin{aligned}
C(-Z)_{A',B'} \ket{\Phi^+} = \ket{\Phi^+}, \quad C(-Z)_{A',B'} \ket{\Phi^-} = \ket{\Phi^-}, \\
C(-Z)_{A',B'} \ket{\Psi^+} = \ket{\Psi^-}, \quad C(-Z)_{A',B'} \ket{\Psi^-} = \ket{\Psi^+}.
\end{aligned}
\end{equation}

Since control-$(-Z)$ gate commutes with $Z$-measurement on system $A'$, applying control-$(-Z)$ gate will note affect the $Z$-measurement outcome. Therefore, Alice's $Z$-measurement results $\kappa_{rec}$ should be the same as the one in Protocol I, and the information leakage of $\kappa_{rec}$ to Eve is the same.

If $\sigma$ is an odd state, then $\rho^{(m)}_{A',B'}$ is in the subspace spanned by $\{\Phi^+, \Psi^-\}$. After Alice and Bob apply $C(-Z)_{A',B'}$ gate, $\rho^{(m)}_{A',B'}$ is in the subspace spanned by $\{\Phi^+, \Psi^+\}$. Note that
\begin{equation}
(X_{A'} \otimes X_{B'}) \ket{\Phi^+} = (X_{A'} \otimes X_{B'}) \ket{\Psi^+} = 1,
\end{equation}
therefore, after the $C(-Z)_{A',B'}$ gate, the $X$-basis measurement result on system $A'$ and $B'$ should always be the same. If we set the outcome $\gamma_{B'}$ as the correct $X$-basis measurement result of $B'$, then the phase error number is $0$.

Similarly, if $\sigma$ is a even state, then $\rho^{(m)}_{A',B'}$ is in the subspace spanned by $\{\Phi^-, \Psi^+\}$. After Alice and Bob apply $C(-Z)_{A',B'}$ gate, $\rho^{(m)}_{A',B'}$ is in the subspace spanned by $\{\Phi^+, \Psi^+\}$. Note that
\begin{equation}
(X_{A'} \otimes X_{B'}) \ket{\Phi^-} = (X_{A'} \otimes X_{B'}) \ket{\Psi^-} = -1,
\end{equation}
therefore, after the $C(-Z)_{A',B'}$ gate, the $X$-basis measurement results on system $A'$ and $B'$ should always be different. If we set $-\gamma_{B'}$ as the correct $X$-basis measurement result of $A'$, then the phase error number is $0$.

\begin{definition}
In Protocol Ia, define the observed value $\frac{1}{2}(1 - \left<X_{A'} \otimes X_{B'}\right>)$ to be the \emph{odd phase error rate} $E^{X}_o$, and $\frac{1}{2}(1 + \left<X_{A'} \otimes X_{B'}\right>)$ to be the \emph{even phase error rate} $E^{X}_e$.
\end{definition}

Note that $E^{X}_e = 1 - E^{X}_o$. We have the following observation,

\begin{observation} \label{ob:0error}
In Protocol Ia, for an odd(even) state input $\sigma$, the odd(even) phase error rate is always $0$. That is, if Alice measures system $A'$ on $X$-basis instead of $Z$-basis, then the measurement results is certain, given Bob's measurement outcome $\gamma_{B'}$ on system $B'$.
\end{observation}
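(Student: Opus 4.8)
The plan is to obtain Observation~\ref{ob:0error} as an essentially immediate consequence of Lemma~\ref{Lem:parity} and the action of the control-$(-Z)$ gate on the Bell basis; the quantum content is already packaged in Lemma~\ref{Lem:parity}, and what is left is to identify the relevant eigenspace of $X_{A'}\otimes X_{B'}$ and to translate ``definite eigenvalue'' into ``conditionally certain outcome''.

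Concretely, I would first fix a round $m$ and assume the shared optical state $\sigma$ of that round is odd. Lemma~\ref{Lem:parity} gives that the reduced two-qubit state $\rho_{A'B'}^{(m)}$ after the measurement, announcement, and sifting steps of Protocol~I is supported on $\mathrm{span}\{\ket{\Phi^+},\ket{\Psi^-}\}$; this is a statement about the marginal of round $m$, so it is insensitive to Eve's announcement and to any correlations Eve builds between rounds. In Protocol~Ia, Alice and Bob additionally apply $C(-Z)_{A'B'}$ before the final measurements. Using the identities for $C(-Z)$ on Bell states recorded just above the observation, $C(-Z)$ fixes $\ket{\Phi^+}$ and sends $\ket{\Psi^-}\mapsto\ket{\Psi^+}$, so the post-gate state lies on $\mathrm{span}\{\ket{\Phi^+},\ket{\Psi^+}\}$, which is exactly the $+1$ eigenspace of $X_{A'}\otimes X_{B'}$. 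Hence $\langle X_{A'}\otimes X_{B'}\rangle=+1$ on $\rho_{A'B'}^{(m)}$, and the odd phase error contribution of that round, $\tfrac12\bigl(1-\langle X_{A'}\otimes X_{B'}\rangle\bigr)$, is zero. For an even $\sigma$ the argument is mirrored: Lemma~\ref{Lem:parity} places $\rho_{A'B'}^{(m)}$ on $\mathrm{span}\{\ket{\Phi^-},\ket{\Psi^+}\}$, $C(-Z)$ maps this to $\mathrm{span}\{\ket{\Phi^-},\ket{\Psi^-}\}$, the $-1$ eigenspace of $X_{A'}\otimes X_{B'}$, so $\langle X_{A'}\otimes X_{B'}\rangle=-1$ and the even phase error contribution $\tfrac12\bigl(1+\langle X_{A'}\otimes X_{B'}\rangle\bigr)$ vanishes. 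Summing over the $n$ rounds then yields $E^X_o=0$ when every round is fed an odd state, and $E^X_e=0$ when every round is fed an even state.

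For the equivalent certainty phrasing, I would argue as follows. If $\rho_{A'B'}^{(m)}$ has $X_{A'}\otimes X_{B'}$ equal to a fixed value $s\in\{+1,-1\}$, then after Bob performs the $X$-basis measurement on $B'$ and records $\gamma_{B'}\in\{+1,-1\}$ the post-measurement state is simultaneously an eigenstate of $X_{B'}$ with eigenvalue $\gamma_{B'}$ and of $X_{A'}\otimes X_{B'}$ with eigenvalue $s$ (the two observables commute, so Bob's measurement does not disturb the joint $X\otimes X$ parity), hence an eigenstate of $X_{A'}\otimes I_{B'}=(X_{A'}\otimes X_{B'})(I_{A'}\otimes X_{B'})$ with eigenvalue $s\gamma_{B'}$. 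Therefore Alice's $X$-basis outcome on $A'$ is deterministic given $\gamma_{B'}$ --- namely $\gamma_{B'}$ in the odd case and $-\gamma_{B'}$ in the even case --- which is precisely the predictability of the $X$-basis data required to invoke Lemma~\ref{lem:Koashi}. It suffices to check this on pure states, since a mixed state supported on an eigenspace of $X_{A'}\otimes X_{B'}$ belonging to a single eigenvalue is a mixture of pure states from that eigenspace.

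I do not expect a real obstacle here, as the statement is a corollary of Lemma~\ref{Lem:parity}. The point I would be most careful to state cleanly is that Lemma~\ref{Lem:parity} is applied at the level of the single-round marginal $\rho_{A'B'}^{(m)}$, so no independence assumption on Eve is used, and that $C(-Z)_{A'B'}$ commutes with the $Z$-basis measurement on $A'$, so the reconciled key $\kappa_{rec}$ and Eve's side information about it are identical in Protocols~I and~Ia; without this last remark, a vanishing phase error rate in Protocol~Ia would not say anything about the secrecy of the key actually generated in Protocol~I.
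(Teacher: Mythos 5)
Your proposal is correct and follows essentially the same route as the paper: invoke Lemma~\ref{Lem:parity} for the single-round marginal, track the Bell-basis support through the $C(-Z)_{A'B'}$ gate, and identify the image with an eigenspace of $X_{A'}\otimes X_{B'}$ to conclude the corresponding phase error rate vanishes. You even silently correct a typo in the paper's even-state case (the post-gate support is $\mathrm{span}\{\ket{\Phi^-},\ket{\Psi^-}\}$, not $\{\ket{\Phi^+},\ket{\Psi^+}\}$), and your closing remarks on the commutation of $C(-Z)$ with Alice's $Z$-measurement match the paper's own justification.
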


In the phase-error-based proof, we may choose the definition of phase error $E^X$ as long as the requirements in Lemma \ref{lem:Koashi} hold. Therefore, the odd states and even states $\sigma$ are both be the perfect source for Protocol I, with perfect privacy.

\begin{figure*}[htbp]
\includegraphics[width=12cm]{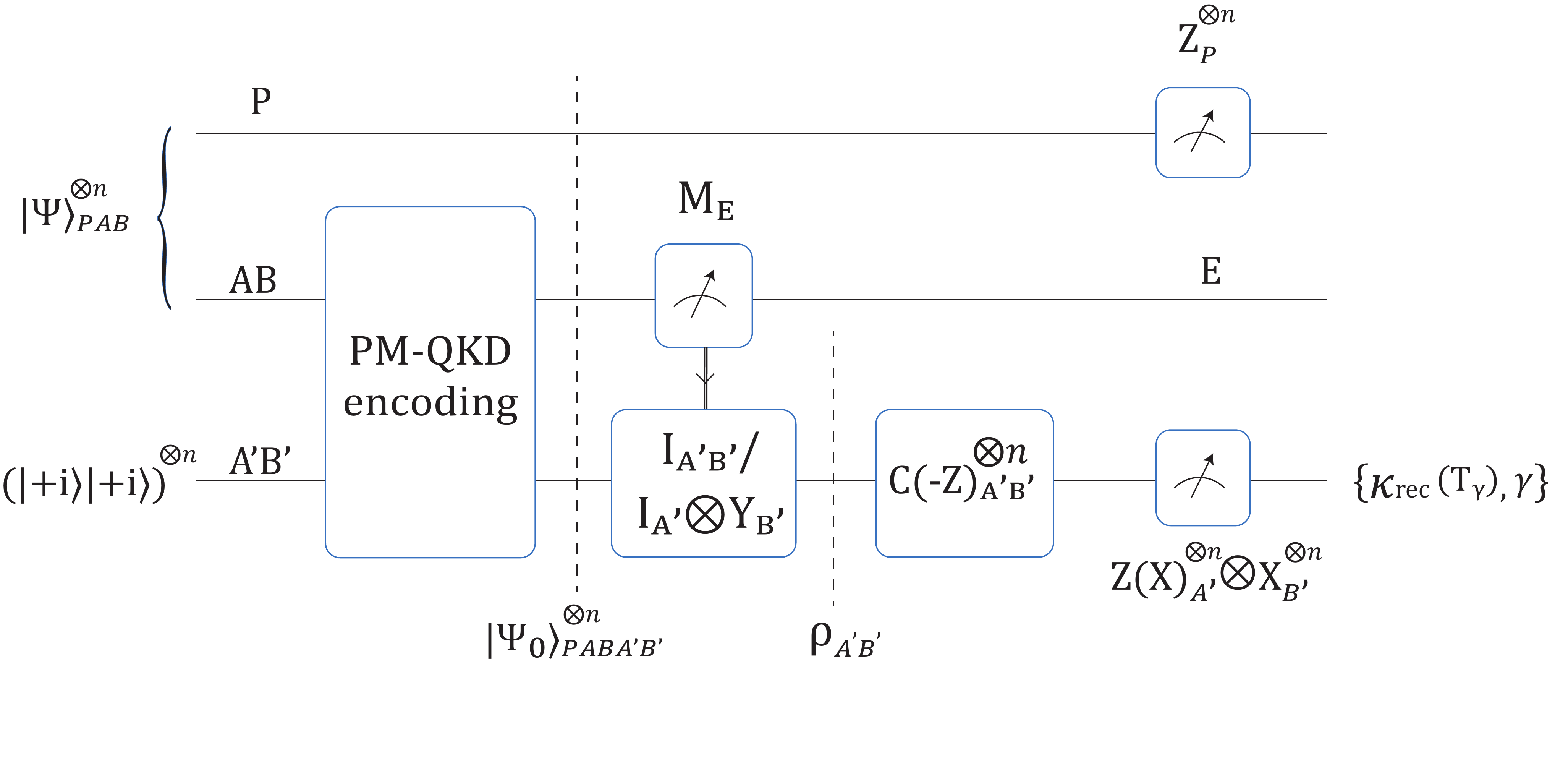}
\caption{Schematic diagram for the whole Protocol Ia/Ib when the input state is a general mixed parity state $\sigma_{AB}$. Here we post-select the rounds when Eve announces effective clicks. Alice and Bob prepare $n$-rounds of state $\ket{\Psi}_{PAB}$ in Eq.~\eqref{eq:PAB}. They performs the PM-QKD encoding by applying $C(U)$ gates on system $A',A$ and $B',B$, as is illustrated in Fig.~\ref{fig:Pro1}. After that, they send system $A,B$ to Eve, and performs $I_{A'B'}$ or $I_{A'}\otimes Y_{B'}$ operation on each round of system $A',B'$. They apply $C(-Z)$ gates on $A',B'$, and measure them by $Z_{A'}\otimes X_{B'}$ to get strings $\kappa_{rec}, \gamma$. In Protocol Ib, after Eve's announcement, they send system $P$ to Eve. In phase-error-based proof, Eve's knowledge on $\kappa_{rec}$ can be characterized by Alice's knowledge on her $X$-basis measurement result $T_\gamma$ on $A'$.} \label{fig:PAB}
\end{figure*}

The case will become nontrivial when Alice and Bob hold a mixture of odd and even states. The overall phase error rate cannot be zero, regardless of whether odd or even phase error definition is applied.

We write the state $\sigma_{AB}$ shared by Alice and Bob as
\begin{equation}
\sigma_{AB} = \sum_{i} p_o^{(i)} \ket{\psi_o^{(i)}}_{AB}\bra{\psi_o^{(i)}} + \sum_{j} p_e^{(j)} \ket{\psi_e^{(j)}}_{AB}\bra{\psi_e^{(j)}},
\end{equation}
with $\sum_{i}p_o^{(i)} + \sum_{j}p_e^{(j)} =1$. Introduce the purification of $\sigma$,
\begin{widetext}
\begin{equation} \label{eq:PAB}
\ket{\Psi}_{PAB} = \sum_{i} \sqrt{p_o^{(i)}} \ket{o,i}_{P}\ket{\psi_o^{(i)}}_{AB} + \sum_{j} \sqrt{p_e^{(j)}} \ket{e,j}_{P}\ket{\psi_e^{(j)}}_{AB},
\end{equation}
\end{widetext}
here system $P$ is a register to store the parity information. Suppose $\{\ket{p,i}_P\}$ is an orthogonal basis such that $\braket{p,i|q,j} = \delta_{pq} \delta_{ij}$, with $p=o, e$ denotes the parity of the according state, and $i$ denotes the index of odd/even pure states. Such basis is defined as $Z$-basis of system $P$.

Here we consider two cases, where Eve's knowledge on $\sigma_{AB}$ are different. First, we consider the case when Eve's state is decoupled from $\sigma_{AB}$ at the beginning, that is, $\rho_{ABE} = \sigma_{AB}\otimes \rho_E$. That is to say, Eve doesn't hold the purified system $P$, and we can virtually imagine that system $P$ is hold by Alice and Bob. In Fig.~\ref{fig:PAB} we draw the whole Protocol Ia with the ancillary $P$ presents. In Koashi's proof, when we characterize Eve's knowledge on Alice's final $Z$-basis measurement results on $A'$ $\kappa_{rec}$, we transform the problem to \emph{how Alice can predict the $X$-basis measurement result $\gamma$ in the presence of system $B'$ and $P$}. In this case, Alice can learn the parity information of each single round from the measurement result of $P$, where Alice can flip her X-basis measurement result on $\mathcal{K}$ to match $\gamma$ if the parity is not accordance with the definition choice of phase error. Therefore, by Observation \ref{ob:0error} above, Alice can perfectly predict the $X$-basis measurement result. Applying Lemma \ref{lem:Koashi}, we prove the perfect privacy of parity states.

\begin{theorem} \label{Ob:parity0}
In Protocol I and Ia, if Alice and Bob share a mixture of odd and even state $\sigma_{AB}$ at the beginning of each run, and Eve's state $\rho_E$ is decoupled from $\sigma_{AB}$, i.e., $\rho_{ABE} = \sigma_{AB} \otimes \rho_E$, then Eve has no information on Alice's $Z$-basis measurement result $\kappa_{rec}$, i.e. perfect privacy.
\end{theorem}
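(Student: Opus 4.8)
The plan is to run the phase-error-correction framework of Lemma~\ref{lem:Koashi}, exploiting the freedom it grants in choosing the squashing map $\Lambda$ and the ancillary measurement $M_R$, and feeding in the single-round structural facts of Lemma~\ref{Lem:parity} and Observation~\ref{ob:0error}. The only genuinely new ingredient is a reduction that uses the hypothesis: since $\rho_{ABE}=\sigma_{AB}\otimes\rho_E$, Eve's register is in a product state with $AB$, so \emph{some} purification of $\sigma_{AB}$ is disjoint from Eve. In particular we may take the purification to be $\ket{\Psi}_{PAB}$ of Eq.~\eqref{eq:PAB}, with the parity register $P$ placed \emph{inside} Alice and Bob's laboratory. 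Replacing, in each round, the mixed source $\sigma_{AB}$ by this purification and letting Alice retain $P$ leaves Eve's reduced state and the joint statistics of $(\kappa_{rec},\text{Eve})$ unchanged, since tracing out $P$ returns $\sigma_{AB}$ and Eve never touches $P$ (this is the situation drawn in Fig.~\ref{fig:PAB} with $P$ held by Alice and Bob). Hence it suffices to establish perfect privacy of $\kappa_{rec}$ in this purified variant of Protocol Ia.

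Second, I would cast the purified protocol into the virtual-protocol template preceding Lemma~\ref{lem:Koashi}. Take the key space $\mathcal{K}^{\otimes n}=A'^{\otimes n}$ and the ancilla $\mathcal{H}_R=(B'\otimes P)^{\otimes n}$; let $\Lambda$ be the per-round controlled-$(-Z)$ gates $C(-Z)_{A'B'}$ of Protocol Ia (and nothing more), and let $M_R$ measure each copy of $P$ in its parity ($Z$) basis $\{\ket{p,i}_P\}$ and each copy of $B'$ in the $X$ basis, producing an outcome $\gamma=(\vec p,\vec\gamma_{B'})$. Requirement~1 of Lemma~\ref{lem:Koashi} is immediate: $C(-Z)_{A'B'}$ commutes with $Z_{A'}$, and $M_R$ acts only on $B',P$, so the $Z$-basis statistics on $\mathcal{K}^{\otimes n}$ reproduce $\kappa_{rec}$ of the actual protocol.

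Third, I would verify Requirement~2 with the choice of phase-error definition dictated by $P$, obtaining $\xi=0$ and $\epsilon_T=0$. Conditioned on the parity outcome $p_m$ of the $m$-th copy of $P$, the purification collapses round $m$ on $AB$ to a pure odd parity state (if $p_m=o$) or pure even parity state (if $p_m=e$), so Lemma~\ref{Lem:parity} applies round by round: after the $C(U)$ encoding, Eve's announcement, Bob's conditional $Y_{B'}$, and the $C(-Z)_{A'B'}$ gate, $\rho^{(m)}_{A'B'}$ lies in $\{\Phi^+,\Psi^+\}$ when $p_m=o$ and in $\{\Phi^-,\Psi^-\}$ when $p_m=e$. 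Since $X_{A'}\otimes X_{B'}=+1$ on the former pair and $=-1$ on the latter, the $X$-basis outcome on $A'$ in round $m$ is a deterministic function of $(p_m,\gamma_{B'}^{(m)})$: it equals $\gamma_{B'}^{(m)}$ when $p_m=o$ and its flip when $p_m=e$ — which is exactly Observation~\ref{ob:0error} once the parity is revealed by $P$. Thus for every $\gamma$ the predicted $X$-string $T_\gamma$ is unique, $|T_\gamma|=1=2^{n\cdot 0}$ with zero failure probability, i.e.\ the phase error rate is $0$. Plugging $\xi=0$, $\epsilon_T=0$ into Lemma~\ref{lem:Koashi} gives secrecy parameter $\sqrt{2^{-n\zeta}}$ and $m=n\zeta$ sacrificed bits for any $\zeta>0$; choosing $\zeta=\zeta(n)\to0$ with $n\zeta(n)\to\infty$ drives both the consumed-key fraction $m/n$ and the secrecy error to zero, so no privacy amplification is needed asymptotically and Eve has no information on $\kappa_{rec}$, which is the claimed perfect privacy.

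The point requiring the most care — and the only real obstacle — is the first step, legitimizing the relocation of $P$ into Alice's lab. It is valid \emph{precisely because} of the decoupling hypothesis $\rho_{ABE}=\sigma_{AB}\otimes\rho_E$: were $\sigma_{AB}$ instead part of a larger state purified by Eve, she could herself measure $P$ in the parity basis, the conditional per-round states would cease to be parity eigenstates from her perspective, and the $X$-predictability argument would break (this is the second case treated separately in the paper). A minor secondary check is that Lemma~\ref{Lem:parity}, stated for a single round with a definite pure parity input, lifts to the $n$-round, post-selected, purified setting; this is routine since the source is i.i.d.\ across rounds, $P$ is untouched by Eve, and the $X$-predictability is established round-by-round conditioned on $\vec p$.
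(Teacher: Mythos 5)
Your proposal is correct and follows essentially the same route as the paper: since $\rho_{ABE}=\sigma_{AB}\otimes\rho_E$, the parity-purifying register $P$ of Eq.~\eqref{eq:PAB} can be kept in Alice and Bob's lab, its $Z$-basis measurement reveals the per-round parity, and then Lemma~\ref{Lem:parity} together with Observation~\ref{ob:0error} makes the $X$-basis outcome on $A'$ a deterministic function of $(p_m,\gamma_{B'}^{(m)})$, so $|T_\gamma|=1$ and Lemma~\ref{lem:Koashi} yields perfect privacy. Your write-up merely makes explicit the choices of $\Lambda$ and $M_R$ and the Bell-subspace bookkeeping after the $C(-Z)$ gate that the paper leaves implicit.
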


Now we consider the case when Eve holds the purification of $\sigma_{AB}$. We consider the following protocol.

\textbf{\uline{Protocol Ib}}
\begin{enumerate}
\item With the same manner, Alice and Bob perform the steps \ref{StepPrep}$\sim$\ref{StepSifting} in Protocol I for many rounds and end up with a joint $2n$-qubit state $\rho_{A'B'}$. \emph{After that, Alice and Bob send the purified system $P$ of each run to Eve}.


\item 
Key generation: Alice and Bob first perform a control-$(-Z)$ gate on each round of state $\{\rho_{A' B'}^{(m)}\}_{m=1}^{n}$. After that, Bob perform local $X$ measurements on system $B'$ with measurement outcome $\gamma_{B'}$, and Alice perform local $Z$ measurements to obtain the final key $\kappa_{rec}$.
\end{enumerate}

In Protocol Ib, a weaker version of Protocol I, Alice and Bob no longer holds the system $P$, which makes them unable to discriminate the parity information, and their ability to predict $T_\gamma$ is weakened. However, an important observation is that, due to system $P$ is sent to Eve after her announcement, the announcement result must be independent of system $P$, in which case, for a fixed Eve's measurement strategy, the state $\rho_{A'B'}$ in Protocol Ia and Ib will be the same.

In another aspect, without the assistance of parity information in $P$, Alice and Bob cannot deal with the odd and even rounds separately, and there is no longer perfect privacy. Alice's knowledge on $T_\gamma$ given $\gamma$ is characterized by the overall odd or even phase error. For the clicked $n$ rounds, if we measure the system $P$ on $Z$-basis, and there are $n_{even}$ rounds with even parity measurement results. Then the overall odd phase error rate is
\begin{equation} \label{eq:EXo}
E^X_o = \dfrac{n_{even}}{n}.
\end{equation}

From now on, We define the phase error $E^{ph}$ as the odd phase error rate $E^X_o$. As an expense to realize parity states by independent phase randomization with phase announcement, we cannot discriminate even and odd components any more, and should use a unified phase error definition, while Eve's announcement strategy is unchanged, and the local state $\rho_{A'B'}$ also keeps unchanged, which indicates that Protocol Ia and Protocol Ib are the same except in parity discrimination.

\section{Security proof of PM-QKD} \label{Sc:SecurePM}
Here we present security proof of PM-QKD, by reducing it to the symmetric-encoding QKD mentioned above. Especially, we use the decoy method to monitor the phase error number $n^{EX}$ in real experimental setting to determine how much proportion of raws keys should be sacrificed to enhance the total privacy. First, we introduce the notation used in the proof.

The qudit system $A$ considered in Appendix \ref{Sc:Proof} is on an optical mode, whose creation operator is $a^\dag$ and Hilbert space is denoted as $\mathcal{H}^A$. A $k$-photon Fock state $\ket{k}_A$ is defined as
\begin{equation} \label{eq:defFock}
\ket{k}_A \equiv \dfrac{(a^\dag)^k}{\sqrt{k!}} \ket{0}_A,
\end{equation}
where $\ket{0}_A$ is the vacuum state. A coherent state $\ket{\alpha}_A$ is defined as
\begin{equation}
\ket{\alpha}_A \equiv e^{-\frac{1}{2}|\alpha|^2} \sum_{k=0}^{\infty} \dfrac{\alpha^k}{\sqrt{k!}} \ket{k}_A
\end{equation}
The photon number of $\ket{\alpha}_A$ follows a Poisson distribution $P_\mu(k) = e^{-\mu}\mu^k/k!$, where $\mu = |\alpha|^2$ is the mean photon number, also the light intensity. In the following proof, we specifically select
\begin{equation}
U=U_A=U_B=e^{i\pi a^\dag a}
\end{equation}
, which satisfies the condition $U^2=I$. When applied on a Fock state $\ket{n}$, this operation adds an additional phase $(-1)^n$, which has no effect on even-photon Fock states, while changes the phase of odd-photon Fock states. As a result, we have the folowing corollary.

\begin{corollary}
A state $\rho$ on two modes $A,B$ is an odd (even) state with respect to $U_A\otimes U_B$ iff $\hat{\Pi}_{o} \rho \hat{\Pi}_{o} = \rho$ ($\hat{\Pi}_{e} \rho \hat{\Pi}_{e} = \rho$). $\hat{\Pi}_{o}$ and $\hat{\Pi}_{e}$ are the projectors of the odd and even subspace respectively, which are defined as
\begin{equation}
\begin{aligned}
\hat{\Pi}_{o} &= \sum_{m + n: odd} \ket{m,n}_{AB} \bra{m,n}, \\
\hat{\Pi}_{e} &= \sum_{m + n: even} \ket{m,n}_{AB} \bra{m,n}, \\
\end{aligned}
\end{equation}
 where $\{\ket{m,n}_{A,B}\}_{m,n}$ are the Fock basis on optical modes $A$ and $B$.
\end{corollary}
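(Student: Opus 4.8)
The plan is to reduce the corollary to two elementary facts: an explicit identification of the parity eigenspaces of $U_A\otimes U_B$ for the fixed choice $U=e^{i\pi a^\dag a}$, and the standard equivalence between a density operator being supported inside a subspace and its invariance under sandwiching by the corresponding projector.

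First I would compute the action of $U_A\otimes U_B$ on the Fock basis. Since $a^\dag a\ket{m}=m\ket{m}$, we have $U\ket{m}=(-1)^m\ket{m}$, hence $(U_A\otimes U_B)\ket{m,n}=(-1)^{m+n}\ket{m,n}$. As $\{\ket{m,n}\}$ is a complete orthonormal basis of $\mathcal{H}^A\otimes\mathcal{H}^B$, this shows $U_A\otimes U_B$ is diagonal in the Fock basis, with $+1$-eigenspace $\mathrm{span}\{\ket{m,n}:m+n\text{ even}\}$ and $-1$-eigenspace $\mathrm{span}\{\ket{m,n}:m+n\text{ odd}\}$. By the definition of $\mathcal{H}^{even}_{AB}$ and $\mathcal{H}^{odd}_{AB}$ given before Definition~\ref{def:oddevenAB} (one may also check directly that the displayed direct sums $\mathcal{H}^{odd}_{A}\otimes\mathcal{H}^{even}_{B}\oplus\mathcal{H}^{even}_{A}\otimes\mathcal{H}^{odd}_{B}$, and its even counterpart, coincide with these spans), these are exactly $\mathcal{H}^{even}_{AB}$ and $\mathcal{H}^{odd}_{AB}$, so $\hat\Pi_e$ and $\hat\Pi_o$ in the statement are precisely the orthogonal projectors onto them; equivalently $\hat\Pi_{e/o}=\tfrac12(I\pm U_A\otimes U_B)$.

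Next I would settle the ``iff'' at the density-operator level. Recall that by Definition~\ref{def:oddevenAB}, ``$\rho$ is an even state'' means $\mathrm{range}(\rho)\subseteq\mathcal{H}^{even}_{AB}=\mathrm{range}(\hat\Pi_e)$. If this holds then $\hat\Pi_e\rho=\rho$, and taking adjoints (both $\rho$ and $\hat\Pi_e$ are Hermitian) gives $\rho\hat\Pi_e=\rho$, whence $\hat\Pi_e\rho\hat\Pi_e=\rho$. Conversely, from $\hat\Pi_e\rho\hat\Pi_e=\rho$ and $\hat\Pi_e^2=\hat\Pi_e$ one gets $\hat\Pi_e\rho=\hat\Pi_e(\hat\Pi_e\rho\hat\Pi_e)=\hat\Pi_e\rho\hat\Pi_e=\rho$, so $\mathrm{range}(\rho)\subseteq\mathrm{range}(\hat\Pi_e)$, i.e.\ $\rho$ is an even state. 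The odd case is identical with $\hat\Pi_o$ replacing $\hat\Pi_e$.

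There is no genuine obstacle here: the statement is a corollary of the preceding discussion together with the observation that $U$ is diagonal in the photon-number basis. The only point worth stating carefully is the notational convention of Definition~\ref{def:oddevenAB}, namely that ``$\rho\in\mathcal{H}^{odd}_{AB}$'' abbreviates support containment rather than mere invariance of $\rho$ under $U_A\otimes U_B$; the latter, by Corollary~\ref{prop:mixparity}, would only force $\rho$ to be a mixture of parity states, which is strictly weaker than being a single parity state.
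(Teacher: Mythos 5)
Your proof is correct and follows exactly the route the paper intends: the paper offers no explicit proof of this corollary, simply asserting it as an immediate consequence of $U\ket{n}=(-1)^n\ket{n}$, and your write-up supplies the two omitted ingredients (identifying $\hat\Pi_{o/e}$ with the projectors onto the $\mp1$ eigenspaces of $U_A\otimes U_B$, and the standard equivalence between support containment and $\hat\Pi\rho\hat\Pi=\rho$). Your closing remark that ``$\rho\in\mathcal{H}^{odd}_{AB}$'' in Definition~1 must be read as support containment rather than invariance under $U_A\otimes U_B$ is a worthwhile clarification of the paper's loose notation, and correctly distinguishes parity states from the mixtures of parity states treated in Corollary~1.
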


\subsection{Phase randomization and parity state}

To generate a mixture of parity states, we introduce the ``twirling'' operation,
\begin{equation}
\mathcal{E}(\sigma_{AB}) = \dfrac{1}{2} \sigma_{AB} + \dfrac{1}{2} (U_A \otimes U_B) \sigma_{AB} (U_A \otimes U_B)^\dagger,
\end{equation}

Note that $\mathcal{E}(\sigma_{AB})$ keeps invariant under the transformation $U_A\otimes U_B$, and therefore is a mixture of parity state for any state $\sigma_{AB}$ on two optical modes $A$ and $B$ according to Corollary \ref{prop:mixparity}. To implement $\mathcal{E}$ in PM-QKD, Alice and Bob can first randomize their systems $A,B$ with random phase $\phi_a$ and $\phi_b$ independently, announce the random phase, and post-select the runs with the same random phase \emph{after Eve's detection announcements}. PM-QKD with independant randomization is summarized as Protocol II below.

\textbf{\uline{Protocol II}}
\begin{enumerate}
\item 
State preparation: Alice and Bob share a known state $\sigma_{AB}$ on two optical mode $A,B$, at the beginning of each run. They initialize their qubits $A', B'$ in $\ket{+i}$. Alice applies the control gate $C_{A'A}(U)$ to qubit $A'$ and optical pulse $A$, and adds an extra random $0/\pi$ phase $\phi_a$ on $A$. Similarly, Bob applies $C_{B'B}(U)$, $\phi_b$ to $B'$ and $B$.


\item 
Measurement: Alice and Bob send the two optical pulses $A$ and $B$ to an untrusted party, Eve, who is supposed to perform joint measurement and obtain the detection results $L$ or $R$, which is expected to be projective measuremnt on the basis $\mathcal{E}(\sigma_{AB})$ and $(U_A\otimes I) \mathcal{E}(\sigma_{AB}) (U_A\otimes I)^\dag$. However, this is no an assumption or requirement of the measurement of Eve, as an untrusted party.

\item 
Announcement: Eve announces the detection result or no successful detection for each round. After that, Alice and Bob announce their random phases $\phi_a$ and $\phi_b$.

\item 
Sifting: When Eve announces an $L$ or $R$ click, Alice and Bob keep the qubits of systems $A'$ and $B'$. In addition, Bob applies a Pauli $Y$-gate to his qubit $B'$ if Eve's announcement is $R$ click. If $|\phi_a - \phi_b| = \pi$, Bob applies another Pauli $Y$-gate on $B'$.

Alice and Bob perform the above steps for many rounds and end up with a joint $2n$-qubit state $\rho_{A' B'}\in(\mathcal{H}_{A'}\otimes\mathcal{H}_{B'})^{\otimes n}$.

\item 
Parameter estimation: Alice and Bob estimate the click ratios caused by even state fractions.

\item 
Key generation: Alice and Bob perform local $Z$ measurements on $\rho_{A' B'}$ to generate raw data string $\kappa_A$ and $\kappa_B$. They reconcile the key string to $\kappa_{A}$ by an encrypted classical channel, with the consummation of $l_{ec}$-bit keys. After that, they perform privacy amplification according to the even state ratio to generate keys.
\end{enumerate}

In Protocol II, the signals with $|\phi_a - \phi_b| = \pi$ can also be used if Bob performs extra $Y_{B'}$ gate on system $B'$ before the key generation step \cite{ma2018phase}. Besides the independant randomization, the Protocol II is the same as Protocol I, which indicates that the security of Protocol II can be reduced to Protocol I, and the information leakage in Protocol II can be bounded by the phase error rate, $E^X_o$, which is introduced in Protocol Ib, an weaker form of Protocol I. The problem of independent randomization is analyzed in the following text.

In the protocol, Alice and Bob have to announce $\phi_a, \phi_b$ publicly to post-select the runs with the same random phase $\phi_a = \phi_b$. As is analyzed in Ref.~\cite{ma2018phase}, the random phase $\phi_a, \phi_b$ is correlated to the key information, which can be utilized by Eve. To model the information leakage caused by random phase announcement, consider the case where Alice and Bob share a pure state $\ket{\psi}_{AB}$. In a purified scenario, suppose a qubit register $P$ is initialized in the state $\ket{+}$, then Alice and Bob realize $\mathcal{E}$ with
\begin{equation}
\begin{aligned}
\ket{\Psi}_{PAB} &= U(\ket{+}_P \ket{\psi}_{AB})\\
 &= \dfrac{1}{\sqrt{2}} ( \ket{+}_P \ket{\psi}_{AB} + \ket{-}_P (U_A(\pi) \otimes U_B(\pi)) \ket{\psi}_{AB} ), \\
&= \ket{0}_P \ket{\bar{\psi_e}}_{AB} + \ket{1}_P \ket{\bar{\psi_o}}_{AB}\\
&= \sqrt{p_e} \ket{0}_P \ket{\psi_e}_{AB} + \sqrt{p_o} \ket{1}_P \ket{\psi_o}_{AB} ,
\end{aligned}
\end{equation}
where
\begin{equation} \label{eq:psiepsio}
\begin{aligned}
\ket{\bar{\psi_e}}_{AB} &= \dfrac{1}{2} (\ket{\psi}_{AB} + (U_A \otimes U_B) \ket{\psi}_{AB}), \\
\ket{\bar{\psi_o}}_{AB} &= \dfrac{1}{2} (\ket{\psi}_{AB} - (U_A \otimes U_B) \ket{\psi}_{AB}).
\end{aligned}
\end{equation}
Here $\ket{\bar{\psi_e}}_{AB}, \ket{\bar{\psi_o}}_{AB}$ are unnormalized even and odd state, where $p_e = \braket{\bar{\psi_e}|\bar{\psi_e}}$ and $p_o = \braket{\bar{\psi_o}|\bar{\psi_o}}$.

Therefore, the register $P$ records whether $\pi$-modulation is applied by $X$-basis state $\ket{\pm}$, while the parity information of the state is kept in the $Z$-basis state $\ket{0/1}$ of the register, as shown in the equation given above.

In this scenario, phase announcement can be interpreted as the following process: Alice and Bob prepare $\ket{\Psi}_{PAB}$, as a purification of $\mathcal{E}(\sigma_{AB})$, and measure system $P$ on $X$-basis, followed by announcing the result to Eve after the detection announcement.
In a worse case, we can reduce the protocol to Protocol Ib, where Eve holds the system $P$ after her detection announcement.

To conclude, PM-QKD protocol can be realized by any initial state $\sigma_{AB}$ with the assist of a phase randomization, at the expense of losing the capability of distinguishing parity components.

\begin{figure*}[htbp]
\centering
\includegraphics[width=9cm]{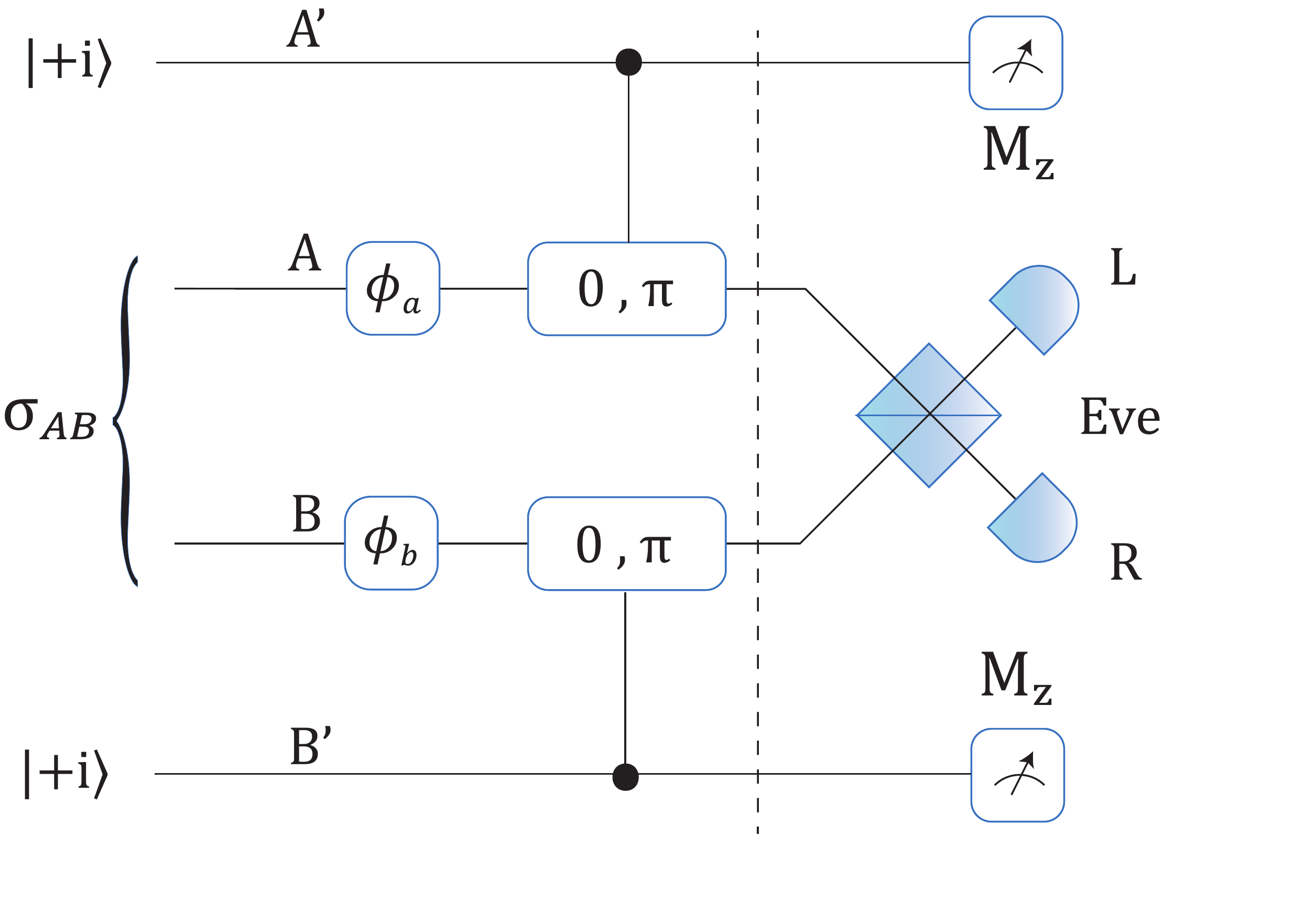}
\caption{Realistic PM-QKD protocol with extra $0/\pi$-phase randomization. $\sigma_{AB}$ is a generic state on optical modes $A$ and $B$. $\phi_a,\phi_b$ are two random phases which is either $0$ or $\pi$. In practice, the phase can be absorbed into the controlled-$U(\pi)$ operations afterwards.} \label{fig:Pro2}
\end{figure*}

\subsection{PM-QKD based on coherent state}

So far, we don't make any assumption on the structure of initial state $\sigma_{AB}$. Here, we focus on one specific implementation. Due to the fact that coherent states are easy to implement in experiments, we set $\sigma_{AB}$ as $\ket{\sqrt{\mu_a}}_A\otimes\ket{\sqrt{\mu_b}e^{i\delta}}_B$. After the twirling phase randomization, the state becomes
\begin{widetext}
\begin{equation} \label{eq:muamub0pi}
\begin{aligned}
\rho &= \dfrac{1}{2} ( \ket{\sqrt{\mu_a}}_A\bra{\sqrt{\mu_a}}\otimes\ket{\sqrt{\mu_b}e^{i\delta}}_B\bra{\sqrt{\mu_b}e^{i\delta}} + \ket{-\sqrt{\mu_a}}_A\bra{-\sqrt{\mu_a}}\otimes\ket{-\sqrt{\mu_b}e^{i\delta}}_B\bra{-\sqrt{\mu_b}e^{i\delta}} ) \\
&=  p_{even} \ket{\psi_e^\delta(\mu_a,\mu_b)}_{AB}\bra{\psi_e^\delta(\mu_a,\mu_b)} + p_{odd} \ket{\psi_o^\delta(\mu_a,\mu_b)}_{AB}\bra{\psi_o^\delta(\mu_a,\mu_b)},
\end{aligned}
\end{equation}
\end{widetext}
where
\begin{widetext}
\begin{equation} \label{eq:psidelta}
\begin{aligned}
\ket{\psi_e^\delta(\mu_a,\mu_b)}_{AB} &= \dfrac{1}{\sqrt{p_{even}}} ( \ket{\sqrt{\mu_a}}_A\otimes\ket{\sqrt{\mu_b}e^{i\delta}}_B + \ket{-\sqrt{\mu_a}}_A\otimes\ket{-\sqrt{\mu_b}e^{i\delta}}_B ), \\
\ket{\psi_o^\delta(\mu_a,\mu_b)}_{AB} &= \dfrac{1}{\sqrt{p_{odd}}} ( \ket{\sqrt{\mu_a}}_A\otimes\ket{\sqrt{\mu_b}e^{i\delta}}_B - \ket{-\sqrt{\mu_a}}_A\otimes\ket{-\sqrt{\mu_b}e^{i\delta}}_B ), \\
\end{aligned}
\end{equation}
\end{widetext}
Here the probabilities $p_{even}, p_{odd}$ are the normalization factors.
In the case where $\delta=0$, the initial state is an unbiased mixing of $\ket{\sqrt{\mu_a}}_A\otimes\ket{\sqrt{\mu_b}}_B$ and $\ket{-\sqrt{\mu_a}}_A\otimes\ket{-\sqrt{\mu_b}}_B$, in which case the even and odd components is
\begin{widetext}
\begin{equation} \label{eq:rhoerhoo}
\begin{aligned}
\rho_e(\mu_a,\mu_b) &= \dfrac{1}{2}(\ket{\psi_e^0(\mu_a,\mu_b)}_{AB}\bra{\psi_e^0(\mu_a,\mu_b)} + \ket{\psi_e^\pi(\mu_a,\mu_b)}_{AB}\bra{\psi_e^\pi(\mu_a,\mu_b)}), \\
\rho_o(\mu_a,\mu_b) &= \dfrac{1}{2}(\ket{\psi_o^0(\mu_a,\mu_b)}_{AB}\bra{\psi_o^0(\mu_a,\mu_b)} + \ket{\psi_o^\pi(\mu_a,\mu_b)}_{AB}\bra{\psi_o^\pi(\mu_a,\mu_b)}). \\
\end{aligned}
\end{equation}
\end{widetext}
 Note that $\rho_e(\mu_a,\mu_b) $ ($
\rho_o(\mu_a,\mu_b) $) is only comprised of even (odd)-phton Fock states, and therefore in even (odd) subspace.

In the final key distillation steps, we deal with the runs with $\phi_a = \phi_b$ and $|\phi_a - \phi_b| = \pi$ together, and the overall phase error rate $E^{ph}$ is given by the fraction of clicks caused by even components $\rho_e(\mu_a,\mu_b)$. To estimate the even clicks, we only need to estimate the yield $Y_{even}$, i.e. the detection probability when Alice and Bob send the state $\rho_e(\mu_a,\mu_b)$. The phase error rate $E^{ph}$ is
\begin{equation}
E^{ph} = \dfrac{p_{even}Y_{even}}{Q_{\mu_a, \mu_b}}.
\end{equation}

Here, the only task remained is to estimate $Y_{even}$, as $Q_{\mu_a, \mu_b}$ given experiment parameters and data. $p_{even}$ describes the proportion of even photon rounds, which related to the intensities $\mu_a, \mu_b$.

\subsection{Phase error estimation and continuous randomization}
As a common technique, decoy state method \cite{Hwang2003Decoy,Lo2005Decoy,Wang2005Decoy} can be applied to estimate the value of $Y_{even}$. The core of decoy state method is to use a set of testing states to learn Eve's behavior on specific components in the signal states. The decoy is based on the observation that, if the same components appears in both the (mixed) signal states and the (mixed) testing states, Eve cannot attack on them with different manners in principle. In this section, we consider the asymptotic case with $n\to\infty$, so that there is no statistical fluctuation. The finite-size analysis is in Appendix \ref{Sc:finite}.

For the simplicity of discussion, we assume that $\mu_a = \mu_b = \mu/2$. The following argument can be easily generalize to the cased when $\mu_a \neq \mu_b$. Denote the fraction of odd- and even-labeled clicked rounds as
\begin{equation}
q_{odd} = \dfrac{n_{odd}}{n}; q_{even} = \dfrac{n_{even}}{n}.
\end{equation}
According to Eq.~\eqref{eq:rhoerhoo}, the signal source is combined by $\ket{\psi_e}$ and $\ket{\psi_o}$ with probability of $p_{even}$ and $p_{odd}$,
\begin{equation}
\rho = p_{odd}^{\mu}\rho_o(\frac{\mu}{2},\frac{\mu}{2}) + p_{even}^{\mu}\rho_e(\frac{\mu}{2},\frac{\mu}{2}),
\end{equation}
Here we add superscript $\mu$ to the probabilities and states, since they are the functions of $\mu$. The core is to estimate the detection probability of $\rho_e(\frac{\mu}{2},\frac{\mu}{2})$, namely, the yield $Y^{\mu}_{even}$.

The fraction of odd- and even-parity states in the final detected signal is given by
\begin{equation}
\begin{aligned}
\label{eqn:q parity}
q_{odd}^\mu &= p_{odd}^\mu \dfrac{Y_{odd}^\mu}{Q_\mu}, \\
q_{even}^\mu &= p_{even}^\mu \dfrac{Y_{even}^\mu}{Q_\mu},
\end{aligned}
\end{equation}
where $Q_\mu$ is the total gain of the signals. For signals with intensity $\mu$, we have
\begin{equation}
\label{eqn:decoy parity}
Q^{\mu} = p_{odd}^{\mu} Y^{\mu}_{odd} + p_{even}^{\mu} Y^{\mu}_{even}. \\
\end{equation}

To efficiently estimate $q_{odd}$ and $q_{even}$, Alice and Bob adjust the intensity $\mu$ of their prepared coherent lights. With constraints from different intensities,
\begin{equation}
\begin{aligned}
\label{eqn:decoy parity 2}
Q^{\mu} = p_{odd}^{\mu} Y^{\mu}_{odd} + p_{even}^{\mu} Y^{\mu}_{even}, \\
Q^{\nu} = p_{odd}^{\nu} Y^{\nu}_{odd} + p_{even}^{\nu} Y^{\nu}_{even}, \\
\end{aligned}
\end{equation}
we have better estimation of $Y^\mu_{odd}$ and $Y^\mu_{even}$. Note that $Y^{\mu}_{odd}, Y^{\mu}_{even}$ is dependent on the intensity $\mu$. For different signal intensities $\mu,\nu$, the difference of yield $Y^{\mu}_{odd}, Y^{\mu}_{even}$ is bounded by
\begin{equation} \label{eq:Ymunu}
    |Y^\mu_{odd} - Y^\nu_{odd}| \leq \sqrt{1 - F^2_{\mu\nu}},
\end{equation}
where $F_{\mu\nu}$ is the fidelity between the odd state $\ket{\psi_o}$ with intensities $\mu$ and $\nu$, respectively,
\begin{equation}
    F_{\mu\nu} = Tr[\sqrt{\sqrt{\rho_{odd}^\mu}\rho_{odd}^\nu\sqrt{\rho_{odd}^\mu}}].
\end{equation}

In order to obtain a tighter bound on the estimation of $Y^{\mu}_{even}$, we can introduce extra phases other than $\{0,\pi\}$ for coherent states $\ket{\sqrt{\mu/2}}_{A}\otimes \ket{\sqrt{\mu/2}}_{B}$. As an extreme case, one choice is to randomize the phase continuously from $0$ to $2\pi$, i.e. continuous randomization. Note that, for a coherent state,
\begin{equation} \label{eqn:phase random}
\dfrac{1}{2\pi}\int_0^{2\pi} d\phi \ket{\sqrt{\mu} e^{i\phi}} \bra{\sqrt{\mu} e^{i\phi}} = \sum_{k=0}^{\infty} P_\mu(k) \ket{k}\bra{k},
\end{equation}
where $P_\mu(k)=e^{-\mu}\frac{\mu^k}{k!}$ is the proportion of Fock state $\ket{k}$ in the mixed states.

If we randomize the phase of two coherent pulses simultaneously, for states with $|\phi_a - \phi_b| = \delta$, we have
\begin{widetext}
\begin{equation} \label{eqn:two phase random}
\dfrac{1}{2\pi}\int_0^{2\pi} d\phi \ket{\sqrt{\mu/2} e^{i\phi}}_A \bra{\sqrt{\mu/2} e^{i\phi}} \otimes \ket{\sqrt{\mu/2} e^{i(\phi + \delta)}}_B \bra{\sqrt{\mu/2} e^{i(\phi + \delta)}} = \sum_{k=0}^{\infty} P_\mu(k) \ket{\bar{k}^\delta}_{AB}\bra{\bar{k}^\delta},
\end{equation}
\end{widetext}
the $k$-photon state $\ket{\bar{k}^\delta}_{AB}$ is
\begin{equation} \label{eq: kAB}
\ket{\bar{k}^\delta}_{AB} = \dfrac{(a^\dag + e^{i\delta} b^\dag)^k}{\sqrt{2^k k!}} \ket{00}_{AB}.
\end{equation}

Consider the simultaneous randomization and the key encoding process, the total phase difference of Alice and Bob's coherent state can be $\phi_a = \phi_b$ or $|\phi_a - \phi_b| = \pi$, which indicates the mixed $k$-photon state, which is sent to Eve, is
\begin{equation}
\rho_k = \dfrac{1}{2}(\ket{\bar{k}^0}_{AB}\bra{\bar{k}^0} + \ket{\bar{k}^\pi}_{AB}\bra{\bar{k}^\pi}),
\end{equation}
which is independent of the intensity $\mu$. In this case, the state Alice and Bob send out can be regarded as a probabilistic mixture of mixed $k$-photon state $\rho_k$. By directly applying Lemma \ref{Lem:parity}, we can estimate the phase error by
\begin{equation} \label{eqn:EX Fock}
\begin{aligned}
E^{ph} &=  \sum_{k=0}^{\infty} q_{2k}, \\
\end{aligned}
\end{equation}
where $q_k$ is the proportion of detection event caused by state $\rho_{AB}$, especially $q_0$ corresponds to the vacuum signal detection, i.e.~dark counts.

The source components are Fock states $\{\ket{k}\}$, whose yields $\{Y_k\}$  are independent of $\mu$. The fractions $q_k^\mu$ of $k$-photon component in the final detected signals are given by
\begin{equation}
\label{eqn:q Fock}
q^{\mu}_k = P^\mu(k) \dfrac{Y_k}{Q_\mu}.
\end{equation}
The overall gain is given by
\begin{equation}
\begin{aligned}
\label{eqn:decoy Fock}
Q_\mu &= \sum_{k=0}^{\infty} P^\mu(k) Y_k, \\
\end{aligned}
\end{equation}
Similarly, we use decoy state method in continuous randomization case, and there is a set of \eqref{eqn:decoy Fock} with different signal intensities $\mu$ and corresponding proportion $P^\mu(k)$, which can be used to bound $\{Y_k\}$, therefore we can estimate $E^{ph}$ efficiently.

The continuous phase-randomized protocol, named Protocol III, is as follows,

\textbf{\uline{Protocol III}}
\begin{enumerate}
\item 
State preparation: Alice and Bob prepares coherent states $\ket{\sqrt{\mu_a}}_A\otimes \ket{\sqrt{\mu_b}}_B$ on two optical modes $A,B$, at the beginning of each run. They initialize their qubits $A', B'$ in $\ket{+i}$. Alice applies the control gate $C_{A'A}(U)$ to qubit $A'$ and optical pulse $A$, and add an extra random $0\sim 2\pi$ phase $\phi_a$ on $A$. Similarly, Bob applies $C_{B'B}(U)$, $\phi_b$ to $B'$ and $B$.


\item 
Measurement: Alice and Bob send the two optical pulses $A$ and $B$ to an untrusted party, Eve, who is supposed to perform joint measurement and obtain the detection results $L$ or $R$.

\item 
Announcement: Eve announces the detection result or no successful detection for each round. After that, Alice and Bob announce their random phases and intensity settings $\{\phi_a,\mu_a\}$ and $\{\phi_b,\mu_b\}$ and keep the signals with $|\phi_a - \phi_b| = 0/\pi$ and $\mu_a = \mu_b$.

\item 
Sifting: When Eve announces an $L/R$ click, Alice and Bob keep the qubits of systems $A'$ and $B'$. In addition, Bob applies a Pauli $Y$-gate to his qubit $B'$ if Eve's announcement is $R$ click. If $|\phi_a - \phi_b| = \pi$, Bob applies another Pauli $Y$-gate on $B'$.

Alice and Bob perform the above steps for many rounds and end up with a joint $2n$-qubit state $\rho_{A' B'}\in(\mathcal{H}_{A'}\otimes\mathcal{H}_{B'})^{\otimes n}$.

\item 
Parameter estimation: Alice and Bob estimate the click ratio caused by even state fractions by decoy-state methods.

\item 
Key generation: For the signals with intensity $\mu_a = \mu_b = \mu/2$, Alice and Bob perform local $Z$ measurements on $\rho_{A' B'}$ to generate raw data string $\kappa_A$ and $\kappa_B$. They reconcile the key string to $\kappa_{A}$ by an encrypted classical channel, with the consummation of $l_{ec}$-bit keys. After that, they perform privacy amplification according to the even state ratio to generate keys.
\end{enumerate}

\subsection{Practical issues in PM-QKD} \label{Sc:practical}

In practice, to bound the phase error, we only need to bound the yield of single-photon components $\rho_1$,
\begin{equation}
\rho_1 = \dfrac{1}{2} (\ket{01}_{AB}\bra{01} + \ket{10}_{AB}\bra{10}),
\end{equation}
and the overall odd phase error rate is bounded by
\begin{equation} \label{eq:Eph}
E^{ph} \leq 1 - q_1,
\end{equation}
where $q_1 = P_\mu(1) Y_1/Q_\mu$.

To make the above Protocol III practical, we now consider the following issues.
\begin{enumerate}
\item From continuous phase randomization to discrete phase randomization.

A continuous phase randomization and phase post-selection is practically intractable. However, in practice, randomizing the phases of coherent pulses discretely is enough. For a coherent state $\ket{\sqrt{\mu}}$, if we randomize its phase discretely with $\{\phi_j = \frac{2\pi}{D}j \}_{j=0}^{D-1}$, the state can be expanded by a group of ``pseudo'' Fock state\cite{Cao2015Discrete},
\begin{equation} \label{eq:discrete phase random}
\dfrac{1}{D} \sum_{j=0}^{D-1} \ket{\sqrt{\mu}e^{i\phi_j}}_C\bra{\sqrt{\mu}e^{i\phi_j}} = \sum_{k=0}^{D-1} P^\mu_D(k) \ket{\lambda_k}_C\bra{\lambda_k},
\end{equation}
where
\begin{equation}
\begin{aligned}
\ket{\lambda_k}_C &= \frac{e^{-\mu/2}}{\sqrt{P^\mu(k)}}\sum_{l=0}^{\infty} \dfrac{(\sqrt{\mu})^{lD+k}}{\sqrt{(lD+k)!}}\ket{lD+k}_C, \\
P^\mu_D(k) &= \sum_{l=0}^{\infty} \dfrac{\mu^{lD+k}e^{-\mu}}{(lD+k)!},
\end{aligned}
\end{equation}
as we can see, when $D$ becomes large, $\ket{\lambda_0}$ and $\ket{\lambda_1}$ will get close to the Fock state $\ket{0}$ and $\ket{1}$.

Now we calculate the deviation of $q_1$ caused by discrete phase randomization. Without loss of generality, we set $|\phi_a - \phi_b| = \delta$. After the discrete phase randomization, the state is
\begin{widetext}
\begin{equation} \label{eq:discrete phase random2}
\dfrac{1}{D} \sum_{j=0}^{D-1} \ket{\sqrt{\frac{\mu}{2}} e^{i\phi_j}}_A \bra{\sqrt{\frac{\mu}{2}} e^{i\phi_j}} \otimes \ket{\sqrt{\frac{\mu}{2}} e^{i(\phi_j + \delta)}}_B \bra{\sqrt{\frac{\mu}{2}} e^{i(\phi_j + \delta)}} = \sum_{k=0}^{\infty} P^\mu_D(k) \ket{\bar{\lambda}_k^\delta}_{AB}\bra{\bar{\lambda}_k^\delta},
\end{equation}
\end{widetext}
the $k$-photon state $\ket{\bar{\lambda}_k^\delta}_{AB}$ is
\begin{equation}
\ket{\bar{\lambda}_k^\delta}_{AB} = \frac{e^{-\mu/2}}{\sqrt{P^\mu(k)}}\sum_{l=0}^{\infty} \dfrac{(\sqrt{\mu})^{lD+k}}{\sqrt{(lD+k)!}} \ket{\overline{lD + k}^\delta}_{AB},
\end{equation}
where $\ket{\bar{k}^\delta}_{AB}$ is defined in Eq.~\eqref{eq: kAB}. 

We compare the fidelity between $\ket{\bar{1}^\delta}_{AB}$ and $\ket{\bar{\lambda}_1^\delta}_{AB}$,
\begin{widetext}
\begin{equation}
\begin{aligned}
|\braket{\bar{1}^\delta|\bar{\lambda}_1^\delta}|^2 &= \frac{e^{-\mu}}{P^\mu_D(1)} \left|\sum_{l=0}^\infty \dfrac{(\sqrt{\mu})^{lD+1}}{(lD+1)!} \braket{\bar{1}^\delta|\overline{lD+1}^\delta} \right|^2 \\
&= \frac{e^{-\mu}}{P^\mu_D(1)} \mu \\
&= \frac{ e^{-\mu} \mu }{ e^{-\mu} (\mu + \frac{\mu^{(D+1)}}{(D+1)!} + \frac{\mu^{(2D+1)}}{(2D+1)!} + ...) } \\
&= \frac{1}{1 + \frac{\mu^{D}}{(D+1)!} + \frac{\mu^{2D}}{(2D+1)!} +... } \\
&\geq 1 - \frac{\mu^D}{(D+1)!},
\end{aligned}
\end{equation}
\end{widetext}
the final inequality holds because $((n+1)D+1)! \geq (nD+1)! (D+1)!$ for $n\geq 1$. Note that the fidelity is independent of the phase difference $\delta$.

Therefore, according to Eq.~\eqref{eq:Ymunu}, the yield difference of $\ket{\tilde{1}^\delta}$ and $\ket{\bar{\lambda}_1^\delta}_{AB}$ is bounded by
\begin{equation}
|Y_{1} - Y^\mu_{\lambda_1}| \leq \sqrt{1 - |\braket{\bar{1}^\delta|\bar{\lambda}_1^\delta}|^2} \leq \sqrt{\frac{\mu^{D}}{(D+1)!}},
\end{equation}
then the difference between estimated $\ket{\bar{\lambda}_1^\delta}_{AB}$ fraction, denoted by $q_1$ and $q_{\lambda_1}^\mu$, is bounded by
\begin{equation}
|q_1 - q_{\lambda_1}^\mu| \leq P^\mu_D(1) \frac{|Y_{1} - Y^\mu_{\lambda_1}|}{Q_\mu} \leq \frac{P^\mu_D(1)}{Q_\mu} \sqrt{\frac{\mu^{D}}{(D+1)!}} = \frac{\xi_D(\mu)}{Q_\mu},
\end{equation}
note that $\xi_D(\mu) \equiv P^\mu_D(1) \sqrt{\frac{\mu^{D}}{(D+1)!}} \approx \frac{\mu^{D/2 + 1} e^{-\mu}}{\sqrt{(D+1)!}}$, which is only correlated to the signal intensity $\mu$ and discrete phase number $D$. Therefore, we can compare the ratio $\xi_D(\mu)/Q_\mu$, which illustrates the deviation of the estimated $q_1$ from the real $q_{\lambda_1}^\mu$.

If we set $D=16$, then $\xi_D(\mu) \approx \frac{\mu^9 e^{-\mu}}{\sqrt{17!}}$, which is a tiny value when $\mu<1$, comparing to the gain $Q_\mu \approx \eta\mu$, where $\eta$ is the channel transmittance from Alice or Bob to the middle point Eve. Therefore, we can safely ignore the effect caused by discrete phase randomization and borrow the former decoy-state method based on continuous phase randomization \cite{Lo2005Decoy,Ma2005Practical}.

\item Key generation and parameter estimation with the signals which the phases are not aligned.

In Protocol III, only the states with aligned phases $|\phi_a - \phi_b|=0, \pi$ and the same intensity $\mu_a = \mu_b$ will be post-selected to estimate the detections caused by single-photon components and generate keys. This will cause a huge waste on sifting in a practical finite-size case. We now discuss how to use this signals for key generation and parameter estimation.

First, we notice that for signals with $\mu_a = \mu_b=\mu/2$ and $|\phi_a - \phi_b| = \delta, \delta+\pi$, we can regard it as a specific case of Protocol III, where Alice and Bob originally share a state $\ket{\sqrt{\mu_a}}_A \otimes \ket{\sqrt{\mu_b}e^{i\delta}}_B$. According to Eq.~\eqref{eq: kAB}, the mixed $k$-photon state when $|\phi_a - \phi_b| = \delta, \delta+\pi$ is
\begin{equation} \label{eq:rhokdelta}
\rho_k^\delta = \dfrac{1}{2}(\ket{\bar{k}^\delta}_{AB}\bra{\bar{k}^\delta} + \ket{\bar{k}^{\delta + \pi}}_{AB}\bra{\bar{k}^{\delta + \pi}}).
\end{equation}

In general, the $k$-photon state is correlated with the misaligned phase $\delta$, which implies that states with different mismatched phases have different mixed k-photon states. However, this state is not true for $k-1$, where the single-photon state
\begin{equation}
\rho_1^\delta = \dfrac{1}{2}(\ket{01}_{AB}\bra{01} + \ket{10}_{AB}\bra{10}),
\end{equation}
is independent of the misaligned phase. Therefore, states with different unaligned phases has the same single-photon component, which indicates that we can use all of states to estimate the yield of single-photon component, regardless of the misaligned phase $\delta$. The phase error can be bounded by
\begin{equation} \label{eq:Ephq1}
E^{ph} \leq 1 - q_1.
\end{equation}

Based on this observation, during the post-processing, Alice and Bob first reconcile their sifted raw key bits $K_a$ and $K_b$ with for each group $j_s = j_a - j_b$ separately. If the error rate in a group of data is too large, they can simply discard that group. Denote the group set $J$ to be the set of remaining phase group indexes $\{j_s\}$. That is, if $j_s \in J$, then the phase group $j_s$ is kept for key generation. They then estimate the even photon fraction $q_{even}$ for all the remaining data in $J$ and perform privacy amplification.

In a more general scenario, Alice and Bob can estimate the yield of state $\ket{01}_{AB}$ and $\ket{10}_{AB}$, denoted as $Y_{01}$ and $Y_{10}$, respectively. The overall yield $Y_1$ can be calculated by
\begin{equation}
Y_1 = \frac{1}{2} ( Y_{01} + Y_{10}).
\end{equation}

Similar to the traditional MDI-QKD \cite{Lo2012Measurement,ma2012statistical,curty2014finite}, Alice and Bob can use signals with different intensities to achieve a better estimation on $Y_1$. However, in this article, to simplify the discussion, we will focus the signals with $\mu_a = \mu_b$, and leave the tighter estimation for future works.

\item From infinite decoy state analysis to finite decoy state setting.

In practice, with finite data size, we can use only finite rounds for testing. To accurately bound the even state components, we have to use the testing states with finite intensity settings.

We will explicitly analyze this problem in Appendix~\ref{Sc:finite}. As a result, we will show that Alice and Bob can use only vacuum + weak decoy state, similar to the BB84 decoy state analysis \cite{Ma2005Practical}.

\end{enumerate}

With all the factors taken into consideration, the practical version of Protocol III, named Protocol IV, is presented as below.

\textbf{\uline{Protocol IV}}
\begin{enumerate}
\item 
State preparation: Alice prepares coherent state $\ket{\sqrt{\mu_a}e^{i\phi_{j_a}}}_A$ on optical mode $A$, with $\mu_a \in \{0,\nu/2, \mu/2\}$, and $\phi_{j_a} \in \{ \frac{2\pi}{D} j_a \}_{j_a=0}^{D-1}$. She initializes her qubit $A'$ in $\ket{+i}$. She applies the control gate $C_{A'A}(U)$ to qubit $A'$ and optical pulse $A$. Similarly, Bob prepares $\ket{\sqrt{\mu_b}e^{i\phi_{j_b}}}_B$ on optical mode $B$. He initializes his qubit $B'$ in $\ket{+i}$. He applies the control gate $C_{B'B}(U)$ to qubit $B'$ and optical pulse $B$.

\item 
Measurement: Alice and Bob send their optical pulses, $A$ and $B$, to an untrusted party, Eve, who is expected to perform an interference measurement and record the detector ($L$ or $R$) that clicks.

\item 
Announcement: Eve announces the detection result or no successful detection for each round. After that, Alice and Bob announce their random phases and intensity settings $\{j_a,\mu_a\}$ and $\{j_b,\mu_b\}$ and keep the signals with $\mu_a = \mu_b$.

\item 
Sifting: When Eve announces an $L/R$ click, Alice and Bob keep the qubits of systems $A'$ and $B'$. In addition, Bob applies a Pauli $Y$-gate to his qubit $B'$ if Eve's announcement is $R$ click. If $\frac{3D}{4} > [(j_a - j_b) \text{ mod } D ] \geq \frac{D}{4}$, Bob applies another Pauli $Y$-gate on $B'$.

Alice and Bob perform the above steps for many rounds and end up with a joint $2n$-qubit state $\rho_{A' B'}\in(\mathcal{H}_{A'}\otimes\mathcal{H}_{B'})^{\otimes n}$.

\item 
Parameter estimation: For all the raw data that they have retained, Alice and Bob record the detect number $M_{i_a, i_b}^{(j_s)}$ of different intensity combinations $\{\mu_{i_a}, \mu_{i_b}\}$ and phase groups $j_s$. They then estimate the information leakage $E^{ph}_{\mu}$ using Eq.~\eqref{eq:Eph}.

\item 
Key generation: For the signals with intensity $\mu_a = \mu_b = \mu/2$, Alice and Bob perform local $Z$ measurements on $\rho_{A' B'}$ to generate raw data string $\kappa_A$ and $\kappa_B$. They first group the signals by $j_s = \min[(j_a - j_b)\text{ mod } D, (j_b - j_a)\text{ mod } D]$. After that, they keep the signals with $j_s$ in a set $J$. They reconcile the corresponding key string to $\kappa_{A}$ by an encrypted classical channel, with the consummation of $l_{ec}^{j_s}$-bit keys according to $j_s$. After that, they perform privacy amplification according to the estimated single-photon ratio $q_1$ to generate keys.
\end{enumerate}

The key rate of this protocol is
\begin{equation}
r = \dfrac{2}{D} \sum_{j_s\in J} \left[ 1 - H(E^{ph}) - l^{j_s}_{ec} \right].
\end{equation}
Typically, $l_{ec}^{j_s}$ can be replaced by $fH(E^{j_s})$, where $f$ is error correction efficiency and $E^{j_s}$ is bit error rate for each phase group $j_s$ that is in $J$. $E^{ph}$ is the phase error rate bounded by Eq.~\eqref{eq:Ephq1}.

Following Shor and Preskill \cite{Shor2000Simple}, we can move the measurement before the sifting step, in which case $C-\pi$ rotation is replaced by the bit flip operation. Then Protocol IV reduces to PM-QKD protocol in the manuscript.

\textbf{\uline{PM-QKD}}
\begin{enumerate}
\item
\textbf{State preparation}:
Alice randomly generates a key bit $\kappa_a$, and picks a random phase $\phi_{j_a}$ from the set $\{j_a \dfrac{2\pi}{D}\}_{j_a =0}^{D-1}$, and the intensity $\mu_{i_a}$ from $\{0,\nu/2,\mu/2\}$. She then prepares the coherent state $\ket{\sqrt{\mu_{i_a}}e^{i(\phi_{j_a}+\pi\kappa_a)}}_{A}$. Similarly, Bob generates $\kappa_b$, $\phi_{j_b}$, $\mu_{i_b}$ and then prepares $\ket{\sqrt{\mu_{i_b}}e^{i(\phi_{j_b}+\pi\kappa_b)}}_{B}$.

\item
\textbf{Measurement}: Alice and Bob send their optical pulses, $A$ and $B$, to an untrusted party, Eve, who is expected to perform an interference measurement and record the detector ($L$ or $R$) that clicks.

\item
\textbf{Announcement}: Eve announces the detection result for each round. After that, Alice and Bob announce their random phases and intensity settings $\{j_a,\mu_a\}$ and $\{j_b,\mu_b\}$ and keep the signals with $\mu_a = \mu_b$.

\item
\textbf{Sifting}: When Eve announces a successful detection, (a click from exactly one of the detectors $L$ or $R$), Alice and Bob keep $\kappa_a$ and $\kappa_b$. Bob flips his key bit $\kappa_b$ if Eve's announcement was an $R$ click. Then, Alice and Bob group the signals by $j_s = (j_a - j_b) \textit{ mod } D$. If $j_s\in [\frac{D}{4}, \frac{3D}{4})$, Bob flips his key bit $\kappa_b$. After that, Alice and Bob merge the data with $j_s$ and $j_s + \frac{D}{2}$, with $j_s = 0,1,...,\frac{D}{2}-1$.

\item
\textbf{Parameter estimation}: For all the raw data that they have retained, Alice and Bob record the detect number $M_{i_a, i_b}^{(j_s)}$ of different intensity combinations $\{\mu_{i_a}, \mu_{i_b}\}$ and phase groups $j_s$. They then estimate the information leakage $E^{ph}_{\mu}$ using Eq.~\eqref{eq:Eph}.

\item
\textbf{Key generation}: For the signals with $\mu_{i_a} = \mu_{i_b} = \mu/2$, Alice and Bob group them by the phase index $j_s = \min[(j_a - j_b)\text{ mod } D, (j_b - j_a)\text{ mod } D]$. After that, they keep the signals with $j_s$ in a set $J$. They reconcile the corresponding key string to $\kappa_{A}$ by an encrypted classical channel, with the consummation of $l_{ec}^{j_s}$-bit keys according to $j_s$. After that, they perform privacy amplification according to the estimated single-photon ratio $q_1$ to generate keys.
\end{enumerate}

\section{Finite data size analysis} \label{Sc:finite}

In this section, we analyze the finite-size effect of PM-QKD protocol introduced in the main text. Here we ignore the effect caused by discrete phase randomization. Recall that during the post-processing in PM-QKD, Alice and Bob first reconcile their sifted raw key bits $K_a$ and $K_b$ with for each phase group $j_s = j_a - j_b$ separately. They keep some of the phase groups $j_s \in J$ for key generation. They then estimate the single photon fraction $q_{1}^J$ for all the remaining data in $J$ and perform privacy amplification.

As is mentioned in the maintext, in QKD finite-size analysis, one should take the cost and failure probability of channel authentication, error verification, privacy amplification, and parameter estimation into account. However, the cost of the first three steps are negligible comparing to the one in parameter estimation. When the final key length is much larger than $37$ bits, one can ignore the corresponding failure probability with a constant secret key cost \cite{Fung2010Practical}. For simplicity, we ignore these parts in our analysis. The finite-size key length $N_k$ is then
\begin{equation}
N_k = \sum_{j\in J} M_{si}^{j} [ 1 - H(E^{ph}) - fH(E^{j})],
\end{equation}
with the failure probability of $\epsilon_{eph}$. Here $M_{si}^j$ is the sifted raw key length of phase group $j$ before error verification and privacy amplification, $f$ is the error correction efficiency, $E^{j}$ is the quantum bit error rate of group $j$, $E^{ph}$ is the estimated upper bound of phase error rate bounded by
\begin{equation} \label{eq:ephMJ1}
E^{ph} < 1 - q_1^J = 1 - \frac{M^{\mu J}_1}{M^{\mu J}},
\end{equation}
with a failure probability of $\epsilon_{eph}$. Here $M^{\mu J}$ is the click round number with $\mu_a = \mu_b = \mu/2$ and phase group $j_s \in J$, and $M^{\mu J}_1$ is the estimated single photon component in it.

The core of parameter estimation of $E^{ph}$ is to estimate \emph{the clicked rounds caused by the single photon component $M_{1}^{\mu J}$, for all the clicked rounds with $\mu_a = \mu_b = \mu/2$ and all the left phase group $J$}. Here, we follow Ref.~\cite{zhang2017improved} for a tight decoy-state analysis in the finite-data regime.

As is stated in Appendix \ref{Sc:practical}, since the single photon state $\rho_1$ is the same whatever $\phi_a$ and $\phi_b$ is, we can first estimate the single photon clicks $M_{1}^{\mu}$ caused by all the phase groups with different $j_d$ all together, and then estimate $M_{1}^{\mu J}$ for the left groups $J$ afterwards.

Without matching the random phase $\phi_a, \phi_b$ by $j_s$, the state on optical modes $A,B$ sent out to Eve when $\mu_a = \mu_b = \mu/2$ is
\begin{equation}
\rho_{AB} = \sum_{k} P^\mu(k) \rho_k^{tot},
\end{equation}
where
\begin{equation}
\rho_k^{tol} = \int_0^\pi d\delta\, \rho_k^\delta =\sum_{k_a, k_b} B(k_a;k,\frac{1}{2}) \ket{k_a, k_b}\bra{k_a, k_b},
\end{equation}
and $B(k;n,p) = \binom{n}{k} p^k (1-p)^{(n-k)}$ is the binomial distribution. $\rho_k^\delta$ is given by Eq.~\eqref{eq:rhokdelta}. Note that $\rho_1^{tot} = \rho_1$. For different intensities $\{0,\nu,\mu\}$, the $k$-photon component $\rho_k^{tot}$ are the same, following Poisson distribution with different parameter. Therefore, the former finite-size decoy state analysis on BB84 protocol can be directly applied to PM-QKD.

For each intensity $\{0,\nu,\mu\}$, suppose Alice and Bob send out $N^{vac}, N^{w}, N^{s}$ rounds of pulses, with $M^{vac}, M^{w}, M^{s}$ rounds of effective clicks, respectively. We have,
\begin{equation}
\begin{aligned}
N^{vac} &\approx (r^{vac})^2 N, \\
N^{w} &\approx (r^{w})^2 N, \\
N^{s} &\approx (r^{s})^2 N.
\end{aligned}
\end{equation}
A unified notation is $\{N^{a}, M^{a}\}$, with $a\in\{vac,w,s\}$ indicating the intensity setting of vacuum, weak and signal. Denote the normalized rate of each intensity setting $a$ in the final clicked signals as
\begin{equation}
q^a = N^a/N,
\end{equation}
note that, $q^a$ is a fixed number after Alice and Bob send out all the signals.

For a specific intensity setting $a\in\{vac,w,s\}$, denote the rounds of sending out $k$-photon pulses $\rho^{tot}_k$ and clicks caused by it as $\{N^{a}_k, M^{a}_k\}$. Define
\begin{equation}
N_k \equiv \sum_a N^a_k,
\end{equation}
to be the overall rounds to send $k$-photon signals. Denote the conditional probability that Alice and Bob choose the intensity setting $a$ when sending out $k$-photon signal as
\begin{equation}
\begin{aligned}
p(a|k) &= \lim_{N_k \to \infty} \frac{N^a_k}{N_k} \\
&= \frac{ q^a P^{a}(k)}{ q^{vac} P^{0}(k) + q^w P^{\nu}(k) + q^s P^{\mu}(k) },
\end{aligned}
\end{equation}
here we slightly abuse the notation $P^a(k)$ to denote the Poisson distribution with intensity setting $a$.

Therefore we have
\begin{equation}
\begin{aligned}
N^a &= \sum_k N_k^a \approx \sum_k p(a|k)N_k \\
M^a &= \sum_k M_k^a \approx \sum_k p(a|k)M_k,
\end{aligned}
\end{equation}
where $p(vac|k)=0$ for all $k\neq0$.

In the whole QKD process, the values $\{N^a\}_a$ and $\{M^a\}_a$ are known by Alice and Bob. The values $\{N^a_k\}_{a,k}$ are available to Alice and Bob in principle and cannot be controlled by Eve. The values $\{M_k\}_k$ are, however, controlled by Eve and unknown to Alice and Bob.

The core observations to perform finites-size analysis in this case are
\begin{enumerate}
\item When Alice and Bob send out $k$-photon signals, the choosing of different intensity setting $a$ is independent and identically distributed (i.i.d.), given by the probability distribution $p(a|k)$.
\item Eve's attack on $k$-photon signals cannot depend on the intensity setting $a$.
\end{enumerate}

Therefore, Eve's attack can be described by a random sampling from the set of $N_k$. She randomly chooses $M_k$ rounds from it and announces them as effective clicks. Among them, $\{M^a_k\}_a$, i.e., the clicks caused by different intensity settings, are randomly distributed.

To clarify the random sampling model, we can then rewrite $M^a_k$ as
\begin{equation} \label{eq:barMak}
\bar{M}^a_k = \sum_{i=1}^{M_k} (\bar{\chi}^a_k)_i,
\end{equation}
where
\begin{equation}
(\bar{\chi}^a_k)_i =
\begin{cases}
1 & \text{with probability } p^a_k, \\
0 & \text{with probability } 1 - p^a_k,
\end{cases}
\end{equation}
(with $i=1,...,M_k$) are i.i.d. indicator random variables.

Group these random variables and define
\begin{equation} \label{eq:barMa}
\bar{M}^a = \sum_{k} \bar{M}^a_k = \sum_{k} \sum_{i=1}^{M_k} (\bar{\chi}^a_k)_i,
\end{equation}
as the variable indicating the overall clicks caused by intensity setting $a$. The bar on $M^a_k$ is used to indicate that it is a variable.

The decoy state problem can be modeled as: for the unknown $\{M_k\}_k$ and the known $\{p^a_k\}_{a,k}$, to evaluate the value of the variable $\bar{M}^s_1$, given the observed constraints that $\{\bar{M}^a = M^a\}_a$.

We first observe that,
\begin{equation}
\mathbb{E}(\bar{M}^a_k) = p(a|k) M_k,
\end{equation}
and hence
\begin{equation} \label{eq:EMak}
\mathbb{E}(\bar{M}^a) = \sum_k \mathbb{E}(\bar{M}^a_k) = \sum_k p(a|k) M_k.
\end{equation}
Note that, \emph{the expectation value are taken with respect to the i.i.d. variables $\{(\bar{\chi}^a_i)_j\}_{a,i,j}$}. Therefore, we can bound the expected values $\mathbb{E}(\bar{M}^a)$ by applying an inversed form of Chernoff bound on Eq.~\eqref{eq:barMa} and with the observed $\{M^a\}_a$. From Eq.~\eqref{eq:EMak}, we have

\begin{equation} \label{eq:decoyrange}
\mathbb{E}^U(\bar{M}^a) \geq  \sum_k p(a|k)M_k \geq \mathbb{E}^L(\bar{M}^a),
\end{equation}
where we use superscript U and L to denote upper and lower bound respectively.

To estimate $M^s_1$, we first estimate $M_1$ from Eq.~\eqref{eq:decoyrange}, and then estimate $M^s_1$ by a direct use of Chernoff bound on Eq.~\eqref{eq:barMak}. We briefly summarize the results of Chernoff bound in Appendix \ref{eq:Chernoff}.

The decoy method discussed above is based on the formula Eq.~\eqref{eq:EMak} and the correlation between variables $\{\bar{M}^a,M_k\}$. To unify it with the former decoy state formulas with $\{Q^a,Y_k\}$, we further define the gain and yield variable as
\begin{equation} \label{eq:QaYk}
\begin{aligned}
\bar{Q}^a &:= \frac{\bar{M}^a}{N^a},\\
Y^*_k &:= \frac{M_k}{N^\infty_k},
\end{aligned}
\end{equation}
where
\begin{equation}
\begin{aligned}
N_k^\infty &= \sum_a P^a(k) N^a = \sum_a P^a(k) (r^a)^2 N,\\
\end{aligned}
\end{equation}
is the expectation value of $N_k$.

From Eq.~\eqref{eq:EMak} and the definition of $\bar{Q}^a, \bar{Y}^*_k$ in Eq.~\eqref{eq:QaYk}, we can recover the decoy state formula expressed by $\bar{Q}^a$ and $\bar{Y}^*_k$,
\begin{widetext}
\begin{equation}
\begin{aligned}
\mathbb{E}[\bar{Q}^a] = \frac{\mathbb{E}[\bar{M}^a]}{N^a} &= \frac{\sum_k p(a|k) M_k}{N^a} = \sum_k \frac{ q^a P^{a}(k)}{ q^{vac} P^{0}(k) + q^w P^{\nu}(k) + q^s P^{\mu}(k) } \frac{M_k}{q^a N}, \\
&= \sum_k P^a(k) \frac{M_k}{N [ q^{vac} P^{0}(k) + q^w P^{\nu}(k) + q^s P^{\mu}(k) ]}, \\
&= \sum_k P^a(k) \frac{M_k}{N^\infty_k}, \\
&= \sum_{k} P^a(k) Y^*_k.
\end{aligned}
\end{equation}
\end{widetext}

Now, with the observed value $M^a$, we can calculate $Q^a$, and apply the decoy state formulas,

\begin{equation}  \label{eq:decoyQY}
\mathbb{E}^U[\bar{Q}^a] \geq \sum_k P^\mu(k)Y^*_k \geq \mathbb{E}^L[\bar{Q}^a],
\end{equation}

With \eqref{eq:decoyQY}, we can estimate $Y^*_1$ by\cite{zhang2017improved}
\begin{widetext}
\begin{equation} \label{eq:Y1star}
Y^*_1 \geq (Y_1^*)^L = \frac{\mu}{\mu\nu - \nu^2} \left( \mathbb{E}^L[Q^w]e^\nu - \mathbb{E}^U[Q^s]e^\mu \frac{\nu^2}{\mu^2} - \frac{\mu^2 - \nu^2}{\mu^2} \mathbb{E}^U[Q^{vac}] \right).
\end{equation}
\end{widetext}

Note that the whole process can be divided into two steps. Step I is to estimate $(Y_1^*)^L$ and $(M_1)^L$ for all of the phase groups. Step II is to estimate $(M_1^{s,J})^L$ for phase group set $J$ from $(M_1)^L$ in all phase group. To summarize, the whole phase error estimation process is
\begin{enumerate}
\item To record $\{N^s, N^w, N^{vac}\}$ and record the number of clicked rounds $\{M^s, M^w, M^{vac}\}$.
\item (Step I) Based on an inversed usage of Chernoff bound, to calculate $\{ \mathbb{E}^U(\bar{M}^a), \mathbb{E}^L(\bar{M}^a) \}_a$, given $M^a$ and estimate the failure probability $\epsilon_1$. Calculate the $\{ \mathbb{E}^U(\bar{Q}^a), \mathbb{E}^L(\bar{Q}^a) \}_a$ by Eq.~\eqref{eq:QaYk}.
\item Calculate the lower bound $(Y_1^*)^L$ based on $\{ \mathbb{E}^U(\bar{Q}^a), \mathbb{E}^L(\bar{Q}^a) \}_a$ by Eq.~\eqref{eq:Y1star}. Calculate $(M_1)^L$ by Eq.~\eqref{eq:QaYk}.
\item (Step II) Based on a direct usage of Chernoff bound, to calculate $(M^{s,J}_1)^L$ for phase group set $J$ and estimate the failure probability $\epsilon_2$. To calculate $E^{ph}$ based on Eq.~\eqref{eq:ephMJ1}. The overall failure probability is $\epsilon_{eph} = \epsilon_1 + \epsilon_2$.
\end{enumerate}

\subsection{Chernoff bound} \label{eq:Chernoff}

Here we present the methods to evaluate $\mathbb{E}(\bar{M}^a)$ from $M^a$ and evaluate $M^s_1$ from $M_1$ using Chernoff bounds.

To evaluate $\mathbb{E}(\bar{M}^a)$ from $M^a$, we inversely use the Chernoff bounds based on Bernoulli variables. We briefly summarize the results in Ref.~\cite{zhang2017improved}. For the observed value $\chi$, we set the lower and upper bound of estimated $\mathbb{E}(\chi)$ as $\{\mathbb{E}^L(\chi), \mathbb{E}^U(\chi)\}$. Denote
\begin{equation}
\begin{aligned}
\mathbb{E}^L(\chi) &= \frac{\chi}{1 + \delta^L}, \\
\mathbb{E}^U(\chi) &= \frac{\chi}{1 - \delta^U}. \\
\end{aligned}
\end{equation}
The failure probability of the estimation $\mathbb{E}(\chi) \in [\mathbb{E}^L(\chi), \mathbb{E}^U(\chi)]$, given by the Chernoff bound, is
\begin{equation} \label{eq:eps1}
\epsilon = e^{-\chi g_2(\delta^L)} + e^{-\chi g_2(\delta^U)},
\end{equation}
where $g_2(x) = \ln(1+x) - x/(1+x)$.

To evaluate $M^s_1$ from $M_1$, we directly apply the Chernoff bounds. Suppose the direct sampling expectation value of $M^s_1$ is given by $\mathbb{E}(M^s_1) = p_1^s M_1$. For the expected value $\mathbb{E}(\chi)$, we set the lower and upper bound of the estimated $\chi$ as $\{\chi^L,\chi^U\}$. Denote
\begin{equation}
\begin{aligned}
\chi^L &= (1 - \bar{\delta}^L)\mathbb{E}(\chi), \\
\chi^U &= (1 + \bar{\delta}^U)\mathbb{E}(\chi). \\
\end{aligned}
\end{equation}
The failure probability of the estimation $\chi \in [\chi^L, \chi^U]$, given by the Chernoff bound, is
\begin{equation} \label{eq:eps2}
\epsilon = e^{-(\bar{\delta}^L)^2\mathbb{E}(\chi)/(2+\bar{\delta}^L)} + e^{-(\bar{\delta}^U)^2\mathbb{E}(\chi)/(2+\bar{\delta}^U)}.
\end{equation}

In practice, we can preset the lower bound and upper bound $\{\mathbb{E}^L(\chi), \mathbb{E}^U(\chi)\}$ or $\{\chi^L,\chi^U\}$ by assuming a Gaussian distribution on $\chi$ first,
\begin{equation}
\begin{aligned}
\mathbb{E}^L(\chi) = \chi - n_\alpha \sqrt{\chi},&\quad \mathbb{E}^U(\chi) = \chi + n_\alpha \sqrt{\chi}, \\
\chi^L = \chi - n_\alpha \sqrt{\mathbb{E}(\chi)},&\quad \chi^U = \chi + n_\alpha \sqrt{\mathbb{E}(\chi)}, \\
\end{aligned}
\end{equation}
where $n_\alpha$ is a preset parameter to determine the estimation precision. After that, we calculate the failure probabilities by Eq.~\eqref{eq:eps1} and Eq.~\eqref{eq:eps2}.

\section{Simulation formula and results} \label{Sc:simulationformula}
Here we list the formulas used to simulate the key rate performance of PM-QKD and MDI-QKD in Fig.~2 the main text. The channel is modeled to be a pure loss one and symmetric for Alice and Bob with transmittance $\eta$ (with the detector efficiency $\eta_d$ taken into account).

\subsection{Gain, yield and error rate of PM-QKD}
In PM-QKD, suppose Alice and Bob emit the $k$-photon light $\rho^\delta_k$ in Eq.~\eqref{eq:rhokdelta}, the yield (i.e., effective detection probability) $Y_k$ is given by, Eq.~(B13) in Ref.~\cite{ma2018phase},
\begin{equation}
Y_k \approx 1 - (1 - 2 p_d) (1-\eta)^k,
\end{equation}
suppose Alice and Bob emit the coherent states $\rho$ in Eq.~\eqref{eq:muamub0pi} with $\mu_a = \mu_b = \mu/2$, the gain (i.e., effective detection probability) $Q_{\mu}$ is ( Eq.~(B14) in Ref.~\cite{ma2018phase})
\begin{equation}
Q_\mu \approx 1 - (1 - 2 p_d) e^{-\eta \mu}.
\end{equation}

As is stated in the main text, the quantum bit error rate $E^{j_s}$ is mainly composed of three components. The first one is the intrinsic error $e_\Delta(j_s)$ caused by phase mismatch when $j_s \neq 0$,
\begin{equation}
e_\Delta(j_s) =
\begin{cases}
\sin^2(\frac{\pi j_s}{D}), & j_s \leq \frac{D}{2}, \\
\sin^2(\frac{\pi}{2} - \frac{\pi j_s}{D}), & j_s > \frac{D}{2}, \\
\end{cases}
\end{equation}
which is related to the index deviation $j_s$. The second one is the extra misalignment error $e_0$, caused by phase fluctuation. Here we regard $e_0$ and $e_\Delta(j_s)$ as caused by independent factors, and the overall misalignment error is $e_d(j_s) = e_0 + e_\Delta(j_s)$. Also, the dark count effect will contribute to the bit error. The overall bit error rate $E_\mu^{(j_s)}$ is then given by
\begin{equation}
E_\mu^{(j_s)} = [p_d + \eta\mu (e_\Delta(j_s) + e_0)]\frac{e^{-\eta\mu}}{Q_\mu},
\end{equation}
where $p_d$ is the dark count rate.

\subsection{Simulation formulas for MDI-QKD protocols} \label{Sc:KBB84MDI}

The key rate of MDI-QKD is \cite{Lo2012Measurement}
\begin{equation}
\begin{aligned}
R_{MDI} = \dfrac{1}{2}\{ Q_{11} [1 - H(e_{11})] -f Q_{rect} H(E_{rect}) \},
\end{aligned}
\end{equation}
where $Q_{11} = \mu_a\mu_b e^{-\mu_a-\mu_b}Y_{11}$ and $1/2$ is the basis sifting factor. We take this formula from Eq.~(B27) in Ref.~\cite{Ma2012Alternative}. In simulation, the gain and error rates are
\begin{widetext}
\begin{equation}
\begin{aligned}
Y_{11} & = (1-p_d)^2 [ \dfrac{\eta_a\eta_b}{2} + (2\eta_a + 2\eta_b -3\eta_a\eta_b)p_d + 4(1-\eta_a)(1-\eta_b)p_d^2 ], \\
e_{11} & = e_0Y_{11} - (e_0- e_d)(1-p_d^2)\dfrac{\eta_a\eta_b}{2}, \\
Q_{rect} & = Q_{rect}^{(C)} + Q_{rect}^{(E)}, \\
Q_{rect}^{(C)} & = 2(1-p_d)^2 e^{-\mu^\prime/2} [1 - (1-p_d)e^{-\eta_a\mu_a/2}][1 - (1-p_d)e^{-\eta_b\mu_b/2}], \\
Q_{rect}^{(E)} & = 2p_d(1-p_d)^2 e^{-\mu^\prime/2}[I_0(2x) - (1-p_d)e^{-\mu^\prime/2}], \\
E_{rect} Q_{rect} & = e_d Q_{rect}^{(C)} + (1 - e_d) Q_{rect}^{(E)}, \\
\end{aligned}
\end{equation}
\end{widetext}
where $\mu^\prime$ denotes the average number of photons reaching Eve's beam splitter,   $\mu_a = \mu_b = \mu/2, \eta_a = \eta_b = \eta$, and
\begin{equation} \label{eqn:MDI mu x}
\begin{aligned}
\mu^\prime & = \eta_a \mu_a + \eta_b \mu_b, \\
x & = \frac12\sqrt{\eta_a \mu_a \eta_b \mu_b}. \\
\end{aligned}
\end{equation}
We take these formulas from Eqs.~(A9),~(A11),~(B7),~and (B28)-(B31) in Ref.~\cite{Ma2012Alternative}.

The linear key-rate bound of repeaterless point-to-point QKD protocol used in the maintext is \cite{Pirandola2017Fundamental},
\begin{equation}\label{eq:plob}
R_{PLOB} = - \log_2(1-\eta).
\end{equation}

\end{appendix}

\bibliographystyle{apsrev4-1}
\bibliography{bibPMPlus} 

\begin{thebibliography}{39}%
\makeatletter
\providecommand \@ifxundefined [1]{%
 \@ifx{#1\undefined}
}%
\providecommand \@ifnum [1]{%
 \ifnum #1\expandafter \@firstoftwo
 \else \expandafter \@secondoftwo
 \fi
}%
\providecommand \@ifx [1]{%
 \ifx #1\expandafter \@firstoftwo
 \else \expandafter \@secondoftwo
 \fi
}%
\providecommand \natexlab [1]{#1}%
\providecommand \enquote  [1]{``#1''}%
\providecommand \bibnamefont  [1]{#1}%
\providecommand \bibfnamefont [1]{#1}%
\providecommand \citenamefont [1]{#1}%
\providecommand \href@noop [0]{\@secondoftwo}%
\providecommand \href [0]{\begingroup \@sanitize@url \@href}%
\providecommand \@href[1]{\@@startlink{#1}\@@href}%
\providecommand \@@href[1]{\endgroup#1\@@endlink}%
\providecommand \@sanitize@url [0]{\catcode `\\12\catcode `\$12\catcode
  `\&12\catcode `\#12\catcode `\^12\catcode `\_12\catcode `\%12\relax}%
\providecommand \@@startlink[1]{}%
\providecommand \@@endlink[0]{}%
\providecommand \url  [0]{\begingroup\@sanitize@url \@url }%
\providecommand \@url [1]{\endgroup\@href {#1}{\urlprefix }}%
\providecommand \urlprefix  [0]{URL }%
\providecommand \Eprint [0]{\href }%
\providecommand \doibase [0]{http://dx.doi.org/}%
\providecommand \selectlanguage [0]{\@gobble}%
\providecommand \bibinfo  [0]{\@secondoftwo}%
\providecommand \bibfield  [0]{\@secondoftwo}%
\providecommand \translation [1]{[#1]}%
\providecommand \BibitemOpen [0]{}%
\providecommand \bibitemStop [0]{}%
\providecommand \bibitemNoStop [0]{.\EOS\space}%
\providecommand \EOS [0]{\spacefactor3000\relax}%
\providecommand \BibitemShut  [1]{\csname bibitem#1\endcsname}%
\let\auto@bib@innerbib\@empty
\bibitem [{\citenamefont {Bennett}\ and\ \citenamefont
  {Brassard}(1984)}]{Bennett1984Quantum}%
  \BibitemOpen
  \bibfield  {author} {\bibinfo {author} {\bibfnamefont {C.~H.}\ \bibnamefont
  {Bennett}}\ and\ \bibinfo {author} {\bibfnamefont {G.}~\bibnamefont
  {Brassard}},\ }in\ \href {https://doi.org/10.1016/j.tcs.2014.05.025} {\emph
  {\bibinfo {booktitle} {Proceedings of the IEEE International Conference on
  Computers, Systems and Signal Processing}}}\ (\bibinfo  {publisher} {IEEE
  Press},\ \bibinfo {address} {New York},\ \bibinfo {year} {1984})\ pp.\
  \bibinfo {pages} {175--179}\BibitemShut {NoStop}%
\bibitem [{\citenamefont {Ekert}(1991)}]{ekert1991quantum}%
  \BibitemOpen
  \bibfield  {author} {\bibinfo {author} {\bibfnamefont {A.~K.}\ \bibnamefont
  {Ekert}},\ }\href {\doibase 10.1103/PhysRevLett.67.661} {\bibfield  {journal}
  {\bibinfo  {journal} {Phys. Rev. Lett.}\ }\textbf {\bibinfo {volume} {67}},\
  \bibinfo {pages} {661} (\bibinfo {year} {1991})}\BibitemShut {NoStop}%
\bibitem [{\citenamefont {Mayers}(2001)}]{Mayers2001Unconditional}%
  \BibitemOpen
  \bibfield  {author} {\bibinfo {author} {\bibfnamefont {D.}~\bibnamefont
  {Mayers}},\ }\href {https://dl.acm.org/citation.cfm?id=382781} {\bibfield
  {journal} {\bibinfo  {journal} {Journal of the ACM (JACM)}\ }\textbf
  {\bibinfo {volume} {48}},\ \bibinfo {pages} {351} (\bibinfo {year}
  {2001})}\BibitemShut {NoStop}%
\bibitem [{\citenamefont {Lo}\ and\ \citenamefont
  {Chau}(1999)}]{Lo1999Unconditional}%
  \BibitemOpen
  \bibfield  {author} {\bibinfo {author} {\bibfnamefont {H.~K.}\ \bibnamefont
  {Lo}}\ and\ \bibinfo {author} {\bibfnamefont {H.~F.}\ \bibnamefont {Chau}},\
  }\href {http://science.sciencemag.org/content/283/5410/2050} {\bibfield
  {journal} {\bibinfo  {journal} {Science}\ }\textbf {\bibinfo {volume}
  {283}},\ \bibinfo {pages} {2050} (\bibinfo {year} {1999})}\BibitemShut
  {NoStop}%
\bibitem [{\citenamefont {Shor}\ and\ \citenamefont
  {Preskill}(2000)}]{Shor2000Simple}%
  \BibitemOpen
  \bibfield  {author} {\bibinfo {author} {\bibfnamefont {P.~W.}\ \bibnamefont
  {Shor}}\ and\ \bibinfo {author} {\bibfnamefont {J.}~\bibnamefont
  {Preskill}},\ }\href
  {https://journals.aps.org/prl/abstract/10.1103/PhysRevLett.85.441} {\bibfield
   {journal} {\bibinfo  {journal} {Phys. Rev. Lett.}\ }\textbf {\bibinfo
  {volume} {85}},\ \bibinfo {pages} {441} (\bibinfo {year} {2000})}\BibitemShut
  {NoStop}%
\bibitem [{\citenamefont {Ben-Or}\ \emph {et~al.}(2005)\citenamefont {Ben-Or},
  \citenamefont {Leung}, \citenamefont {Mayers},\ and\ \citenamefont
  {Oppenheim}}]{Ben2005composable}%
  \BibitemOpen
  \bibfield  {author} {\bibinfo {author} {\bibfnamefont {M.}~\bibnamefont
  {Ben-Or}}, \bibinfo {author} {\bibfnamefont {D.~W.}\ \bibnamefont {Leung}},
  \bibinfo {author} {\bibfnamefont {D.}~\bibnamefont {Mayers}}, \ and\ \bibinfo
  {author} {\bibfnamefont {J.}~\bibnamefont {Oppenheim}},\ }in\ \href
  {https://doi.org/10.1007/978-3-540-30576-7_21} {\emph {\bibinfo {booktitle}
  {International Conference on Theory of Cryptography}}}\ (\bibinfo {year}
  {2005})\ pp.\ \bibinfo {pages} {386--406}\BibitemShut {NoStop}%
\bibitem [{\citenamefont {Renner}\ and\ \citenamefont
  {K{\"o}nig}(2005)}]{Renner2005universally}%
  \BibitemOpen
  \bibfield  {author} {\bibinfo {author} {\bibfnamefont {R.}~\bibnamefont
  {Renner}}\ and\ \bibinfo {author} {\bibfnamefont {R.}~\bibnamefont
  {K{\"o}nig}},\ }\enquote {\bibinfo {title} {Universally composable privacy
  amplification against quantum adversaries},}\ in\ \href {\doibase
  10.1007/978-3-540-30576-7_22} {\emph {\bibinfo {booktitle} {Theory of
  Cryptography: Second Theory of Cryptography Conference, TCC 2005, Cambridge,
  MA, USA, February 10-12, 2005. Proceedings}}},\ \bibinfo {editor} {edited by\
  \bibinfo {editor} {\bibfnamefont {J.}~\bibnamefont {Kilian}}}\ (\bibinfo
  {publisher} {Springer Berlin Heidelberg},\ \bibinfo {address} {Berlin,
  Heidelberg},\ \bibinfo {year} {2005})\ pp.\ \bibinfo {pages}
  {407--425}\BibitemShut {NoStop}%
\bibitem [{\citenamefont {Renner}(2008)}]{renner2008security}%
  \BibitemOpen
  \bibfield  {author} {\bibinfo {author} {\bibfnamefont {R.}~\bibnamefont
  {Renner}},\ }\href@noop {} {\bibfield  {journal} {\bibinfo  {journal}
  {International Journal of Quantum Information}\ }\textbf {\bibinfo {volume}
  {6}},\ \bibinfo {pages} {1} (\bibinfo {year} {2008})}\BibitemShut {NoStop}%
\bibitem [{\citenamefont {Fung}\ \emph {et~al.}(2010)\citenamefont {Fung},
  \citenamefont {Ma},\ and\ \citenamefont {Chau}}]{Fung2010Practical}%
  \BibitemOpen
  \bibfield  {author} {\bibinfo {author} {\bibfnamefont {C.-H.~F.}\
  \bibnamefont {Fung}}, \bibinfo {author} {\bibfnamefont {X.}~\bibnamefont
  {Ma}}, \ and\ \bibinfo {author} {\bibfnamefont {H.~F.}\ \bibnamefont
  {Chau}},\ }\href {\doibase 10.1103/PhysRevA.81.012318} {\bibfield  {journal}
  {\bibinfo  {journal} {Phys. Rev. A}\ }\textbf {\bibinfo {volume} {81}},\
  \bibinfo {pages} {012318} (\bibinfo {year} {2010})}\BibitemShut {NoStop}%
\bibitem [{\citenamefont {Gottesman}\ \emph {et~al.}(2004)\citenamefont
  {Gottesman}, \citenamefont {Lo}, \citenamefont {L\"{u}tkenhaus},\ and\
  \citenamefont {Preskill}}]{gottesman2004security}%
  \BibitemOpen
  \bibfield  {author} {\bibinfo {author} {\bibfnamefont {D.}~\bibnamefont
  {Gottesman}}, \bibinfo {author} {\bibfnamefont {H.-K.}\ \bibnamefont {Lo}},
  \bibinfo {author} {\bibfnamefont {N.}~\bibnamefont {L\"{u}tkenhaus}}, \ and\
  \bibinfo {author} {\bibfnamefont {J.}~\bibnamefont {Preskill}},\ }\href
  {http://dl.acm.org/citation.cfm?id=2011586.2011587} {\bibfield  {journal}
  {\bibinfo  {journal} {Quantum Info. Comput.}\ }\textbf {\bibinfo {volume}
  {4}},\ \bibinfo {pages} {325} (\bibinfo {year} {2004})}\BibitemShut {NoStop}%
\bibitem [{\citenamefont {Gisin}\ \emph {et~al.}(2006)\citenamefont {Gisin},
  \citenamefont {Fasel}, \citenamefont {Kraus}, \citenamefont {Zbinden},\ and\
  \citenamefont {Ribordy}}]{gisin2006trojan}%
  \BibitemOpen
  \bibfield  {author} {\bibinfo {author} {\bibfnamefont {N.}~\bibnamefont
  {Gisin}}, \bibinfo {author} {\bibfnamefont {S.}~\bibnamefont {Fasel}},
  \bibinfo {author} {\bibfnamefont {B.}~\bibnamefont {Kraus}}, \bibinfo
  {author} {\bibfnamefont {H.}~\bibnamefont {Zbinden}}, \ and\ \bibinfo
  {author} {\bibfnamefont {G.}~\bibnamefont {Ribordy}},\ }\href
  {http://dx.doi.org/10.1103/PhysRevA.73.022320} {\bibfield  {journal}
  {\bibinfo  {journal} {Phys. Rev. A}\ }\textbf {\bibinfo {volume} {73}},\
  \bibinfo {pages} {022320} (\bibinfo {year} {2006})}\BibitemShut {NoStop}%
\bibitem [{\citenamefont {Qi}\ \emph {et~al.}(2007)\citenamefont {Qi},
  \citenamefont {Fung}, \citenamefont {Lo},\ and\ \citenamefont
  {Ma}}]{Qi2007Time}%
  \BibitemOpen
  \bibfield  {author} {\bibinfo {author} {\bibfnamefont {B.}~\bibnamefont
  {Qi}}, \bibinfo {author} {\bibfnamefont {C.-H.~F.}\ \bibnamefont {Fung}},
  \bibinfo {author} {\bibfnamefont {H.~K.}\ \bibnamefont {Lo}}, \ and\ \bibinfo
  {author} {\bibfnamefont {X.}~\bibnamefont {Ma}},\ }\href
  {http://dl.acm.org/citation.cfm?id=2011706.2011709} {\bibfield  {journal}
  {\bibinfo  {journal} {Quantum Info. Comput.}\ }\textbf {\bibinfo {volume}
  {7}},\ \bibinfo {pages} {73} (\bibinfo {year} {2007})}\BibitemShut {NoStop}%
\bibitem [{\citenamefont {Lo}\ \emph {et~al.}(2012)\citenamefont {Lo},
  \citenamefont {Curty},\ and\ \citenamefont {Qi}}]{Lo2012Measurement}%
  \BibitemOpen
  \bibfield  {author} {\bibinfo {author} {\bibfnamefont {H.-K.}\ \bibnamefont
  {Lo}}, \bibinfo {author} {\bibfnamefont {M.}~\bibnamefont {Curty}}, \ and\
  \bibinfo {author} {\bibfnamefont {B.}~\bibnamefont {Qi}},\ }\href {\doibase
  10.1103/PhysRevLett.108.130503} {\bibfield  {journal} {\bibinfo  {journal}
  {Phys. Rev. Lett.}\ }\textbf {\bibinfo {volume} {108}},\ \bibinfo {pages}
  {130503} (\bibinfo {year} {2012})}\BibitemShut {NoStop}%
\bibitem [{\citenamefont {Briegel}\ \emph {et~al.}(1998)\citenamefont
  {Briegel}, \citenamefont {D\"ur}, \citenamefont {Cirac},\ and\ \citenamefont
  {Zoller}}]{Briegel1998Repeater}%
  \BibitemOpen
  \bibfield  {author} {\bibinfo {author} {\bibfnamefont {H.-J.}\ \bibnamefont
  {Briegel}}, \bibinfo {author} {\bibfnamefont {W.}~\bibnamefont {D\"ur}},
  \bibinfo {author} {\bibfnamefont {J.~I.}\ \bibnamefont {Cirac}}, \ and\
  \bibinfo {author} {\bibfnamefont {P.}~\bibnamefont {Zoller}},\ }\href
  {\doibase 10.1103/PhysRevLett.81.5932} {\bibfield  {journal} {\bibinfo
  {journal} {Phys. Rev. Lett.}\ }\textbf {\bibinfo {volume} {81}},\ \bibinfo
  {pages} {5932} (\bibinfo {year} {1998})}\BibitemShut {NoStop}%
\bibitem [{\citenamefont {Duan}\ \emph {et~al.}(2001)\citenamefont {Duan},
  \citenamefont {Lukin}, \citenamefont {Cirac},\ and\ \citenamefont
  {Zoller}}]{duan2001long}%
  \BibitemOpen
  \bibfield  {author} {\bibinfo {author} {\bibfnamefont {L.-M.}\ \bibnamefont
  {Duan}}, \bibinfo {author} {\bibfnamefont {M.}~\bibnamefont {Lukin}},
  \bibinfo {author} {\bibfnamefont {J.~I.}\ \bibnamefont {Cirac}}, \ and\
  \bibinfo {author} {\bibfnamefont {P.}~\bibnamefont {Zoller}},\ }\href
  {https://www.nature.com/articles/35106500} {\bibfield  {journal} {\bibinfo
  {journal} {Nature}\ }\textbf {\bibinfo {volume} {414}},\ \bibinfo {pages}
  {413} (\bibinfo {year} {2001})}\BibitemShut {NoStop}%
\bibitem [{\citenamefont {Azuma}\ \emph {et~al.}(2015)\citenamefont {Azuma},
  \citenamefont {Tamaki},\ and\ \citenamefont {Lo}}]{azuma2015all}%
  \BibitemOpen
  \bibfield  {author} {\bibinfo {author} {\bibfnamefont {K.}~\bibnamefont
  {Azuma}}, \bibinfo {author} {\bibfnamefont {K.}~\bibnamefont {Tamaki}}, \
  and\ \bibinfo {author} {\bibfnamefont {H.-K.}\ \bibnamefont {Lo}},\ }\href
  {https://www.nature.com/articles/ncomms7787} {\bibfield  {journal} {\bibinfo
  {journal} {Nat. Commun.}\ }\textbf {\bibinfo {volume} {6}},\ \bibinfo {pages}
  {6787} (\bibinfo {year} {2015})}\BibitemShut {NoStop}%
\bibitem [{\citenamefont {Pirandola}\ \emph {et~al.}(2017)\citenamefont
  {Pirandola}, \citenamefont {Laurenza}, \citenamefont {Ottaviani},\ and\
  \citenamefont {Banchi}}]{Pirandola2017Fundamental}%
  \BibitemOpen
  \bibfield  {author} {\bibinfo {author} {\bibfnamefont {S.}~\bibnamefont
  {Pirandola}}, \bibinfo {author} {\bibfnamefont {R.}~\bibnamefont {Laurenza}},
  \bibinfo {author} {\bibfnamefont {C.}~\bibnamefont {Ottaviani}}, \ and\
  \bibinfo {author} {\bibfnamefont {L.}~\bibnamefont {Banchi}},\ }\href
  {https://www.nature.com/articles/ncomms15043} {\bibfield  {journal} {\bibinfo
   {journal} {Nat. Commun.}\ }\textbf {\bibinfo {volume} {8}},\ \bibinfo
  {pages} {15043} (\bibinfo {year} {2017})}\BibitemShut {NoStop}%
\bibitem [{\citenamefont {Lucamarini}\ \emph {et~al.}(2018)\citenamefont
  {Lucamarini}, \citenamefont {Yuan}, \citenamefont {Dynes},\ and\
  \citenamefont {Shields}}]{Lucamarini2018TF}%
  \BibitemOpen
  \bibfield  {author} {\bibinfo {author} {\bibfnamefont {M.}~\bibnamefont
  {Lucamarini}}, \bibinfo {author} {\bibfnamefont {Z.}~\bibnamefont {Yuan}},
  \bibinfo {author} {\bibfnamefont {J.}~\bibnamefont {Dynes}}, \ and\ \bibinfo
  {author} {\bibfnamefont {A.}~\bibnamefont {Shields}},\ }\href
  {https://www.nature.com/articles/s41586-018-0066-6} {\bibfield  {journal}
  {\bibinfo  {journal} {Nature}\ }\textbf {\bibinfo {volume} {557}},\ \bibinfo
  {pages} {400} (\bibinfo {year} {2018})}\BibitemShut {NoStop}%
\bibitem [{\citenamefont {Ma}\ \emph {et~al.}(2018)\citenamefont {Ma},
  \citenamefont {Zeng},\ and\ \citenamefont {Zhou}}]{ma2018phase}%
  \BibitemOpen
  \bibfield  {author} {\bibinfo {author} {\bibfnamefont {X.}~\bibnamefont
  {Ma}}, \bibinfo {author} {\bibfnamefont {P.}~\bibnamefont {Zeng}}, \ and\
  \bibinfo {author} {\bibfnamefont {H.}~\bibnamefont {Zhou}},\ }\href {\doibase
  10.1103/PhysRevX.8.031043} {\bibfield  {journal} {\bibinfo  {journal} {Phys.
  Rev. X}\ }\textbf {\bibinfo {volume} {8}},\ \bibinfo {pages} {031043}
  (\bibinfo {year} {2018})}\BibitemShut {NoStop}%
\bibitem [{\citenamefont {Lin}\ and\ \citenamefont
  {L\"utkenhaus}(2018)}]{Lin2018simple}%
  \BibitemOpen
  \bibfield  {author} {\bibinfo {author} {\bibfnamefont {J.}~\bibnamefont
  {Lin}}\ and\ \bibinfo {author} {\bibfnamefont {N.}~\bibnamefont
  {L\"utkenhaus}},\ }\href {\doibase 10.1103/PhysRevA.98.042332} {\bibfield
  {journal} {\bibinfo  {journal} {Phys. Rev. A}\ }\textbf {\bibinfo {volume}
  {98}},\ \bibinfo {pages} {042332} (\bibinfo {year} {2018})}\BibitemShut
  {NoStop}%
\bibitem [{\citenamefont {Maeda}\ \emph {et~al.}(2019)\citenamefont {Maeda},
  \citenamefont {Sasaki},\ and\ \citenamefont
  {Koashi}}]{maeda2019repeaterless}%
  \BibitemOpen
  \bibfield  {author} {\bibinfo {author} {\bibfnamefont {K.}~\bibnamefont
  {Maeda}}, \bibinfo {author} {\bibfnamefont {T.}~\bibnamefont {Sasaki}}, \
  and\ \bibinfo {author} {\bibfnamefont {M.}~\bibnamefont {Koashi}},\ }\href
  {https://www.nature.com/articles/s41467-019-11008-z} {\bibfield  {journal}
  {\bibinfo  {journal} {Nature communications}\ }\textbf {\bibinfo {volume}
  {10}},\ \bibinfo {pages} {3140} (\bibinfo {year} {2019})}\BibitemShut
  {NoStop}%
\bibitem [{\citenamefont {Tamaki}\ \emph {et~al.}(2018)\citenamefont {Tamaki},
  \citenamefont {Lo}, \citenamefont {Wang},\ and\ \citenamefont
  {Lucamarini}}]{Tamaki2018information}%
  \BibitemOpen
  \bibfield  {author} {\bibinfo {author} {\bibfnamefont {K.}~\bibnamefont
  {Tamaki}}, \bibinfo {author} {\bibfnamefont {H.-K.}\ \bibnamefont {Lo}},
  \bibinfo {author} {\bibfnamefont {W.}~\bibnamefont {Wang}}, \ and\ \bibinfo
  {author} {\bibfnamefont {M.}~\bibnamefont {Lucamarini}},\ }\href
  {https://arxiv.org/abs/1805.05511} {\bibfield  {journal} {\bibinfo  {journal}
  {arXiv preprint arXiv:1805.05511}\ } (\bibinfo {year} {2018})}\BibitemShut
  {NoStop}%
\bibitem [{\citenamefont {Wang}\ \emph {et~al.}(2018)\citenamefont {Wang},
  \citenamefont {Yu},\ and\ \citenamefont {Hu}}]{Wang2018Sending}%
  \BibitemOpen
  \bibfield  {author} {\bibinfo {author} {\bibfnamefont {X.-B.}\ \bibnamefont
  {Wang}}, \bibinfo {author} {\bibfnamefont {Z.-W.}\ \bibnamefont {Yu}}, \ and\
  \bibinfo {author} {\bibfnamefont {X.-L.}\ \bibnamefont {Hu}},\ }\href@noop {}
  {\  (\bibinfo {year} {2018})},\ \bibinfo {note}
  {\href{https://arxiv.org/abs/1805.09222}{arXiv: 1805.09222}}\BibitemShut
  {NoStop}%
\bibitem [{\citenamefont {Cui}\ \emph {et~al.}(2019)\citenamefont {Cui},
  \citenamefont {Yin}, \citenamefont {Wang}, \citenamefont {Chen},
  \citenamefont {Wang}, \citenamefont {Guo},\ and\ \citenamefont
  {Han}}]{cui2019twin}%
  \BibitemOpen
  \bibfield  {author} {\bibinfo {author} {\bibfnamefont {C.}~\bibnamefont
  {Cui}}, \bibinfo {author} {\bibfnamefont {Z.-Q.}\ \bibnamefont {Yin}},
  \bibinfo {author} {\bibfnamefont {R.}~\bibnamefont {Wang}}, \bibinfo {author}
  {\bibfnamefont {W.}~\bibnamefont {Chen}}, \bibinfo {author} {\bibfnamefont
  {S.}~\bibnamefont {Wang}}, \bibinfo {author} {\bibfnamefont {G.-C.}\
  \bibnamefont {Guo}}, \ and\ \bibinfo {author} {\bibfnamefont {Z.-F.}\
  \bibnamefont {Han}},\ }\href {\doibase 10.1103/PhysRevApplied.11.034053}
  {\bibfield  {journal} {\bibinfo  {journal} {Phys. Rev. Applied}\ }\textbf
  {\bibinfo {volume} {11}},\ \bibinfo {pages} {034053} (\bibinfo {year}
  {2019})}\BibitemShut {NoStop}%
\bibitem [{\citenamefont {Curty}\ \emph {et~al.}(2018)\citenamefont {Curty},
  \citenamefont {Azuma},\ and\ \citenamefont {Lo}}]{Curty2018simple}%
  \BibitemOpen
  \bibfield  {author} {\bibinfo {author} {\bibfnamefont {M.}~\bibnamefont
  {Curty}}, \bibinfo {author} {\bibfnamefont {K.}~\bibnamefont {Azuma}}, \ and\
  \bibinfo {author} {\bibfnamefont {H.-K.}\ \bibnamefont {Lo}},\ }\href
  {https://arxiv.org/abs/1807.07667} {\bibfield  {journal} {\bibinfo  {journal}
  {arXiv preprint arXiv:1807.07667}\ } (\bibinfo {year} {2018})}\BibitemShut
  {NoStop}%
\bibitem [{\citenamefont {Bennett}(1992)}]{Bennett1992Quantum}%
  \BibitemOpen
  \bibfield  {author} {\bibinfo {author} {\bibfnamefont {C.~H.}\ \bibnamefont
  {Bennett}},\ }\href {\doibase 10.1103/PhysRevLett.68.3121} {\bibfield
  {journal} {\bibinfo  {journal} {Phys. Rev. Lett.}\ }\textbf {\bibinfo
  {volume} {68}},\ \bibinfo {pages} {3121} (\bibinfo {year}
  {1992})}\BibitemShut {NoStop}%
\bibitem [{\citenamefont {{Ferenczi, Agnes}}(2013)}]{Ferenczi2013}%
  \BibitemOpen
  \bibfield  {author} {\bibinfo {author} {\bibnamefont {{Ferenczi, Agnes}}},\
  }\emph {\bibinfo {title} {Security proof methods for quantum key distribution
  protocols}},\ \href {http://hdl.handle.net/10012/7468} {Ph.D. thesis}
  (\bibinfo {year} {2013})\BibitemShut {NoStop}%
\bibitem [{\citenamefont {Koashi}(2009)}]{koashi2009simple}%
  \BibitemOpen
  \bibfield  {author} {\bibinfo {author} {\bibfnamefont {M.}~\bibnamefont
  {Koashi}},\ }\href
  {https://iopscience.iop.org/article/10.1088/1367-2630/11/4/045018/meta}
  {\bibfield  {journal} {\bibinfo  {journal} {New Journal of Physics}\ }\textbf
  {\bibinfo {volume} {11}},\ \bibinfo {pages} {045018} (\bibinfo {year}
  {2009})}\BibitemShut {NoStop}%
\bibitem [{\citenamefont {Sasaki}\ \emph {et~al.}(2014)\citenamefont {Sasaki},
  \citenamefont {Yamamoto},\ and\ \citenamefont
  {Koashi}}]{sasaki2014practical}%
  \BibitemOpen
  \bibfield  {author} {\bibinfo {author} {\bibfnamefont {T.}~\bibnamefont
  {Sasaki}}, \bibinfo {author} {\bibfnamefont {Y.}~\bibnamefont {Yamamoto}}, \
  and\ \bibinfo {author} {\bibfnamefont {M.}~\bibnamefont {Koashi}},\
  }\href@noop {} {\bibfield  {journal} {\bibinfo  {journal} {Nature}\ }\textbf
  {\bibinfo {volume} {509}},\ \bibinfo {pages} {475} (\bibinfo {year}
  {2014})}\BibitemShut {NoStop}%
\bibitem [{\citenamefont {Hwang}(2003)}]{Hwang2003Decoy}%
  \BibitemOpen
  \bibfield  {author} {\bibinfo {author} {\bibfnamefont {W.-Y.}\ \bibnamefont
  {Hwang}},\ }\href {\doibase 10.1103/PhysRevLett.91.057901} {\bibfield
  {journal} {\bibinfo  {journal} {Phys. Rev. Lett.}\ }\textbf {\bibinfo
  {volume} {91}},\ \bibinfo {pages} {057901} (\bibinfo {year}
  {2003})}\BibitemShut {NoStop}%
\bibitem [{\citenamefont {Lo}\ \emph {et~al.}(2005)\citenamefont {Lo},
  \citenamefont {Ma},\ and\ \citenamefont {Chen}}]{Lo2005Decoy}%
  \BibitemOpen
  \bibfield  {author} {\bibinfo {author} {\bibfnamefont {H.-K.}\ \bibnamefont
  {Lo}}, \bibinfo {author} {\bibfnamefont {X.}~\bibnamefont {Ma}}, \ and\
  \bibinfo {author} {\bibfnamefont {K.}~\bibnamefont {Chen}},\ }\href {\doibase
  10.1103/PhysRevLett.94.230504} {\bibfield  {journal} {\bibinfo  {journal}
  {Phys. Rev. Lett.}\ }\textbf {\bibinfo {volume} {94}},\ \bibinfo {pages}
  {230504} (\bibinfo {year} {2005})}\BibitemShut {NoStop}%
\bibitem [{\citenamefont {Wang}(2005)}]{Wang2005Decoy}%
  \BibitemOpen
  \bibfield  {author} {\bibinfo {author} {\bibfnamefont {X.-B.}\ \bibnamefont
  {Wang}},\ }\href {\doibase 10.1103/PhysRevLett.94.230503} {\bibfield
  {journal} {\bibinfo  {journal} {Phys. Rev. Lett.}\ }\textbf {\bibinfo
  {volume} {94}},\ \bibinfo {pages} {230503} (\bibinfo {year}
  {2005})}\BibitemShut {NoStop}%
\bibitem [{\citenamefont {Ma}\ and\ \citenamefont
  {Razavi}(2012)}]{Ma2012Alternative}%
  \BibitemOpen
  \bibfield  {author} {\bibinfo {author} {\bibfnamefont {X.}~\bibnamefont
  {Ma}}\ and\ \bibinfo {author} {\bibfnamefont {M.}~\bibnamefont {Razavi}},\
  }\href {\doibase 10.1103/PhysRevA.86.062319} {\bibfield  {journal} {\bibinfo
  {journal} {Phys. Rev. A}\ }\textbf {\bibinfo {volume} {86}},\ \bibinfo
  {pages} {062319} (\bibinfo {year} {2012})}\BibitemShut {NoStop}%
\bibitem [{\citenamefont {Cao}\ \emph {et~al.}(2015)\citenamefont {Cao},
  \citenamefont {Zhang}, \citenamefont {Lo},\ and\ \citenamefont
  {Ma}}]{Cao2015Discrete}%
  \BibitemOpen
  \bibfield  {author} {\bibinfo {author} {\bibfnamefont {Z.}~\bibnamefont
  {Cao}}, \bibinfo {author} {\bibfnamefont {Z.}~\bibnamefont {Zhang}}, \bibinfo
  {author} {\bibfnamefont {H.-K.}\ \bibnamefont {Lo}}, \ and\ \bibinfo {author}
  {\bibfnamefont {X.}~\bibnamefont {Ma}},\ }\href
  {http://stacks.iop.org/1367-2630/17/i=5/a=053014} {\bibfield  {journal}
  {\bibinfo  {journal} {New J. Phys.}\ }\textbf {\bibinfo {volume} {17}},\
  \bibinfo {pages} {053014} (\bibinfo {year} {2015})}\BibitemShut {NoStop}%
\bibitem [{\citenamefont {Zhang}\ \emph {et~al.}(2017)\citenamefont {Zhang},
  \citenamefont {Zhao}, \citenamefont {Razavi},\ and\ \citenamefont
  {Ma}}]{zhang2017improved}%
  \BibitemOpen
  \bibfield  {author} {\bibinfo {author} {\bibfnamefont {Z.}~\bibnamefont
  {Zhang}}, \bibinfo {author} {\bibfnamefont {Q.}~\bibnamefont {Zhao}},
  \bibinfo {author} {\bibfnamefont {M.}~\bibnamefont {Razavi}}, \ and\ \bibinfo
  {author} {\bibfnamefont {X.}~\bibnamefont {Ma}},\ }\href {\doibase
  10.1103/PhysRevA.95.012333} {\bibfield  {journal} {\bibinfo  {journal} {Phys.
  Rev. A}\ }\textbf {\bibinfo {volume} {95}},\ \bibinfo {pages} {012333}
  (\bibinfo {year} {2017})}\BibitemShut {NoStop}%
\bibitem [{\citenamefont {Cover}\ and\ \citenamefont
  {Thomas}(2012)}]{cover2012elements}%
  \BibitemOpen
  \bibfield  {author} {\bibinfo {author} {\bibfnamefont {T.~M.}\ \bibnamefont
  {Cover}}\ and\ \bibinfo {author} {\bibfnamefont {J.~A.}\ \bibnamefont
  {Thomas}},\ }\href@noop {} {\emph {\bibinfo {title} {Elements of information
  theory}}}\ (\bibinfo  {publisher} {John Wiley \&amp; Sons},\ \bibinfo {year}
  {2012})\BibitemShut {NoStop}%
\bibitem [{\citenamefont {Ma}\ \emph {et~al.}(2005)\citenamefont {Ma},
  \citenamefont {Qi}, \citenamefont {Zhao},\ and\ \citenamefont
  {Lo}}]{Ma2005Practical}%
  \BibitemOpen
  \bibfield  {author} {\bibinfo {author} {\bibfnamefont {X.}~\bibnamefont
  {Ma}}, \bibinfo {author} {\bibfnamefont {B.}~\bibnamefont {Qi}}, \bibinfo
  {author} {\bibfnamefont {Y.}~\bibnamefont {Zhao}}, \ and\ \bibinfo {author}
  {\bibfnamefont {H.-K.}\ \bibnamefont {Lo}},\ }\href {\doibase
  10.1103/PhysRevA.72.012326} {\bibfield  {journal} {\bibinfo  {journal} {Phys.
  Rev. A}\ }\textbf {\bibinfo {volume} {72}},\ \bibinfo {pages} {012326}
  (\bibinfo {year} {2005})}\BibitemShut {NoStop}%
\bibitem [{\citenamefont {Ma}\ \emph {et~al.}(2012)\citenamefont {Ma},
  \citenamefont {Fung},\ and\ \citenamefont {Razavi}}]{ma2012statistical}%
  \BibitemOpen
  \bibfield  {author} {\bibinfo {author} {\bibfnamefont {X.}~\bibnamefont
  {Ma}}, \bibinfo {author} {\bibfnamefont {C.-H.~F.}\ \bibnamefont {Fung}}, \
  and\ \bibinfo {author} {\bibfnamefont {M.}~\bibnamefont {Razavi}},\
  }\href@noop {} {\bibfield  {journal} {\bibinfo  {journal} {Physical Review
  A}\ }\textbf {\bibinfo {volume} {86}},\ \bibinfo {pages} {052305} (\bibinfo
  {year} {2012})}\BibitemShut {NoStop}%
\bibitem [{\citenamefont {Curty}\ \emph {et~al.}(2014)\citenamefont {Curty},
  \citenamefont {Xu}, \citenamefont {Cui}, \citenamefont {Lim}, \citenamefont
  {Tamaki},\ and\ \citenamefont {Lo}}]{curty2014finite}%
  \BibitemOpen
  \bibfield  {author} {\bibinfo {author} {\bibfnamefont {M.}~\bibnamefont
  {Curty}}, \bibinfo {author} {\bibfnamefont {F.}~\bibnamefont {Xu}}, \bibinfo
  {author} {\bibfnamefont {W.}~\bibnamefont {Cui}}, \bibinfo {author}
  {\bibfnamefont {C.~C.~W.}\ \bibnamefont {Lim}}, \bibinfo {author}
  {\bibfnamefont {K.}~\bibnamefont {Tamaki}}, \ and\ \bibinfo {author}
  {\bibfnamefont {H.-K.}\ \bibnamefont {Lo}},\ }\href@noop {} {\bibfield
  {journal} {\bibinfo  {journal} {Nature communications}\ }\textbf {\bibinfo
  {volume} {5}},\ \bibinfo {pages} {3732} (\bibinfo {year} {2014})}\BibitemShut
  {NoStop}%
\end{thebibliography}%

\end{document}